\documentclass[conference]{IEEEtran}

\usepackage{balance} 
\usepackage{tikz}
\usepackage{amsmath}
\usepackage{algorithm}
\usepackage{algorithmic}
\usepackage{microtype}
\usepackage{graphicx}
\usepackage{subfigure}
\usepackage{booktabs} 
\usepackage{times}
\usepackage{soul}
\usepackage{url}
\usepackage{booktabs}
\usepackage{algorithm}
\usepackage{algorithmic}
\usepackage{multirow}
\usepackage{xcolor}         
\usepackage[dvipsnames]{xcolor}
\usepackage{amssymb}
\usepackage{url}
\usepackage{mathdots}
\usepackage{amsmath}
\usepackage{amsfonts}
\usepackage{color}
\usepackage{xcolor}         
\usepackage[table]{xcolor}  
\definecolor{LightGray}{gray}{0.9} 
\usepackage{amsmath}
\usepackage{mathtools}
\usepackage{amsthm}
\usepackage{booktabs, tabularx}
\usepackage{listings}
\newcolumntype{Y}{>{\raggedright\arraybackslash}X}

\newcommand{\CQ}[1]{{\color{red}{(C.Q.: #1)}}}
\newcommand{\qiu}{\color{black}}

\newcommand{\todo}[1]{\textcolor{red}{(\textbf{TODO:~#1})}}

\newcommand{\DEL}[1]{\iffalse #1 \fi}
\newcommand{\rd}{\color{BrickRed}}
\newcommand{\rev}{\color{black}}
\newcommand{\bl}{\color{RoyalBlue}}
\newcommand{\gr}{\color{ForestGreen}}

\newcommand{\cam}{\color{black}}

\newtheorem{definition}{Definition}

\newtheorem{proposition}{\bf{Proposition}}
\newtheorem{reproposition}{\bf{Proposition}}
\newtheorem*{reproposition2.}{Proposition 2.}

\ifCLASSINFOpdf
\else
\fi
\usepackage{caption}
\begin{document}
%
\title{\textsc{DistA-LP}: Privacy-Preserving Distributed \\ Linear Programming for Metric Differential Privacy}


\author{
    \IEEEauthorblockN{
        Michael Oluwole, Ruiyao Liu, Aleksander Stevens,
        and Chenxi Qiu\textsuperscript{*}
    }
    \IEEEauthorblockA{
        University of North Texas\\
        \{MichaelOluwole, RuiyaoLiu, AleksanderStevens\}@my.unt.edu,
        Chenxi.Qiu@unt.edu\\
        \textsuperscript{*}Corresponding author
    }
}
	

%


\IEEEoverridecommandlockouts
\makeatletter\def\@IEEEpubidpullup{6.5\baselineskip}\makeatother
\IEEEpubid{\parbox{\columnwidth}{
		Network and Distributed System Security (NDSS) Symposium 2027\\
		22--26 March 2027, Seoul, Republic of Korea\\
		ISBN 978-1-970672-09-1\\  
		https://dx.doi.org/10.14722/ndss.2027.230054\\
		www.ndss-symposium.org
}
\hspace{\columnsep}\makebox[\columnwidth]{}}

\maketitle

\begin{abstract}
\emph{Linear-programming (LP)} mechanisms for \emph{metric differential privacy (mDP)} enable fine-grained, task-aware utility control, but they are computationally expensive and become privacy-fragile in distributed settings: uploading user-specific LP coefficient matrices (e.g., distance and loss blocks) to a central solver can itself leak sensitive information. We present \textsc{DistA-LP}, a privacy-preserving distributed mDP optimization framework that replaces raw coefficient uploads with compact surrogate parameterizations and releases only calibrated, perturbed parameter vectors, thereby reducing both upload-stage leakage and problem dimensionality. On the server side, \textsc{DistA-LP} reconstructs approximate coefficients and solves a scalable master-subproblem formulation via Benders decomposition. To certify end-to-end privacy, we introduce a \emph{probabilistic mDP (PmDP)} auditor that empirically quantifies the privacy loss induced by parameter release, allocates the required privacy budget prior to optimization, and composes it with the downstream mechanism to obtain an end-to-end mDP guarantee. Experiments on large real-world road-network datasets (Rome, New York City, and London) show that \textsc{DistA-LP} achieves near-zero mDP violation rates, substantially reduces computation time compared to monolithic and hybrid baselines, and matches or improves utility, thereby demonstrating stronger upload privacy and significantly improved scalability.
\end{abstract}


%
\IEEEpeerreviewmaketitle

\section{Introduction}
\emph{Local Differential Privacy (LDP)}~\cite{Duchi-FOCS2013} provides a uniform indistinguishability guarantee across the entire input domain. While robust, this one-size-fits-all guarantee can be overly conservative for applications that only require local semantic or spatial obfuscation. For example, in text analytics, enforcing strong indistinguishability between semantically distant words (e.g., ``cancer'' vs. ``pizza'') can require substantial randomization, often rendering the released text uninformative; similarly, location privacy may only require hiding a user within a neighborhood rather than an entire city~\cite{ImolaUAI2022,Carvalho2021TEMHU,Liu-CCS2025}.

To mitigate this utility loss, \emph{metric differential privacy (mDP)}~\cite{Chatzikokolakis-PETS2013} scales indistinguishability via a task-specific metric $d$. By requiring that inputs $x_i$ and $x_j$ close under $d$ induce similar output distributions, mDP avoids over-perturbation for distant inputs. This distance-aware framework significantly improves utility while maintaining formal guarantees and has proven effective in geo-location~\cite{Andres-CCS2013,Yu-NDSS2017,Shokri-PoPETs2015,Ahmed-ICDCS2016}, natural language~\cite{ImolaUAI2022,Carvalho2021TEMHU}, and perceptual data such as voice and images~\cite{han2020voice,chen2021perceptual}.

Unlike LDP, designing mechanisms for mDP is complicated by non-uniform privacy constraints and heterogeneous utility sensitivities. Under mDP, the privacy guarantee varies across input pairs, and utility loss depends on both the magnitude and the \emph{direction} of perturbation~\cite{Qiu-TMC2022}. Standard mechanisms, such as the \emph{Laplace mechanism}~\cite{Andres-CCS2013} and the \emph{exponential mechanism}~\cite{Carvalho2021TEMHU}, satisfy mDP by calibrating noise solely to perturbation magnitude. However, this isotropic approach ignores directional variations in utility across the output space, often resulting in suboptimal utility-privacy trade-offs.

To accommodate task-specific utility sensitivities, many recent works turn to \emph{optimization-based} mechanism design, most commonly formulated as \emph{linear programs (LPs)}~\cite{Bordenabe-CCS2014, ImolaUAI2022, Liu-CCS2025}. After discretizing the secret (input) domain $\mathcal{X}$ and the reported (output) domain $\mathcal{Y}$, these methods explicitly encode the utility loss of each candidate perturbation and optimize the full perturbation distribution. This yields finely tuned, often near-optimal for the chosen discretization, mechanisms, but shifts the primary challenge to computational scalability: the LP optimizes the conditional probabilities $P(y\mid x)$ for all $x\in\mathcal{X}$ and $y\in\mathcal{Y}$, resulting in $\mathcal{O}(|\mathcal{X}||\mathcal{Y}|)$ decision variables that can quickly reach millions on moderately fine grids, rendering off-the-shelf solvers impractical at scale.

\looseness = -1

To improve scalability, recent work applies decomposition techniques (e.g., Dantzig-Wolfe~\cite{Qiu-TMC2022} and Benders~\cite{Qiu-IJCAI2024}) to split a large LP into smaller subproblems, enabling parallelism and reducing per-solver complexity. Complementarily, hybrid frameworks~\cite{ImolaUAI2022} integrate the \emph{Weighted Exponential Mechanism (WEM)} with LP by using WEM for inputs where the utility-loss landscape is relatively flat (i.e., candidate outputs incur similar loss), and reserving LP optimization for utility-critical regions where small probability changes can markedly affect utility, thereby reducing computation while preserving mDP guarantees.  Nonetheless, both lines of work still rely on a \emph{centralized} LP backbone, so the maximum feasible secret-domain size $|\mathcal{X}|$ remains bounded by single-machine solver capacity (at most $\sim 500$ in our setting).

More recently, \emph{distributed} LP frameworks have been proposed to further improve scalability by offloading parts of the optimization to individual users~\cite{Qiu-PETS2025,Liu-CCS2025}. Specifically, \cite{Qiu-PETS2025} lets each user construct a local LP and upload the resulting coefficient blocks, after which the server solves a global LP that aggregates all users' contributions. In parallel, \cite{Liu-CCS2025} proposes an anchor-based two-phase framework in which each user selects a small set of anchor records, allowing the server to solve a compact LP over a reduced domain. While these designs substantially reduce runtime and retain competitive utility, their privacy treatment remains limited: \cite{Qiu-PETS2025} does not account for potential privacy leakage from the uploaded coefficient matrices, and \cite{Liu-CCS2025} must spend additional privacy budget to compensate for anchor-induced distance approximation errors, which can increase utility loss.

\DEL{
\begin{figure}
\centering
 \includegraphics[width=0.98\linewidth]{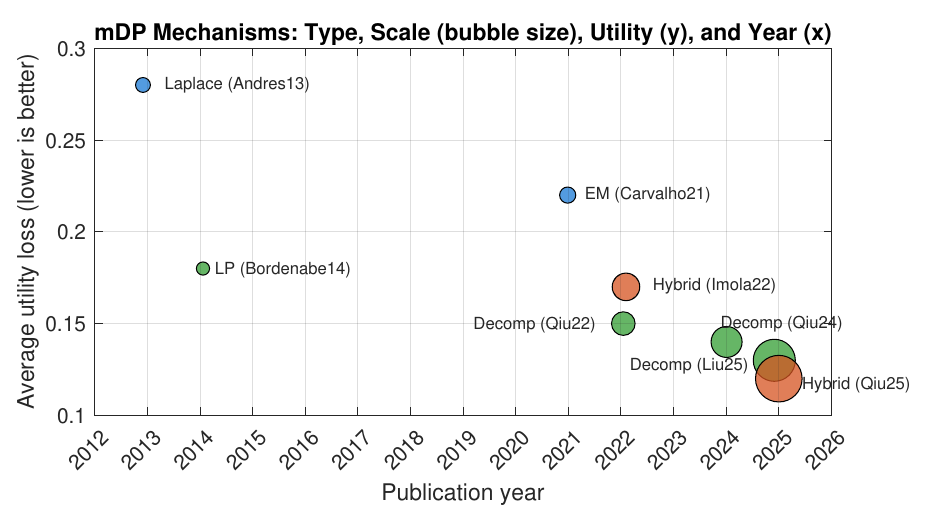}
\caption{Comparison of mDP data perturbation methods. \todo{To update}.}
\label{fig:relatedworks}
\end{figure}}

\subsection*{Our Contributions}
\noindent \textbf{Contribution 1: \underline{Dist}ributed \underline{A}pproximation \underline{LP} (\textsc{DistA-LP}) framework.} In this paper, we propose \textsc{DistA-LP}, a \emph{distributed LP framework} that optimizes the perturbation distribution and enables scalable perturbation optimization under mDP. In \textsc{DistA-LP}, each user $m$ locally constructs a set of LP coefficient blocks that capture the distance and utility-loss structure within its relevance neighborhood $\mathcal{N}_m$ ($\mathcal{N}_m \subset \mathcal{X}$ and $|\mathcal{N}_m| \ll |\mathcal{X}|
$). A key distinguishing feature is that we \emph{explicitly model and quantify the privacy loss} incurred by coefficient uploads, an aspect overlooked in prior distributed mDP formulations such as~\cite{Qiu-PETS2025}. Because these coefficient blocks are derived from users' authentic records, they can inadvertently reveal sensitive spatial or semantic correlations. To mitigate this risk, \textsc{DistA-LP} analytically characterizes the \emph{privacy loss} induced by coefficient uploads and allocates a corresponding portion of the global privacy budget using the \emph{sequential composition} property of mDP~\cite{Liu-CCS2025}, ensuring that the cumulative leakage remains rigorously bounded.


On the server side, \textsc{DistA-LP} uses the uploaded LP coefficients to assemble a global LP that integrates a \emph{WEM} structure~\cite{ImolaUAI2022, Qiu-PETS2025}, handling distant record pairs analytically while explicitly optimizing over nearby, utility-sensitive ones. To further scale, \textsc{DistA-LP} adopts \emph{Benders decomposition}: a compact master problem optimizes the global mechanism variables and WEM weights, while each user solves an independent neighborhood subproblem over $\mathcal{N}_m$ that acts as a feasibility/optimality check and returns cutting planes (from the subproblem dual) to iteratively tighten the master \cite{Qiu-IJCAI2024}.


\smallskip
\noindent \textbf{Contribution 2: Coefficient surrogate modeling, parameter perturbation, and probabilistic mDP accounting.}
Notably, LP coefficient blocks are deterministically derived from users' true records; thus, uploading them directly can leak information about the underlying records. To mitigate this risk, we replace raw coefficient uploads with a \emph{surrogate-based} approach: each user fits a low-dimensional surrogate that approximates the required coefficient blocks and uploads only \emph{perturbed surrogate parameters}. We consider both low-rank models (truncated \emph{singular value decomposition (SVD)}) and parametric/basis-function surrogates (\emph{polynomial, Gaussian, and radial basis function (RBF)}) \cite{GolubVanLoan2013,SeberLee2012,Bishop2006,Buhmann2003}, and systematically compare them in terms of approximation fidelity and privacy leakage; in our setting, \emph{low-rank SVD} offers the best fidelity--privacy trade-off. Moreover, we design the parameter perturbation to be \emph{structure-preserving} and to \emph{minimize surrogate reconstruction error}, so that the induced coefficient approximation does not degrade utility or violate LP feasibility. \looseness = -1 

Because the uploaded parameters lie in a \emph{continuous} space, tightly bounding the induced privacy-loss distribution in closed form can be challenging under structure-preserving perturbations. We therefore adopt \emph{probabilistic mDP} (Definition~\ref{def:PmDP}) accounting and certify upload-stage privacy using a sampling-based \emph{posterior leakage auditor} (based on Proposition \ref{prop:sampling}). The server then reconstructs approximate coefficient blocks from the perturbed parameters and solves the subsequent LP using the Benders decomposition framework.

\smallskip    
\noindent \textbf{Contribution 3: Empirical validation on large-scale geolocation datasets.} 
We conduct extensive experiments on large-scale real-world road-network datasets from Rome, New York City, and London~\cite{openstreetmap} to validate the effectiveness of \textsc{DistA-LP}. Our results show that \textsc{DistA-LP} substantially improves scalability over centralized LP-based mechanisms, reducing solver runtime and memory consumption while maintaining competitive, and often superior, utility. Compared with predefined-noise baselines (Laplace~\cite{Andres-CCS2013} and the exponential mechanism~\cite{Chatzikokolakis-PoPETs2015}), LP-based methods~\cite{Bordenabe-CCS2014, Liu-CCS2025}, and hybrid approaches such as COPT~\cite{ImolaUAI2022}, \textsc{DistA-LP} achieves lower expected utility loss while attaining near-zero empirical mDP violation ratios under comparable privacy budgets. We further demonstrate that the proposed surrogate-based coefficient approximation enables communication- and computation-efficient uploads, and that the probabilistic auditing procedure provides accurate empirical estimates of upload-stage privacy leakage. Collectively, these findings indicate that \textsc{DistA-LP} makes LP-based mDP optimization practical at city scale without sacrificing formal privacy guarantees.

Section~\ref{sec:prelim} reviews the fundamentals of mDP, defines privacy loss, and introduces the baseline LP formulation. Section~\ref{sec:overview} presents an overview of our approach and highlights the key research challenges. Section~\ref{sec:method} describes the distributed pipeline, including surrogate construction, parameter perturbation, and reconstruction, along with the WEM+LP formulation enhanced by Benders decomposition. Section~\ref{sec:exp} reports the experimental evaluation of our approach. Section~\ref{sec:related} discusses related work, and Section~\ref{sec:conclusion} concludes the paper.

\section{Preliminaries}
\label{sec:prelim}
In this section, we present the preliminaries of mDP, including the underlying threat model, the formal definition of mDP, the LP-based optimization approaches, and the WEM integration.

\smallskip
\noindent \textbf{Threat model.} In general, a data perturbation mechanism can be represented as a probabilistic mapping $\mathcal{M}: \mathcal{X} \rightarrow \mathcal{Y}$, where the domain $\mathcal{X}$ denotes the users’ secret data domain, and the range $\mathcal{Y}$ denotes the perturbed data domain. Based on the intrinsic characteristics of $\mathcal{X}$, a metric $d: \mathcal{X}^2 \rightarrow \mathbb{R}$ is defined to quantify the distance between any two records $x_i, x_j \in \mathcal{X}$, denoted as $d_{x_i, x_j}$.

We adopt a standard \emph{honest-but-curious} threat model~\cite{Wang-WWW2017}, where the server faithfully executes the protocol but attempts to infer sensitive information from received reports. Specifically, given the perturbation mechanism $\mathcal{M}$ and a prior distribution of $X$, the server can apply Bayes' rule to compute the posterior distribution of $X$ upon observing a perturbed record $y$~\cite{Yu-NDSS2017}. This model reflects realistic deployments where data-collecting organizations are not fully trusted. Conversely, we assume users are honest and correctly apply the perturbation protocol; collusion, either among users or between users and the server, is excluded as it falls outside the scope of LDP/mDP frameworks~\cite{To-TMC2017}.

To formally bound the server's ability to distinguish between secrets via these posterior distributions, we define the privacy loss as follows:
\begin{definition}[Metric-normalized privacy loss]
\label{def:PL}
Let $(\mathcal{X}, d)$ be a metric space of secrets, and let $\mathcal{M}$ be a perturbation mechanism with input domain $\mathcal{X}$ and output space $\mathcal{Y}$. We define the \emph{metric-normalized privacy loss} between two inputs $x_i, x_j \in \mathcal{X}$ at output $y \in \mathcal{Y}$ as
\begin{equation}
\label{eq:privacyloss}
\mathrm{PL}_{x_i, x_j}(y) = \frac{1}{d_{x_i, x_j}} \ln \left|\frac{\Pr[\mathcal{M}(x_i)=y]}{\Pr[\mathcal{M}(x_j)=y]}\right|.
\end{equation}
Intuitively, $\mathrm{PL}_{x_i,x_j}(y)$ quantifies how differently the mechanism can produce output $y$ when the input is $x_i$ rather than $x_j$, normalized by the metric distance between $x_i$ and $x_j$. Smaller values indicate stronger indistinguishability between $x_i$ and $x_j$, and hence less information leakage to an adversary (e.g., the server) through posterior inference.
\end{definition}

In the remainder of the paper, for brevity, we use ``privacy loss'' to refer to ``metric-normalized privacy loss.''

\begin{definition}[mDP \cite{Andres-CCS2013}]
\label{def:mDP}
We say that a perturbation mechanism $\mathcal{M}: \mathcal{X} \rightarrow \mathcal{Y}$ satisfies $(\epsilon, d)$-\emph{mDP} if and only if that the privacy loss is uniformly bounded by $\epsilon$, i.e.,
\begin{equation}
\label{eq:mDP}
\sup_{x_i \neq x_j} \sup_{y \in \mathcal{Y}}
\mathrm{PL}_{x_i, x_j}(y) \leq \epsilon.
\end{equation}
\end{definition}

Over more than a decade of research on mDP, a variety of mechanisms have been proposed to satisfy mDP, including Laplace-based approaches \cite{Andres-CCS2013} and exponential mechanisms \cite{chatzikokolakis2015constructing}. While these predefined mechanisms offer general-purpose solutions with clear theoretical guarantees, their design is largely driven by distance alone, and therefore they generally do not capture the heterogeneous sensitivity of utility loss under perturbations in different directions. As a result, they may be suboptimal in terms of privacy-utility tradeoff. This observation has motivated \emph{optimization-based approaches}, which seek to directly construct perturbation mechanisms that minimize utility loss subject to mDP constraints.

\smallskip
\noindent \textbf{Optimization (LP)-based approaches.}
Optimizing a perturbation mechanism $\mathcal{M}$ over continuous secret or output spaces is computationally challenging. Therefore, a common approach is to discretize $\mathcal{X}$ and $\mathcal{Y}$ into finite sets~\cite{ImolaUAI2022}. Let $\mathcal{X}=\{x_i\}_{i=1}^{N}$ and $\mathcal{Y}=\{y_k\}_{k=1}^{K}$. Then, $\mathcal{M}$ can be represented by a stochastic matrix $\mathbf{Z}=\{z_{i,k}\}_{(x_i,y_k)\in\mathcal{X}\times\mathcal{Y}}$, where $z_{i,k}$ denotes the probability of outputting $y_k$ given input $x_i$. Under this discretization, the mDP condition in Eq.~(\ref{eq:mDP}) can be written as the following linear constraints:
\begin{equation}
\label{eq:mDPdiscrete}
z_{i,k} - e^{\epsilon d_{x_i,x_j}} z_{j,k} \le 0,
\quad \forall x_i,x_j\in\mathcal{X},~\forall y_k\in\mathcal{Y}.
\end{equation}
Each row of $\mathbf{Z}$ must also satisfy the normalization constraints: 
\begin{equation}
\label{eq:unitmeasure}
\textstyle 
\sum_{k=1}^{K} z_{i,k} = 1, \quad \forall x_i \in \mathcal{X}.
\end{equation}
Let $c_{x_i,y_k}$ denote the utility loss incurred when reporting $y_k$ given the true record $x_i$, and let $\pi_{x_i}$ denote the prior probability that the true record equals $x_i$.
Define the cost matrix $\mathbf{A}^{\textsf{cy}}\in\mathbb{R}^{|\mathcal{X}|\times |\mathcal{Y}|}$ by
$[\mathbf{A}^{\textsf{cy}}]_{i,k}:=\pi_{x_i}c_{x_i,y_k}$.
Then the expected utility loss is
\begin{equation}
\label{eq:LP-loss}
\textstyle 
\mathcal{L}(\mathbf{Z};{\gr\mathbf{A}^{\textsf{cy}}})
=\sum_{i=1}^{N}\sum_{k=1}^{K} [{\gr\mathbf{A}^{\textsf{cy}}}]_{i,k} z_{i,k}.
\end{equation}
The optimal perturbation matrix is obtained by the LP
\begin{equation}
\label{eq:LPobj_origin}
\min_{\mathbf{Z}}~ \mathcal{L}(\mathbf{Z};\mathbf{A}^{\textsf{cy}})
\quad \mathrm{s.t.}~ \text{Eqs.~(\ref{eq:mDPdiscrete}) and (\ref{eq:unitmeasure}) hold.}
\end{equation}
This formulation involves $\mathcal{O}(|\mathcal{X}||\mathcal{Y}|)$ decision variables and $\mathcal{O}(|\mathcal{X}|^2|\mathcal{Y}|)$ constraints, and the pairwise mDP constraints quickly become the bottleneck for efficiency.

\begin{figure}[t]
\centering
\begin{minipage}{0.50\textwidth}
\centering
  \subfigure{
\includegraphics[width=0.70\textwidth]{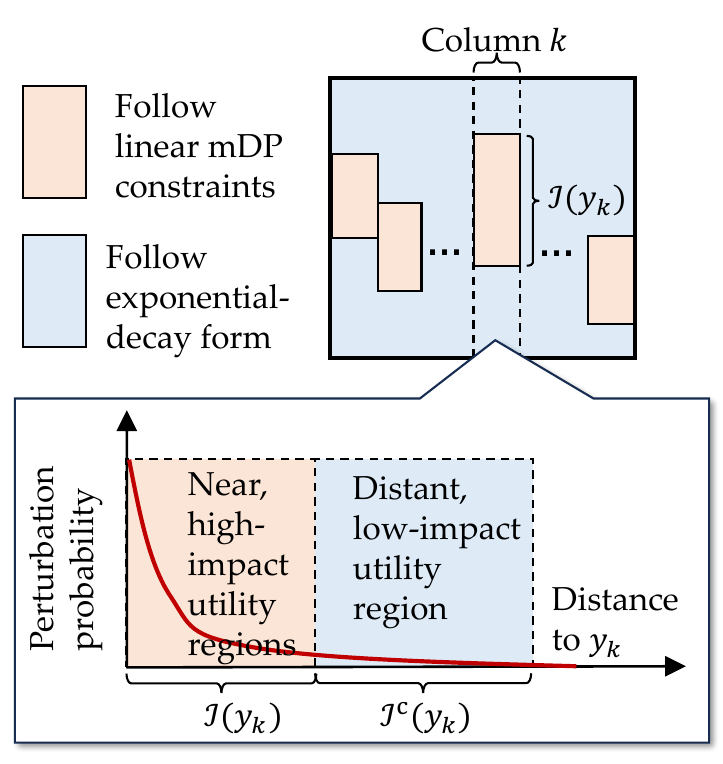}}
\vspace{-0.10in}
\caption{Idea of WEM+LP.}
\label{fig:WEM}
\end{minipage}
\vspace{-0.10in}
\end{figure}

\smallskip
\noindent \textbf{Weighted exponential mechanism (WEM) integration.}
To improve scalability, prior works such as~\cite{ImolaUAI2022,Qiu-PETS2025} impose a WEM-inspired structure on many perturbation probabilities. As illustrated in Fig.~\ref{fig:WEM}, for each fixed output $y_k$, the input domain $\mathcal{X}$ is partitioned into a \emph{nearby, high-impact utility region} $\mathcal{I}(y_k)$ and a \emph{distant, low-impact utility region} $\mathcal{I}^{\mathrm{c}}(y_k)$. Instead of optimizing every probability $z_{i,k}$ through LP, the probabilities corresponding to $x_i \in \mathcal{I}^{\mathrm{c}}(y_k)$ are constrained to share an \emph{exponential-decay form}. The rationale is that distant input-output pairs typically reside in relatively flat regions of the utility landscape and thus have limited utility impact, making a structured exponential tail sufficient.

\begin{figure*}[t]
\centering
\begin{minipage}{0.33\textwidth}
\centering
  \subfigure{
\includegraphics[width=1.00\textwidth, height=0.187\textheight]{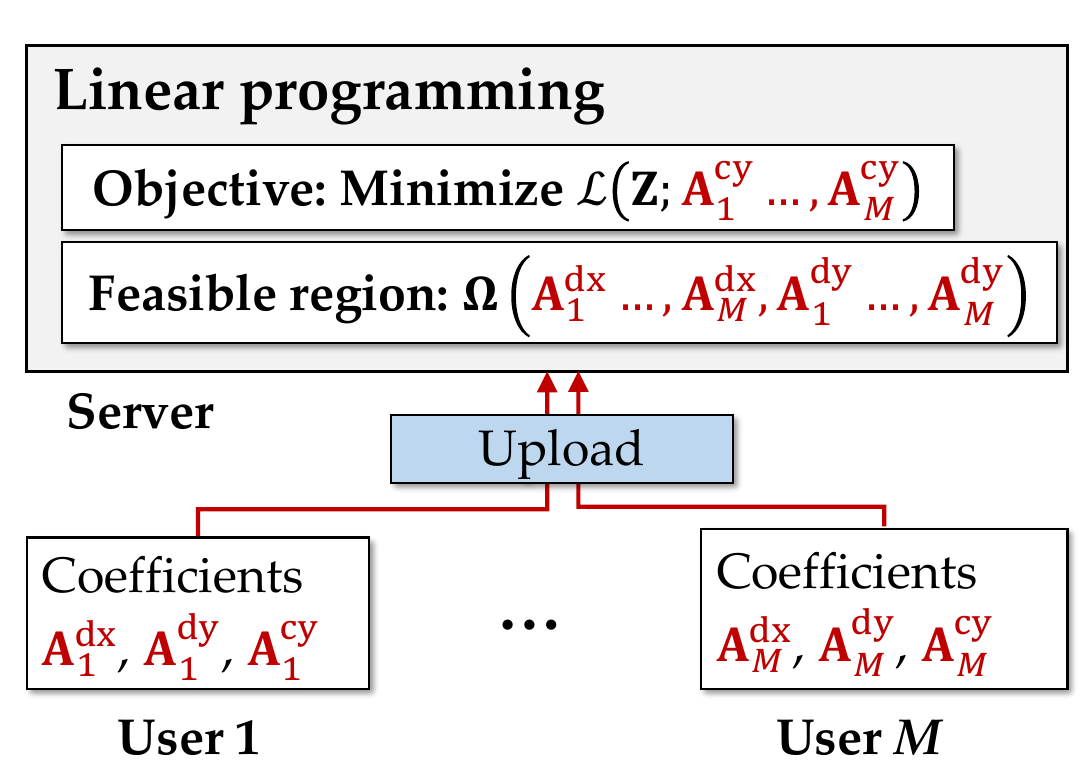}}
\caption{Distributed LP Framework: Uploading Coefficient Blocks $\mathbf{A}_m^{\textsf{dx}}, \mathbf{A}_m^{\textsf{dy}}, \mathbf{A}_m^{\textsf{cy}}$ by each user $m$ ($m = 1, ..., M$).
}
\label{fig:motivatingframework}
\end{minipage}
\hspace{0.1in}
\begin{minipage}{0.64\textwidth}
\centering
  \subfigure[ $\mathbf{A}^{\textsf{dx}}$ 
  ]{
\includegraphics[width=0.30\textwidth, height=0.17\textheight]{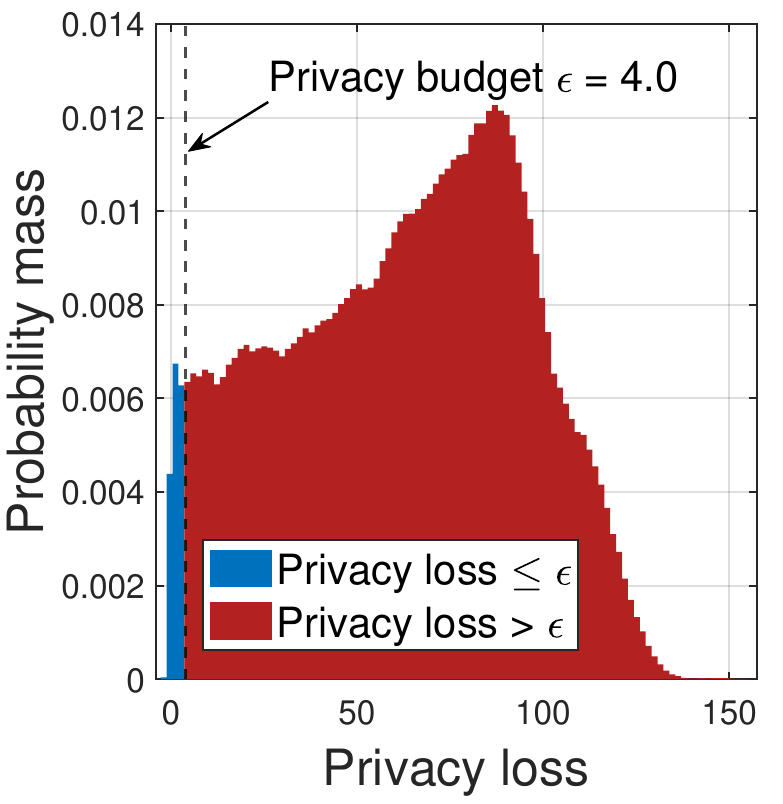}}
\hspace{0.02in}
  \subfigure[$\mathbf{A}^{\textsf{dy}}$ 
  ]{
\includegraphics[width=0.30\textwidth, height=0.17\textheight]{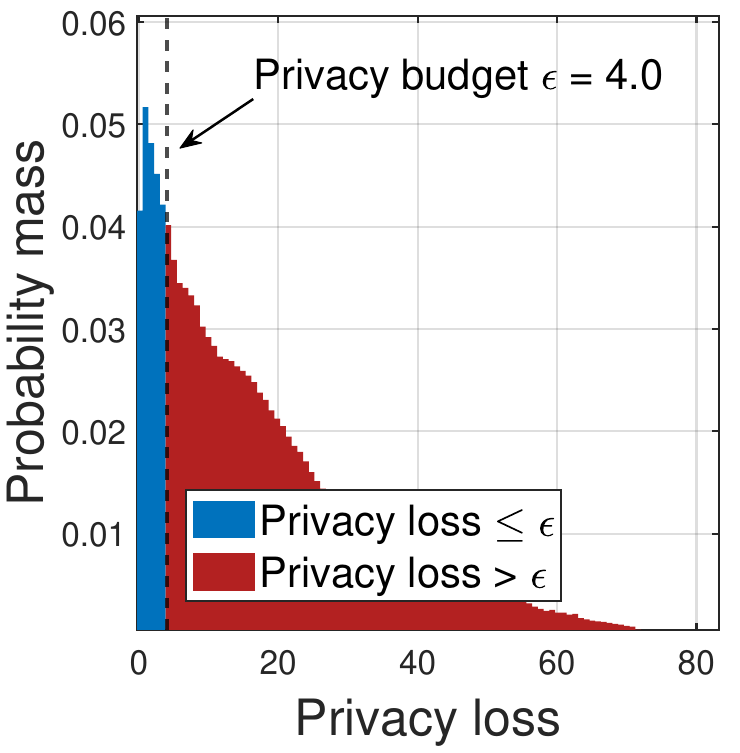}}
\hspace{0.02in}
 \subfigure[$\mathbf{A}^{\textsf{cy}}$ 
 ]{
\includegraphics[width=0.30\textwidth, height=0.17\textheight]{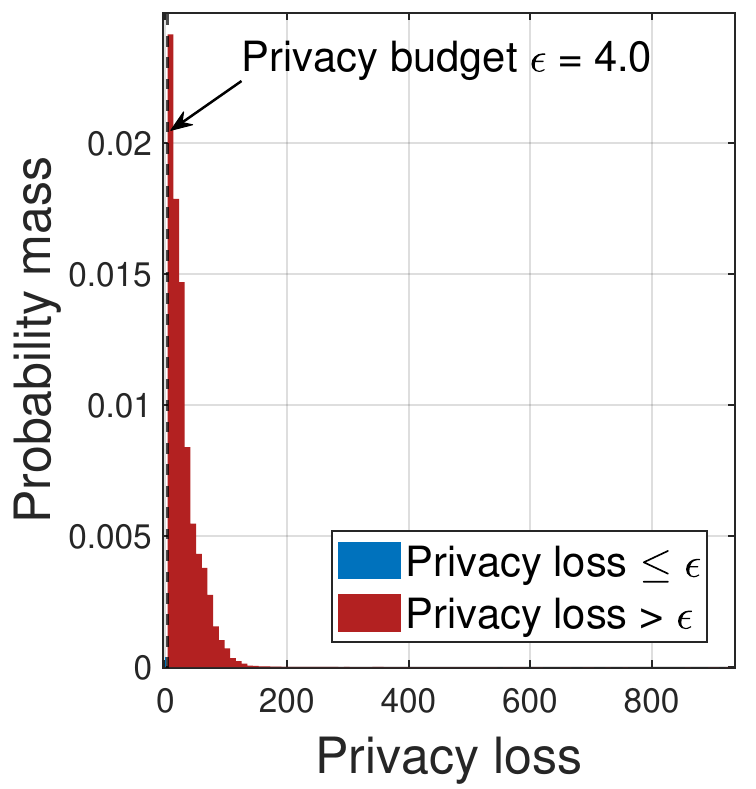}}
\caption{Empirical distributions of the privacy loss induced by uploading $\mathbf{A}_m^{\textsf{dx}}, \mathbf{A}_m^{\textsf{dy}},$ and $\mathbf{A}_m^{\textsf{cy}}$. For $\epsilon=4$, the fractions of samples exceeding the budget are $96.97\%$, $80.64\%$, and $79.73\%$, respectively (exceedances highlighted in red). 
}
\label{fig:motivatingexample}
\end{minipage}
\vspace{-0.00in}
\end{figure*}

Concretely, let $\mathbf{w}=[w_1,\ldots,w_K]$ be nonnegative output weights.
For each output $y_k$, define an \emph{exempt} set $\mathcal{I}(y_k)\subseteq\mathcal{X}$, for example, the $r$ nearest $x_i$ to $y_k$~\cite{ImolaUAI2022} or those within a distance threshold~\cite{Qiu-PETS2025}.
For every non-exempt (“far”) pair $x_i\notin\mathcal{I}(y_k)$, we impose the \emph{WEM constraints}
\begin{equation}
\label{eq:mDPexp}
z_{i,k} = w_k \exp\Big(-\tfrac{\epsilon}{2} d(x_i,y_k)\Big),
\quad \forall y_k\in\mathcal{Y},~\forall x_i\notin\mathcal{I}(y_k).
\end{equation}
Hence, the optimization no longer treats all $|\mathcal{X}|\,|\mathcal{Y}|$ probabilities as independent decision variables; only the exempt probabilities, totaling $\sum_{k=1}^{K} |\mathcal{I}(y_k)|$, together with the $K$ weight variables in $\mathbf{w}$, remain to be optimized, while the row-normalization constraints in Eq.~(\ref{eq:unitmeasure}) distribute the remaining probability mass among the exempt outputs.

To represent the WEM-integrated formulation compactly, we collect three coefficient blocks.
Let $\mathcal{T}:=\{\textsf{dx},\textsf{dy},\textsf{cy}\}$ index these blocks, and denote them collectively by $\{\mathbf{A}^{t}\}_{t\in\mathcal{T}}$, where
\begin{align}
\mathbf{A}^{\textsf{dx}} &\in \mathbb{R}^{|\mathcal{X}|\times |\mathcal{X}|},
&[\mathbf{A}^{\textsf{dx}}]_{i,j} &:= d_{x_i,x_j}, \label{eq:AdXX_A}\\
\mathbf{A}^{\textsf{dy}} &\in \mathbb{R}^{|\mathcal{X}| \times |\mathcal{Y}|},
&[\mathbf{A}^{\textsf{dy}}]_{i,k} &:= d_{x_i,y_k}, \label{eq:AdXY_A}\\
\mathbf{A}^{\textsf{cy}} &\in \mathbb{R}^{|\mathcal{X}| \times |\mathcal{Y}|},
&[\mathbf{A}^{\textsf{cy}}]_{i,k} &:= \pi_{x_i}\, c_{x_i,y_k}. \label{eq:AcXY_A}
\end{align}
Here, $\textsf{dx}$ captures pairwise distances within $\mathcal{X}$ and determines how strongly different input rows are coupled by the mDP constraints. The block $\textsf{dy}$ records input-output distances and is used to define the near/far structure around each output $y_k$ as well as the exponential-decay terms in the WEM constraints. Finally, $\textsf{cy}$ encodes the prior-weighted utility loss of reporting $y_k$ when the true input is $x_i$, and therefore supplies the coefficients of the utility objective.

With WEM constraints, the feasible region is the convex polyhedron
\begin{eqnarray} 
&& \Omega\left({\rd\mathbf{A}^{\textsf{dx}}}, {\bl\mathbf{A}^{\textsf{dy}}}\right) \\ 
\nonumber &=& \footnotesize \left\{ \begin{array}{l} (\mathbf{Z}, \mathbf{w}) \\ 
\in [0,1]^{|\mathcal{X}|\times |\mathcal{Y}|}\times \mathbb{R}^{K}_{\geq 0} \end{array} \left| \begin{array}{ll} z_{i,k} - e^{\epsilon [{\rd\mathbf{A}^{\textsf{dx}}}]_{i,j}} z_{j,k} \le 0, \\ \forall (x_i, x_j, y_k) \in \mathcal{X}^2\times\mathcal{Y}, \\ \sum_{y_k \in \mathcal{Y}} z_{i,k} = 1, \forall x_i \in \mathcal{X}, \\ z_{i,k} - w_{k}e^{-\frac{1}{2}\epsilon [{\bl\mathbf{A}^{\textsf{dy}}}]_{i,k}} = 0, \\ \forall y_k \in \mathcal{Y}, \forall x_i \in \mathcal{I}^{\mathrm{c}}(y_k) \end{array}\right. \right\}. \end{eqnarray}

\smallskip
\noindent \textbf{LP+WEM generator as an operator.}
We view the LP-based mechanism generator with WEM integration (\textsc{LP+WEM}) as a functional operator
\begin{equation}
\mathcal{F}_{\mathrm{LP+WEM}}:
\big(\mathbf{A}^{t}\big)_{t\in\mathcal{T}}
\mapsto
(\mathbf{Z},\mathbf{w}) \in [0,1]^{|\mathcal{X}|\times |\mathcal{Y}|}\times \mathbb{R}_{\ge 0}^{K},
\end{equation}
where $(\mathbf{Z},\mathbf{w})$ is obtained by solving
\begin{eqnarray}
\label{eq:lp-function_A}
&&\mathcal{F}_{\mathrm{LP+WEM}}\left({\gr\mathbf{A}^{\textsf{cy}}}, {\rd\mathbf{A}^{\textsf{dx}}}, {\bl\mathbf{A}^{\textsf{dy}}}\right)
\\
&=& 
\arg\min_{(\mathbf{Z},\mathbf{w}) \in \Omega({\rd\mathbf{A}^{\textsf{dx}}}, {\bl\mathbf{A}^{\textsf{dy}}})}
\mathcal{L}\left(\mathbf{Z};{\gr\mathbf{A}^{\textsf{cy}}}\right).
\end{eqnarray}


\section{Framework Motivation and Challenges}
\label{sec:overview}
Notably, although WEM constraints reduce the effective search space by enforcing an exponential-decay form for many ``far'' transition probabilities, its computational cost remains prohibitive on large-scale domains. This blow-up is largely driven by the mDP requirement, which introduces distance-based indistinguishability constraints for (potentially) every pair of secrets in $\mathcal{X}$. Although these constraints are defined globally, their effect is inherently \emph{local}: as $d_{x_i,x_j}$ increases, the corresponding constraint becomes increasingly permissive, so distant records exert much less influence than nearby neighbors~\cite{ImolaUAI2022}. This locality motivates scalable LP-based methods that focus on enforcing indistinguishability within local neighborhoods, rather than explicitly instantiating constraints for all secret pairs. Hence, a promising direction is to apply a \emph{distributed} framework that offloads local structure to users.

Fig.~\ref{fig:motivatingframework} illustrates a distributed LP framework adopted in prior work (e.g.,~\cite{Qiu-PETS2025}). 
Each user $m$ constructs local coefficient blocks $\mathbf{A}_m^{\textsf{dx}}, \mathbf{A}_m^{\textsf{dy}}, \mathbf{A}_m^{\textsf{cy}}$ restricted to a neighborhood $\mathcal{N}_m$ around the true record $x_m$ (e.g., the 20 nearest secrets to $x_m$), rather than spanning the entire secret domain $\mathcal{X}$. 
The server aggregates these blocks across users and solves the resulting integrated LP of the following form (the full formulation is deferred to Appendix \ref{app:algorithm_design} for completeness):
\begin{eqnarray} 
\min && \mathcal{L}(\mathbf{Z};\mathbf{A}_1^{\textsf{cy}}, ..., \mathbf{A}_M^{\textsf{cy}}) \\
\mathrm{s.t.} && \mathbf{Z} \in \Omega\left(\mathbf{A}_1^{\textsf{dx}},..., \mathbf{A}_M^{\textsf{dx}}, \mathbf{A}_1^{\textsf{dy}}, ...,\mathbf{A}_M^{\textsf{dy}}\right). 
\end{eqnarray}
This integrated program retains the constraint template of the centralized LP
$\mathcal{F}_{\mathrm{LP+WEM}}\!\left((\mathbf{A}^{t})_{t\in\mathcal{T}}\right)$ as described in Eq. (\ref{eq:lp-function_A}), but restricts it to pairs within each local neighborhood $\mathcal{N}_m$, resulting in a much smaller optimization problem.
Accordingly, the number of decision variables drops from $\mathcal{O}(|\mathcal{X}||\mathcal{Y}|)$ to
$\mathcal{O}\!\left(\sum_{m=1}^M |\mathcal{N}_m||\mathcal{Y}|\right)$, and the number of constraints from
$\mathcal{O}(|\mathcal{X}|^2|\mathcal{Y}|)$ to
$\mathcal{O}\!\left(\sum_{m=1}^M |\mathcal{N}_m|^2|\mathcal{Y}|\right)$,
providing substantial scalability gains when $\sum_{m=1}^M |\mathcal{N}_m| \ll |\mathcal{X}|$.

\begin{figure*}[t]
\centering
\begin{minipage}{1.00\textwidth}
\centering
  \subfigure{
\includegraphics[width=0.91\textwidth]{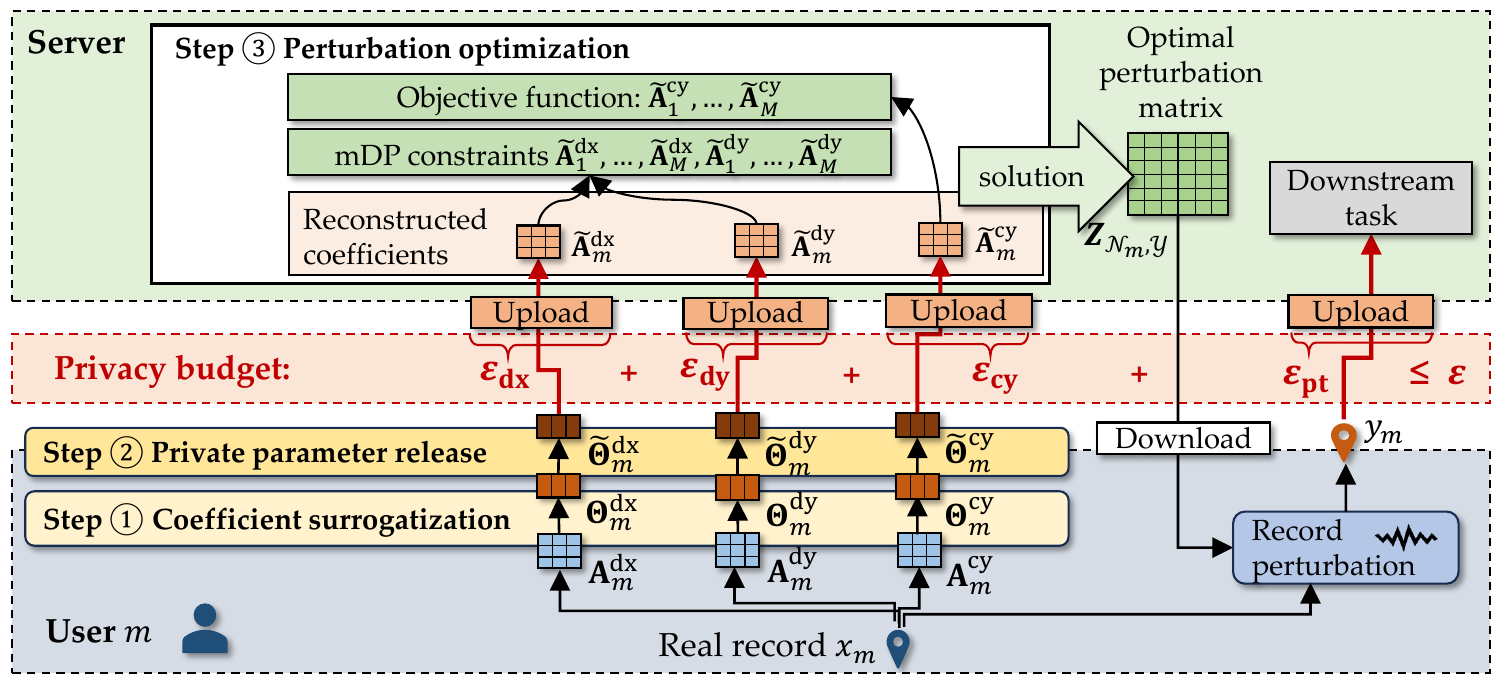}}
\caption{The DistA-LP framework. }
\label{fig:framework}
\end{minipage}
\end{figure*}

However, the distributed design in~\cite{Qiu-PETS2025} implicitly treats the upload of local LP coefficient blocks $\mathbf{A}_m^{\textsf{dx}}, \mathbf{A}_m^{\textsf{dy}}, \mathbf{A}_m^{\textsf{cy}}$ as privacy-neutral. This assumption is problematic because these blocks are deterministic functions of users' authentic records and can themselves reveal sensitive information about the underlying data. To formalize this, we define the \emph{coefficient-upload mechanism} $\mathcal{M}_{\mathrm{c}}^{t}$ as the mapping from a true record $x\in\mathcal{X}$ to the uploaded coefficient block for type $t\in\mathcal{T}$: 
\begin{equation}
\mathcal{M}_{\mathrm{c}}^{t}(x_m) \triangleq \mathbf{A}^{t}_m.
\end{equation}
If no randomization is applied, then $\mathcal{M}_{\mathrm{c}}^{t}$ is deterministic. Consequently, for two distinct records $x_i\neq x_j$, there exists an uploaded observation $\widetilde{\mathbf{A}}$ (e.g., $\widetilde{\mathbf{A}}=\mathbf{A}^{t}_i)$) such that
\begin{equation}
\Pr\left[\mathcal{M}_{\mathrm{c}}^{t}(x_i)=\widetilde{\mathbf{A}}\right]=1
\quad\text{while}\quad
\Pr\left[\mathcal{M}_{\mathrm{c}}^{t}(x_j)=\widetilde{\mathbf{A}}\right]=0.
\end{equation}
The corresponding privacy loss for observing $\widetilde{\mathbf{A}}$, 
\begin{equation}
\mathrm{PL}_{x_i, x_j}\left(\widetilde{\mathbf{A}}\right) = \frac{1}{d_{x_i, x_j}}\ln\left(\frac{\Pr[\mathcal{M}_{\mathrm{c}}^{t}(x_i)=\widetilde{\mathbf{A}}]}{\Pr[\mathcal{M}_{\mathrm{c}}^{t}(x_j)=\widetilde{\mathbf{A}}]}\right), 
\end{equation}
is infinite, showing that uploading coefficient blocks without randomization can cause unbounded leakage and thus requires explicit protection at the upload stage.


Notably, simply adding naive noise to the coefficient blocks $\mathbf{A}_m^{t}$
(for $t\in\{\textsf{dx},\textsf{dy},\textsf{cy}\}$) remains insufficient.
As a baseline, we perturb each block entrywise with Laplace noise
$\mathrm{Lap}\!\left(0,\frac{\Delta_t}{\epsilon}\right)$ (with $\epsilon=4$), where $\Delta_t$
is the metric sensitivity of $\mathbf{A}_m^{t}$ under $d(\cdot,\cdot)$ (see Appendix~\ref{app:laplace_mdp}).
We then measure the induced privacy loss $\mathrm{PL}_{x_i,x_j}\!\left(\widetilde{\mathbf{A}}\right)$.
Figure~\ref{fig:motivatingexample}(a)--(c) shows that a substantial fraction of the mass still lies \emph{above} the target budget.



Motivated by these examples, we propose the \textsc{DistA-LP} framework to explicitly account for the privacy loss incurred by coefficient uploads and to allocate a dedicated budget for correct end-to-end composition, raising several unique challenges to address:

\smallskip
\noindent \textbf{Challenge~1: \emph{How can we reduce the privacy loss incurred by uploading local LP coefficient blocks?}}
As illustrated in Fig.~\ref{fig:motivatingexample}, directly uploading coefficient blocks (even with naive perturbation) can leak highly sensitive information about a user’s locally relevant neighborhood $\mathcal{N}_m$ and, by extension, the underlying record. A key reason is that $\mathbf{A}_m^{t}$ is high-dimensional and encodes fine-grained, instance-specific neighborhood structure. To mitigate this leakage, we replace each high-dimensional block $\mathbf{A}_m^{t}$ with a low-dimensional surrogate representation and then apply a parameter-noising mechanism to the surrogate parameters $\widetilde{\boldsymbol{\Theta}}_m^{t}$, aiming to suppress fine-grained, instance-specific structure and reduce record-level distinguishability.

\smallskip
\noindent \textbf{Challenge~2: \emph{How can we minimize the impact of surrogate approximation on both mDP feasibility and utility?}}
Approximating $\mathbf{A}_m^{\textsf{dx}}, \mathbf{A}_m^{\textsf{dy}}, \mathbf{A}_m^{\textsf{cy}}$ with low-dimensional surrogates inevitably introduces reconstruction error, which can affect both the feasibility of mDP constraints and the utility of the LP solution produced by $\mathcal{F}_{\mathrm{LP}}$.
In particular, distance overestimation can inflate terms such as $e^{\epsilon d}$, enlarge the feasible region, and yield mechanisms that violate the intended mDP constraints.
Therefore, we must design calibrated asymmetric perturbation and \emph{conservative} reconstruction procedures so that the recovered coefficients remain mDP-preserving while retaining high utility.

\smallskip
\noindent \textbf{Challenge~3: \emph{How can we evaluate and account for the privacy loss of releasing continuous surrogate parameters?}}
The released surrogate parameters $\widetilde{\boldsymbol{\Theta}}_m^{t}$ lie in continuous spaces. For the corresponding parameter-noising mechanisms, deriving closed-form privacy-loss bounds or privacy-loss distributions is generally difficult, especially under asymmetric, truncated, and otherwise nonstandard perturbations. Hence, we need practical evaluation and accounting methods to quantify the induced privacy loss and incorporate it into end-to-end budget allocation and composition.



Next, we present our method and explain how it addresses each of the three challenges.

\section{Methodology}
\label{sec:method}

In this section, we detail the design of DistA-LP and explain how it addresses \textbf{Challenges~1--3} in Section~\ref{sec:overview}. As Fig.~\ref{fig:framework} shows, DistA-LP decomposes end-to-end mechanism generation into \emph{three coordinated steps}: \looseness=-1

\begin{itemize}
    \item [\textcircled{1}] \textbf{Coefficient surrogatization}. Each user $m$ locally constructs three coefficient blocks: the \emph{intra-neighborhood distance matrix} $\mathbf{A}_m^{\textsf{dx}}$, the \emph{neighborhood-output distance matrix} $\mathbf{A}_m^{\textsf{dy}}$, and the \emph{utility-loss matrix} $\mathbf{A}_m^{\textsf{cy}}$. To reduce information leakage, user $m$ fits a compact \emph{coefficient surrogate} $f(\cdot;\boldsymbol{\Theta}_m^t)$ (parameterized by $\boldsymbol{\Theta}_m^t$) to approximate $\mathbf{A}_m^t$, i.e., $[\mathbf{A}_m^t]_{i,j}\approx f(i,j;\boldsymbol{\Theta}_m^t), ~ t \in \{\textsf{dx},\textsf{dy},\textsf{cy}\}$. 
    \smallskip
    \item [\textcircled{2}] \textbf{Private parameter release}. The user privatizes the surrogate parameters using a local randomizer $\mathcal{M}_t^{\textsf{p}}$, $\widetilde{\boldsymbol{\Theta}}_m^t=\mathcal{M}_t^{\textsf{p}}(\boldsymbol{\Theta}_m^t)$, and uploads $\widetilde{\boldsymbol{\Theta}}_m^t$ ($t \in \{\textsf{dx},\textsf{dy},\textsf{cy}\}$) to the server. 
    \smallskip
    \item [\textcircled{3}] \textbf{Perturbation Optimization}. The server reconstructs each approximate coefficient block $\widetilde{\mathbf{A}}_m^{t}$ from the noised surrogate parameters via $[\widetilde{\mathbf{A}}_m^{t}]_{i,j}=f(i,j;\widetilde{\boldsymbol{\Theta}}_m^{t})$, aggregates the reconstructed blocks, and solves the resulting LP to obtain each optimal local perturbation matrix $\mathbf{Z}^{*}_{\mathcal{N}_m\times\mathcal{Y}}$. 
\end{itemize}
Finally, each user downloads the resulting perturbation matrix and applies it to map its true record $x_m$ to a randomized output $y_m$. The server then uses the perturbed records for downstream analytical tasks. 

\smallskip
\noindent \textbf{Privacy accounting across uploads.}
As shown in Fig.~\ref{fig:framework}, \textsc{DistA-LP} involves \emph{four} user-to-server releases: three uploads of the privatized surrogate parameters $\widetilde{\boldsymbol{\Theta}}^{\textsf{dx}}_m$, $\widetilde{\boldsymbol{\Theta}}^{\textsf{dy}}_m$, and $\widetilde{\boldsymbol{\Theta}}^{\textsf{cy}}_m$ in Step~\textcircled{2}, followed by a final upload of the perturbed report $y_m$ after applying the downloaded perturbation matrix. Since each release consumes privacy budget, we assign per-release budgets $\epsilon_{\textsf{dx}}, \epsilon_{\textsf{dy}}, \epsilon_{\textsf{cy}},$ and $\epsilon_{\textsf{pt}}$ to $\widetilde{\boldsymbol{\Theta}}^{\textsf{dx}}_m$, $\widetilde{\boldsymbol{\Theta}}^{\textsf{dy}}_m$, $\widetilde{\boldsymbol{\Theta}}^{\textsf{cy}}_m$, and $y_m$, respectively, 
and enforce the end-to-end constraint
\begin{equation}
\label{eqn:privacy_accounting}
\epsilon_{\textsf{dx}} + \epsilon_{\textsf{dy}} + \epsilon_{\textsf{cy}} + \epsilon_{\textsf{pt}} \le \epsilon,
\end{equation}
where $\epsilon$ is the predetermined overall privacy threshold. Since the server-side reconstruction and LP solving in Step~\textcircled{3} are pure post-processing of the uploaded privatized parameters, they incur no additional privacy cost; therefore, tracking and composing these four budgets suffices to guarantee the end-to-end privacy level of the full protocol.

\smallskip
Next, we present the details of Steps~\textcircled{1}--\textcircled{3} in Sections~\ref{subsec:step1}--\ref{subsec:step3}, respectively.

\subsection{Step \textcircled{1}: Coefficient Surrogatization}
\label{subsec:step1}
In this step, we approximate each coefficient block $\mathbf{A}_m^{t}$ $(t\in\{\textsf{dx}, \textsf{dy}, \textsf{cy}\})$ using a compact parametric surrogate $f(\cdot;\boldsymbol{\Theta}_m^{t})$, instantiated either as a \emph{predefined functional form} or a \emph{low-rank SVD} approximation, with parameters $\boldsymbol{\Theta}_m^{t}$.
Restricting $\mathbf{A}_m^{t}$ to a low-dimensional model class suppresses fine-grained, instance-specific variations within $\mathcal{N}_m$ that could otherwise be revealing. In addition, the mapping $\mathbf{A}_m^{t}\mapsto \boldsymbol{\Theta}_m^{t}$ is typically many-to-one, so multiple distinct blocks can induce the same (or similar) parameter vector; as a result, $\boldsymbol{\Theta}_m^{t}$ is generally less distinguishable than $\mathbf{A}_m^{t}$ itself, increasing an adversary's uncertainty about the underlying coefficients.
We next detail the two approximation families: \emph{predefined parametric functions} and \emph{SVD-based} surrogates.


\begin{figure}[t]
\centering
\begin{minipage}{0.50\textwidth}
\centering
  \subfigure{
\includegraphics[width=1.00\textwidth]{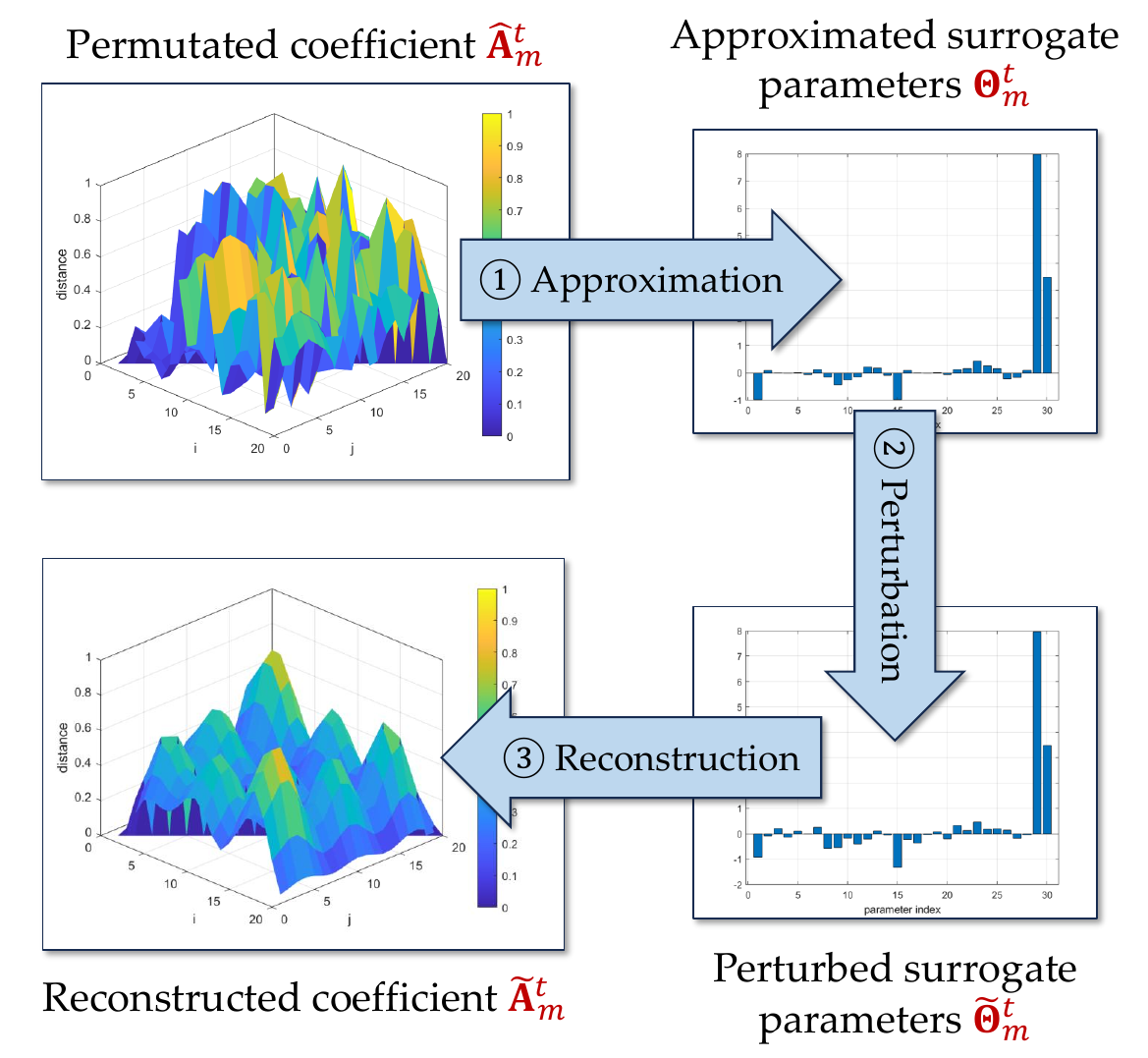}}
\caption{Surrogate-based coefficient privatization pipeline (permutation, approximation, parameter perturbation, and reconstruction): 
$\mathbf{A}_m^t \rightarrow \boldsymbol{\Theta}_m^t \rightarrow \widetilde{\boldsymbol{\Theta}}_m^t \rightarrow \widetilde{\mathbf{A}}_m^t$.}
\label{fig:3stepillustration}
\end{minipage}
\vspace{-0.00in}
\end{figure}

\smallskip
\noindent \textbf{Local permutation.}
To reduce the similarity among coefficient blocks across users, we apply a \emph{local permutation} to the ordering of locations within each neighborhood $\mathcal{N}_m$.

Let $\mathbf{E}_m\in\{0,1\}^{|\mathcal{N}_m|\times|\mathcal{N}_m|}$ be the corresponding permutation matrix, with exactly one $1$ in each row and each column (and $0$ elsewhere); hence
$\mathbf{E}_m^\top \mathbf{E}_m=\mathbf{E}_m\mathbf{E}_m^\top=\mathbf{I}$ and $\mathbf{E}_m^{-1}=\mathbf{E}_m^\top$.
The permutation for each user can be generated via a deterministic procedure (e.g., the optimization in Eq.~(17)), so the coordinator may know (or be able to reproduce) the set $\{\mathbf{E}_m\}_{m=1}^M$.
However, surrogate/coefficient uploads are received \emph{without user identifiers} in the anonymized upload phase; thus, upon receiving a permuted block, the coordinator cannot determine which $\mathbf{E}_m$ was applied and cannot undo the permutation.

\begin{table*}[t]
\caption{Comparison of different matrix approximation methods.}
\centering
\small
\begin{tabular}{p{2.1cm}p{1.6cm}p{1.2cm}p{4.0cm}p{6.8cm}}
\toprule
\rowcolor{lightgray!50} \textbf{Surrogate} &  \textbf{Size of} $\boldsymbol\Theta_m^t$ & \textbf{Accuracy} & \textbf{Notes} & \textbf{Noise Strategy} \\
\hline
2D Gaussian & 6 & Low & Compact, but only fits symmetric bell-shaped data well. & Subtract Laplace noise from amplitude ($A$) and increase spread ($\sigma$) to lower the surface everywhere. \\
\hline
2D Polynomial & 10 & Medium & Simple to fit; more effective for smooth surface trends. & Subtract Laplace noise from coefficients, especially the intercept, to bias the surface downward. \\
\hline
RBF & 25 & High & Flexible and localized, better for multi-peak or irregular matrices. & Add negative-biased noise to weights ($\alpha_i - |\text{Lap}(b)|$); optionally perturb centers for robustness. \\
\hline
Low-Rank SVD & 30--50 & Very High & Captures global structure with fewer components; efficient to store and reconstruct. & Subtract noise from singular values ($S_k$) or from the reconstructed matrix directly to ensure conservative output. \\
\bottomrule
\end{tabular}
\label{tab:approx_privacy}
\end{table*}

Notably, $\mathbf{E}_m$ permutes only the \emph{neighborhood index} in $\mathcal{N}_m$ and leaves the output index set $\mathcal{Y}$ unchanged.
Given $\mathbf{E}_m$, we permute each coefficient block $\mathbf{A}_m^{t}$ as
\begin{equation}
\widehat{\mathbf{A}}_m^{t}=
\begin{cases}
\mathbf{E}_m \mathbf{A}_m^{t} \mathbf{E}_m^{\top}, & t=\textsf{dx},\\
\mathbf{E}_m \mathbf{A}_m^{t}, & t\in\{\textsf{dy},\textsf{cy}\},
\end{cases}
\label{eq:permute_blocks}
\end{equation}
i.e., for the square block ($t=\textsf{dx}$) indexed by $\mathcal{N}_m\times\mathcal{N}_m$, we permute both neighborhood axes, while for the rectangular blocks ($t\in\{\textsf{dy},\textsf{cy}\}$) indexed by $\mathcal{N}_m\times\mathcal{Y}$, we permute only the neighborhood axis.

We do not permute $\mathcal{Y}$ because the server needs to formulate a single optimization problem in a \emph{common} output index space, so that decision variables associated with $y\in\mathcal{Y}$ are aligned across users; otherwise, cross-user consistency of output-indexed variables would be ill-defined (see Step~\textcircled{3} in Section~\ref{subsec:step3}).
After user $m$ downloads the optimized local perturbation matrix (indexed by the permuted neighborhood order), it applies the inverse permutation locally to recover the original indexing (i.e., left-multiplies by $\mathbf{E}_m^{-1}=\mathbf{E}_m^\top$).

\smallskip
\noindent\textbf{Surrogate $f(\cdot;\boldsymbol\Theta_m^t)$ derivation.} Each user then approximates their permuted blocks $\widehat{\mathbf{A}}_m^t$ ($t\in\{\textsf{dx}, \textsf{dy}, \textsf{cy}\}$) using the parametric surrogate
$f(\cdot;\boldsymbol\Theta_m^t)$:
\begin{equation}
\widehat{\mathbf{A}}_m^t \approx f(\cdot;\boldsymbol\Theta_m^t),
\qquad \forall m\in[M],\forall t\in\mathcal{T}.
\label{eq:block_surrogate}
\end{equation}
Here, we instantiate $f(\cdot;\boldsymbol{\Theta}_m^{t})$ using four parametric families that trade expressiveness for parameter count (Table~\ref{tab:approx_privacy}).
We jointly determine the permutations $\{\mathbf{E}_m\}_{m=1}^M$ and surrogate parameters
$\{\boldsymbol{\Theta}_m^{t}\}_{m\in[M],\,t\in\mathcal{T}}$ by minimizing the total entrywise $\ell_1$ fitting error:
\begin{equation}
\min_{\{\mathbf{E}_m,\boldsymbol{\Theta}_m^{t}\}_{m=1}^M}
\quad \sum_{m=1}^M \sum_{t\in\mathcal{T}} \lambda_t
\left\|\widehat{\mathbf{A}}_m^{t} - f(\cdot;\boldsymbol{\Theta}_m^{t})\right\|_1,
\label{eq:block_fit_obj}
\end{equation}
where $\lambda_t\ge 0$ controls the trade-off among approximation components, and we fix $\lambda_{\textsf{dx}}=1$ without loss of generality.
Here, we prefer $\ell_1$ over $\ell_2$ for robustness to outliers and discretization noise, which typically yields more stable fitting.

\smallskip
\noindent\textbf{Joint optimization of permutation $\{\mathbf{E}_m,\boldsymbol{\Theta}_m^{t}\}_{m=1}^M$.} We adopt an \emph{alternating scheme} between (1) fitting parameters for a fixed permutation and (2) improving the  permutation through greedy swaps:
\begin{itemize}
\item [(1)] \emph{Parameter fitting.}
Holding $\{\mathbf{E}_m\}_{m=1}^M$ fixed, for each surrogate family we fit $\boldsymbol{\Theta}_m^t$ by minimizing the weighted entrywise $\ell_1$ reconstruction loss in~\eqref{eq:block_fit_obj} using the \emph{Nelder--Mead simplex algorithm} \cite{NelderMead1965}, a robust, derivative-free optimizer well-suited for nonconvex objectives and nonsmooth losses. 


\item [(2)] \emph{Permutation search.} With $\{\boldsymbol{\Theta}_m^t\}_{m=1}^M$ fixed, we refine the local permutation $\mathbf{E}_m$ via greedy pairwise index swaps $(i,j)$ that decrease the global objective in~\eqref{eq:block_fit_obj}.
We accept only swaps that reduce the total weighted $\ell_1$ loss, ensuring monotonic descent. After each accepted swap, we refit $\{\boldsymbol{\Theta}_m^t\}_{m=1}^M$ under the updated ordering for consistency. According to the experimental results in Fig.~\ref{fig:surrogate_iteration_10}–\ref{fig:surrogate_iteration_20} this alternating scheme converges to a locally optimal configuration in practice and typically stabilizes within 10-20 iterations.
\end{itemize}



\DEL{
\smallskip
We adopt an alternating scheme between fitting parameters for a fixed permutation and improving the  permutation through greedy swaps:
\begin{itemize}
    \item \emph{Parameter fit.} Holding $\mathbf{E}_m$ fixed, for each family we minimize the weighted $\ell_1$ loss above over $\boldsymbol \Theta_m^t $ using the \textit{Nelder-Mead simplex algorithm} \CQ{Michael: add the reference for this algorithm}, a robust, derivative-free optimizer well-suited for nonconvex objectives and discontinuous gradients. This choice circumvents instabilities encountered by gradient-based methods under discretization and privacy-induced perturbations. For the Gaussian family, an implicit box constraint ensures physically meaningful spreads $(s_x, s_y)$ and nonnegative amplitudes $(A \ge 0)$, thereby avoiding degeneracy. {\rd The $\ell_1$ objective is preferred over $\ell_2$ to enhance robustness against outliers and discretization noise, promoting stable parameter recovery.} For the polynomial family, we apply mild $\ell_2$ regularization to the coefficient vector to prevent overfitting and improve conditioning, particularly under small-sample or high-noise regimes. For the RBF family, the bandwidth $\sigma$ is held within a limited range to maintain numerical stability, while the basis weights $\alpha_k$ are softly regularized to prevent over-concentration and preserve smoothness across neighboring centers. The $\ell_1$ objective is preferred over $\ell_2$ to enhance robustness against outliers and discretization noise, promoting stable and interpretable parameter recovery
    \item \emph{Permutation search.} With $\boldsymbol \Theta_m^t$ fixed, we refine the local permutation $\pi_m$ through pairwise index swaps $(i, j)$ that decrease the global objective. Each swap is evaluated in terms of its contribution to the total $\ell_1$ reconstruction loss, and only beneficial exchanges are accepted, ensuring monotonic descent. After each accepted swap, parameters $\boldsymbol \Theta_m^t$ are refitted to the updated data ordering for consistency. This alternating optimization between parameter fitting and permutation refinement converges to a locally optimal configuration that jointly minimizes approximation error and privacy loss. The greedy swap mechanism, though simple, is computationally efficient and empirically converges in a few iterations, balancing exploration of the permutation space with stability of the fitted surrogate.
\end{itemize}}

\DEL{\subsubsection{Low-Rank SVD}
We approximate each coefficient matrix $\mathbf{A}\in\mathbb{R}^{m\times n}$ with a rank-$k$ surrogate that preserves its dominant structure while drastically reducing dimensionality. Compute the singular value decomposition (SVD) $\mathbf{A}=\mathbf{U}\boldsymbol{\Sigma}\mathbf{V}^{\top}$, where $\mathbf{U}\in\mathbb{R}^{m\times m}$ and $\mathbf{V}\in\mathbb{R}^{n\times n}$ are orthogonal, and $\boldsymbol{\Sigma}=\mathrm{diag}(\sigma_1,\ldots,\sigma_r)$ with $\sigma_1\ge\cdots\ge\sigma_r>0$ and $r=\mathrm{rank}(\mathbf{A})$. The truncated SVD retains the top-$k$ components,
\begin{equation}
\mathbf{A}_k = \mathbf{U}_k\boldsymbol{\Sigma}_k\mathbf{V}_k^{\top},\qquad k\ll\min\{m,n\},
\end{equation}
where $\mathbf{U}_k\in\mathbb{R}^{m\times k}$, $\mathbf{V}_k\in\mathbb{R}^{n\times k}$, and $\boldsymbol{\Sigma}_k=\mathrm{diag}(\sigma_1,\ldots,\sigma_k)$. By Eckart-Young optimality, $\mathbf{A}_k$ minimizes both spectral and Frobenius error among all rank-$k$ matrices. For compact storage and upload, we use the factorized parameterization
\begin{equation}
\boldsymbol{\Theta}=(\mathbf{P},\mathbf{Q}),\quad \mathbf{P}:=\mathbf{U}_k\boldsymbol{\Sigma}_k\in\mathbb{R}^{m\times k},\quad \mathbf{Q}:=\mathbf{V}_k\in\mathbb{R}^{n\times k},
\end{equation}
so that $\mathbf{A}_k=\mathbf{P}\mathbf{Q}^{\top}$. This reduces the representation from $mn$ entries to $k(m{+}n)$ parameters and concentrates subsequent privacy noise on a small set of variables.

\medskip
\noindent\emph{Error guarantees and rank selection.}
By the Eckart--Young theorem, $\mathbf{A}_k$ is the best rank-$k$ approximation to $\mathbf{A}$ in both spectral and Frobenius norms:
\begin{equation}
\|\mathbf{A}-\mathbf{A}_k\|_F^2 = \sum_{i>k}\sigma_i^2, ~\|\mathbf{A}-\mathbf{A}_k\|_2 = \sigma_{k+1}.
\end{equation}
A common choice sets $k$ to capture a target energy fraction,
$\sum_{i=1}^{k}\sigma_i^2 \big/ \sum_{i=1}^{r}\sigma_i^2 \ge \tau$ (e.g., $\tau=0.95$),
trading a controlled approximation bias for substantial compression.
When privacy noise is added, the final approximation error satisfies
\begin{equation}
\|\mathbf{A}-\widetilde{\mathbf{A}}\| \le \underbrace{\|\mathbf{A}-\mathbf{A}_k\|}_{\text{truncation}} + \underbrace{\|\mathbf{A}_k-\widetilde{\mathbf{P}}\widetilde{\mathbf{Q}}^\top\|}_{\text{noise \& reconstruction}},
\end{equation}
making it explicit how $k$ (bias) and noise scale (variance) jointly determine fidelity.

Both representations reduce dimensionality (from $mn$ entries to $\dim(\boldsymbol\Theta)$ or roughly $k(m+n)$), cut storage and communication, and concentrate noise on a small parameter set, yielding scalability and inherent denoising at the utility loss of a controlled approximation bias.}

Table~\ref{tab:approx_privacy} compares surrogate parametric families in terms of parameter complexity, approximation accuracy, and the noise strategies used to ensure conservative, privacy-preserving outputs. Overall, 2D Gaussian and polynomial surrogates offer compact representations but are best suited to smooth, globally structured surfaces (and, in the Gaussian case, near-symmetric shapes), whereas RBFs provide greater expressiveness at the cost of substantially larger parameter sets. In contrast, the low-rank SVD surrogate provides the most favorable balance, preserving global structure with high fidelity while keeping the representation low-dimensional.

\subsection{Step \textcircled{2}: Private Parameter Release} 
\label{subsec:step2}
To preserve privacy in our distributed LP, we introduce three randomized \emph{parameter-perturbation mechanisms}
$\mathcal{M}_t^{\textsf{p}}$ ($t\in \{\textsf{dx},\textsf{dy},\textsf{cy}\}$) that inject calibrated and asymmetric noise into the uploaded
\emph{surrogate parameters}:
\begin{equation}
\label{eq:block_mech}
\widetilde{\boldsymbol{\Theta}}_m^{t} = \mathcal{M}_t^{\textsf{p}}\left(\boldsymbol{\Theta}_m^{t}\right),
\qquad \forall m\in[M],\forall t\in\mathcal{T}.
\end{equation}
Here, each $\mathcal{M}_t^{\textsf{p}}$ is a randomized mapping from the surrogate-parameter space to itself,
\begin{equation}
\mathcal{M}_t^{\textsf{p}}:\mathbb{R}^{d_t}\rightarrow \mathbb{R}^{d_t},
\end{equation}
where $d_t$ is the number of parameters in $\boldsymbol{\Theta}_m^{t}$ for block type $t$.
This is analogous to the perturbation mechanisms defined earlier for raw records, but with a different domain:
instead of mapping secrets $x\in\mathcal{X}$ to perturbed outputs $y\in\mathcal{Y}$, $\mathcal{M}_t^{\textsf{p}}$ maps
\emph{surrogate coefficients} $\boldsymbol{\Theta}_m^{t}$ to \emph{noised coefficients}
$\widetilde{\boldsymbol{\Theta}}_m^{t}$ prior to upload.

\DEL{
\smallskip
\noindent\textbf{(1) Mechanism $\mathcal{M}^{\textsf{p}}_{\textsf{dx}}$ (intra-neighborhood distance parameters).}
$\mathcal{M}^{\textsf{p}}_{\textsf{dx}}$ perturbs the surrogate parameters for the intra-neighborhood distance block $\boldsymbol{\Theta}_m^{\textsf{dx}}$.
It adds Laplace noise with a calibrated scale (and, when needed, uses asymmetric noise) 
$\widetilde{\boldsymbol{\Theta}}_m^{\textsf{dx}}=\boldsymbol{\Theta}_m^{\textsf{dx}} + |\eta|, \eta\sim \mathrm{Lap}\left(0, b \right)$, \CQ{Michael, can you double check whether it should be $\boldsymbol{\Theta}_m^{\textsf{dx}} - |\eta|$}
and truncates the perturbed values to a bounded range to ensure numerical stability and bounded sensitivity \CQ{E.g., distance should be positive?}. 

\smallskip
\noindent\textbf{(2) Mechanism $\mathcal{M}^{\textsf{p}}_{\textsf{dy}}$ (neighborhood-output distance parameters).}
$\mathcal{M}^{\textsf{p}}_{\textsf{dy}}$ perturbs the surrogate parameters for the cross-domain distance block $\boldsymbol{\Theta}_m^{\textsf{dy}}$, which links each local neighborhood to the shared output space $\mathcal{Y}$.
By injecting structured noise 
$\widetilde{\boldsymbol{\Theta}}_m^{\textsf{dy}}=\boldsymbol{\Theta}_m^{\textsf{dy}} + |\eta|, \eta\sim \mathrm{Lap}\left(0, b \right),$
into these parameters, $\mathcal{M}^{\textsf{p}}_{\textsf{dy}}$ reduces the risk that the server (or other users) can infer sensitive cross-domain correlations, while retaining sufficient geometric structure for the subsequent collaborative optimization.}

\smallskip
\noindent\textbf{(1)(2) Mechanisms $\mathcal{M}^{\textsf{p}}_{\textsf{dx}}$ and $\mathcal{M}^{\textsf{p}}_{\textsf{dy}}$.}
Mechanism $\mathcal{M}^{\textsf{p}}_{\textsf{dx}}$ perturbs the surrogate parameters $\boldsymbol{\Theta}_m^{\textsf{dx}}$ for the intra-neighborhood distance block.
Mechanism $\mathcal{M}^{\textsf{p}}_{\textsf{dy}}$ perturbs $\boldsymbol{\Theta}_m^{\textsf{dy}}$ for the cross-domain distance block, which links each local neighborhood to the shared output space $\mathcal{Y}$.
Both $\mathcal{M}^{\textsf{p}}_{\textsf{dx}}$ and $\mathcal{M}^{\textsf{p}}_{\textsf{dy}}$ add calibrated Laplace noise with an asymmetric form,
\[
\widetilde{\boldsymbol{\Theta}}_m^{\textsf{dx}}
= \boldsymbol{\Theta}_m^{\textsf{dx}} - |\eta|,\quad
\eta \sim \mathrm{Lap}(0,b),
\]
and truncate the perturbed values to a bounded range to ensure numerical stability and bounded sensitivity.  
Notably, the one-sided noise is intended to bias the perturbation toward \emph{reducing} the distance values, thereby tightening (rather than relaxing) the corresponding mDP constraints and avoiding infeasible solutions that could cause mDP violations.


\smallskip
\noindent\textbf{(3) Mechanism $\mathcal{M}^{\textsf{p}}_{\textsf{cy}}$ (utility-loss parameters).}
$\mathcal{M}^{\textsf{p}}_{\textsf{cy}}$ perturbs the surrogate parameters for the utility-loss block $\boldsymbol{\Theta}_m^{\textsf{cy}}$ used in the LP objective.
This prevents the released objective coefficients from inadvertently revealing sensitive information encoded in the utility-loss structure (e.g., user-specific preferences or constraints), while keeping the perturbed coefficients well-behaved so that the global LP remains feasible and numerically stable in practice, we use 
\begin{equation*}
\tilde{c}_{i,k}=c_{i,k} + |\eta_c|, \qquad \eta_c\sim \mathrm{Lap}\left(0, {b_c} \right).
\end{equation*}


The detailed description of perturbation mechanisms for both \emph{predefined parametric surrogates} and \emph{SVD-based approximations} can be found in Appendix \ref{app:paraperturbation}.

\paragraph{Privacy Loss Evaluation.} 
According to Definition \ref{def:PL}, given the observation of perturbed coefficient surrogate $\tilde{\boldsymbol\Theta}^t_m$, the \emph{privacy loss (PL)} of any two records $x_i, x_j\in \mathcal{X}$ is defined by 
\vspace{-0.15in}
\normalsize
\begin{equation}
\label{eq:privacyloss_para}
\mathrm{PL}_{x_i, x_j}(\tilde{\boldsymbol\Theta}^t_m)
= \frac{1}{d_{x_i, x_j}}
\ln \left( \frac{\Pr\left[\mathcal{M}^{\textsf{p}}_{t}(x_i)=\tilde{\boldsymbol\Theta}^t_m\right]}{\Pr\left[\mathcal{M}^{\textsf{p}}_{t}(x_j)=\tilde{\boldsymbol\Theta}^t_m\right]}\right).
\end{equation}
\normalsize
Note that $\tilde{\boldsymbol{\Theta}}^t_m$ is defined over a continuous domain. 
Because we deliberately use non-standard (\emph{asymmetric} and \emph{truncated}) parameter noise to reduce the violation rate, 
the distribution of $\mathrm{PL}_{x_i, x_j}(\tilde{\boldsymbol{\Theta}}^t_m)$ does not admit a closed-form expression. 
For computational tractability, we therefore estimate the privacy loss via Monte Carlo sampling.

Here, we define $p_{\mathcal{X}^2}$ as the probability PL is violated: 
\begin{equation}
p_{\mathcal{X}^2} = \Pr\left[\mathrm{PL}_{X_i, X_j}(Z) > \epsilon \right],
\end{equation}
where $X_i, X_j$ 
are random variables representing two secret records in $\mathcal{X}$, we require $p_{\mathcal{X}^2} \leq \delta$, i.e., the posterior leakage bound can be satisfied with a high probability $1-\delta$.


{\rev 
\begin{definition}
\label{def:PmDP}
\textbf{$(\epsilon,\delta)$-Probabilistic mDP (PmDP).}
A randomized mechanism
$\mathcal{M}_\ell:\mathcal{X}\rightarrow\mathcal{Z}$
satisfies $(\epsilon,\delta)$-PmDP if
\begin{equation}
\label{eq:AmDP}
\frac{1}{|\mathcal{X}|^2}
\sum_{x_i,x_j\in\mathcal{X}}
\Pr_{z\sim\mathcal{M}_\ell(x_i)}
\!\left[
\mathrm{PL}_{x_i,x_j}(z)>\epsilon
\right]
\leq \delta,
\end{equation}
where $\delta\in[0,1)$ is the average probability, over uniformly
selected input pairs $(x_i,x_j)$ and the randomness of
$\mathcal{M}_\ell$, that the privacy-loss bound is violated.
Equivalently, the bound
$\mathrm{PL}_{x_i,x_j}(z)\leq\epsilon$ holds with probability at
least $1-\delta$ for a uniformly selected input pair and its
corresponding randomized output. When $\delta=0$, the definition
recovers exact PL as given in Eq.~(\ref{eq:mDP}).
\end{definition}}
{\rev 
\begin{proposition}
\label{prop:composition}
(Sequential composition of PmDP, adapted from
\cite{Liu-CCS2025})
Let $\{\mathcal{M}_t^{\textsf{p}}\}_{t\in\mathcal{T}}$
be randomized mechanisms with conditionally
independent outputs given the input.
If each $\mathcal{M}_t^{\textsf{p}}$ satisfies
$(\epsilon_t,\delta_t)$-PmDP and
$\sum_{t\in\mathcal{T}}\delta_t<1$,
then the joint mechanism
$\mathcal{M}=(\mathcal{M}_t^{\textsf{p}})_{t\in\mathcal{T}}$
satisfies
$(\sum_{t\in\mathcal{T}}\epsilon_t,
\sum_{t\in\mathcal{T}}\delta_t)$-PmDP.
\end{proposition}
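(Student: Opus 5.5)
The plan is to reduce the joint privacy loss to a sum of per-mechanism losses, and then bound the violation probability of the sum with a union bound over the individual violation events.

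\textbf{Step 1: factor the likelihood.} Fix inputs $x_i,x_j$ and a joint output $z=(z_t)_{t\in\mathcal{T}}$. Because the outputs are conditionally independent given the input, the joint density factorizes:
\[
\Pr[\mathcal{M}(x)=z]=\prod_{t\in\mathcal{T}}\Pr[\mathcal{M}_t^{\textsf{p}}(x)=z_t].
\]
Taking logarithms of the likelihood ratio and dividing by $d_{x_i,x_j}$ then gives the additive identity
\[
\mathrm{PL}^{\mathcal{M}}_{x_i,x_j}(z)=\sum_{t\in\mathcal{T}}\mathrm{PL}^{\mathcal{M}_t^{\textsf{p}}}_{x_i,x_j}(z_t).
\]
I will work with the signed log-ratio rather than its absolute value. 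A sum of terms each at most $\epsilon_t$ is at most $\sum_t\epsilon_t$, so only the upper tail matters; it is also the only tail that Definition~\ref{def:PmDP} constrains.

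\textbf{Step 2: containment of the violation event.} If $\mathrm{PL}^{\mathcal{M}}_{x_i,x_j}(z)>\sum_t\epsilon_t$, then at least one summand must satisfy $\mathrm{PL}^{\mathcal{M}_t^{\textsf{p}}}_{x_i,x_j}(z_t)>\epsilon_t$; otherwise the sum would be at most $\sum_t\epsilon_t$. So the joint violation event is contained in the union over $t$ of the per-mechanism violation events.

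\textbf{Step 3: union bound and averaging.} For each pair $(x_i,x_j)$, the union bound under $z\sim\mathcal{M}(x_i)$ gives
\[
\Pr_{z\sim\mathcal{M}(x_i)}\Big[\mathrm{PL}^{\mathcal{M}}_{x_i,x_j}(z)>\textstyle\sum_t\epsilon_t\Big]\le\sum_{t\in\mathcal{T}}\Pr_{z\sim\mathcal{M}(x_i)}\big[\mathrm{PL}^{\mathcal{M}_t^{\textsf{p}}}_{x_i,x_j}(z_t)>\epsilon_t\big].
\]
The event in each summand depends only on $z_t$, and the marginal law of $z_t$ under $\mathcal{M}(x_i)$ is exactly $\mathcal{M}_t^{\textsf{p}}(x_i)$. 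Each summand is therefore the per-mechanism quantity appearing in Eq.~(\ref{eq:AmDP}). Averaging over all $(x_i,x_j)\in\mathcal{X}^2$ with weight $1/|\mathcal{X}|^2$ and exchanging the finite sums, the $(\epsilon_t,\delta_t)$-PmDP hypotheses bound the average by $\sum_t\delta_t$. The hypothesis $\sum_t\delta_t<1$ only ensures that the resulting parameter lies in $[0,1)$, as Definition~\ref{def:PmDP} requires.

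\textbf{Main obstacle: well-definedness in the continuous setting.} The outputs are continuous, so "$\Pr[\mathcal{M}(x)=z]$" must be read as a density with respect to a common product reference measure. Step 1 then needs the product structure of the densities, which conditional independence provides. One must also handle outputs where some factor density vanishes under $x_j$ but not under $x_i$. There the corresponding summand is $+\infty$, which is already counted as a violation in that component, so Step 2 is unaffected. Outputs with zero density under $x_i$ have probability zero under $z\sim\mathcal{M}(x_i)$ and can be ignored. I would state these conventions explicitly and then carry out the three steps.
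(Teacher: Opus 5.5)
Your proposal is correct and follows essentially the same route as the paper: use conditional independence to factor the joint likelihood so the joint privacy loss is a sum of per-mechanism losses, union-bound the joint violation event by the per-mechanism ones, and average over input pairs to obtain $\sum_{t\in\mathcal{T}}\delta_t$. Your extra remarks fill in details the paper leaves implicit: the density interpretation, the null sets where a factor vanishes, and the fact that $z_t$ under $\mathcal{M}(x_i)$ has marginal law $\mathcal{M}_t^{\textsf{p}}(x_i)$.
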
} 

{\rev 
\begin{proposition}
\label{prop:sampling}
[Concentration Guarantee for Probabilistic mDP Sampling]  
If the empirical violation rate satisfies $\hat{p}_{\mathcal{S}_{\ell}} = \xi \delta$ for some constant $\xi < 1$, then $\Pr\left[p_{\mathcal{X}_{\ell}^2} \leq \delta\right] \geq 1 - 2\exp(-2S_\ell(1-\xi)^2\delta^2)$. The detailed proof can be found in  \textbf{Appendix~\ref{subsec:proof:sampling}}.
\end{proposition}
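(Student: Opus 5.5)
The plan is to treat the empirical violation rate as an average of i.i.d.\ Bernoulli indicators and apply Hoeffding's inequality. First I will make the sampling model explicit. The auditor draws $S_\ell$ independent samples $s=1,\dots,S_\ell$. In each sample, a pair $(x_i,x_j)$ is drawn uniformly from $\mathcal{X}_\ell^2$, and then $z\sim\mathcal{M}_\ell(x_i)$. We record the indicator $B_s=\mathbf{1}[\mathrm{PL}_{x_i,x_j}(z)>\epsilon]$. By Definition~\ref{def:PmDP}, each $B_s$ is Bernoulli with mean exactly $p_{\mathcal{X}_\ell^2}$, the averaged violation probability, and the $B_s$ are independent and bounded in $[0,1]$. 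The empirical rate is then $\hat p_{\mathcal{S}_\ell}=\frac{1}{S_\ell}\sum_s B_s$.

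Next I will invoke the two-sided Hoeffding bound:
\[
\Pr\bigl[|\hat p_{\mathcal{S}_\ell}-p_{\mathcal{X}_\ell^2}|\ge t\bigr]\le 2\exp(-2S_\ell t^2).
\]
I choose $t=(1-\xi)\delta>0$, which is valid since $\xi<1$. On the complementary event, $p_{\mathcal{X}_\ell^2}<\hat p_{\mathcal{S}_\ell}+(1-\xi)\delta$. Substituting the hypothesis $\hat p_{\mathcal{S}_\ell}=\xi\delta$ gives $p_{\mathcal{X}_\ell^2}<\xi\delta+(1-\xi)\delta=\delta$. Hence $\Pr[p_{\mathcal{X}_\ell^2}\le\delta]\ge 1-2\exp(-2S_\ell(1-\xi)^2\delta^2)$. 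The one-sided Hoeffding bound would give the sharper constant $1$ in place of $2$, so the stated bound follows a fortiori.

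The main obstacle is interpretive rather than computational. The quantity $p_{\mathcal{X}_\ell^2}$ is deterministic, while $\hat p$ is random, so ``$\Pr[p\le\delta]$ given $\hat p=\xi\delta$'' must be read in the frequentist confidence sense. What is actually proved is that the procedure ``certify if $\hat p_{\mathcal{S}_\ell}\le\xi\delta$'' errs, meaning it certifies while $p>\delta$, with probability at most $2\exp(-2S_\ell(1-\xi)^2\delta^2)$. I will state it this way, noting that the event $\{\hat p\le\xi\delta,\ p>\delta\}$ is contained in $\{p-\hat p\ge(1-\xi)\delta\}$. A secondary point is independence. If privacy losses must themselves be estimated, for instance through density evaluation of the continuous output $z$, I will assume these are computed exactly, or absorb the estimation into the indicator, so that the $B_s$ remain i.i.d.\ with mean $p_{\mathcal{X}_\ell^2}$.
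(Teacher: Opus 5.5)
Your proposal is correct and matches the paper's proof: two-sided Hoeffding with $t=(1-\xi)\delta$, then substituting $\hat p_{\mathcal{S}_\ell}=\xi\delta$ on the complementary event to get $p_{\mathcal{X}_\ell^2}\le\delta$. Your additions tighten what the paper leaves informal: the explicit i.i.d.\ Bernoulli sampling model, the frequentist reading that certification errs only on $\{p_{\mathcal{X}_\ell^2}-\hat p_{\mathcal{S}_\ell}\ge(1-\xi)\delta\}$, and the observation that one-sided Hoeffding already gives the sharper constant $1$ in place of $2$.
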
}


\subsection{Step \textcircled{3}: Perturbation Optimization}
\label{subsec:step3}

The server-side after perturbation, depending on the surrogate family used during fitting, the server reconstructs each coefficient block through a predefined parametric mapping or through a low-rank factorization and then applies a structure-preserving post-processing step (which is privacy-free due to the post-processing immunity) to ensure feasibility and improve downstream stability.

\smallskip
\noindent\textbf{Server-side reconstruction (parametric and low-rank surrogates).}
The coordinator receives the privatized surrogate parameters $\{\widetilde{\boldsymbol{\Theta}}_m^{t}\}_{t\in\mathcal{T}}$ from each user $m$ and reconstructs each coefficient block in the permuted index space used during local fitting. The reconstruction follows the corresponding surrogate family: 
$\widetilde{\mathbf{A}}_m^{t} = f\left(\cdot;\widetilde{\boldsymbol{\Theta}}_m^{t}\right),~t\in\mathcal{T}$.

For each user $m$, we define $\mathcal{H}_m$ and $\mathcal{H}^\complement_m$ to denote the sets of real-perturbed location pairs that follow and do not follow the \emph{WEM constraints}, 
respectively. Specifically,
\normalsize
\begin{eqnarray}
\nonumber \mathcal{H}_m &=& \left\{(x_{i}, y_k) \in \mathcal{N}_m \times \mathcal{Y} \mid z_{i,k} - w_{k}e^{-\epsilon d_{x_{i},y_k}} = 0 \right\}\\ \nonumber 
\mathcal{H}^\complement_m &=& (\mathcal{N}_m \times \mathcal{Y}) \setminus \mathcal{H}_m. 
\end{eqnarray}  
\normalsize
Moreover, given each perturbed record $y_k \in \mathcal{Y}$, we define $\mathcal{H}_{m,k} = \left\{x_{i_m}\in \mathcal{N}_m \mid (x_{i_m}, y_k) \in \mathcal{H}_m \right\}$ as the set of user $m$'s peer records selecting $y_k$ by following WEM.

Following prior work \cite{ImolaUAI2022}, we apply a heuristic strategy to determine $\mathcal{H}_m$ given User $m$'s uploaded distance matrix $\tilde{\mathbf{D}}_{\mathcal{N}_m\times \mathcal{Y}}$ and utility loss matrix $\tilde{\mathbf{C}}_{\mathcal{N}_m\times \mathcal{Y}}$. 


We rewrite the objective function as 
\normalsize
\begin{eqnarray}
&& \sum_{m=1}^M\mathcal{L}\left(\mathbf{Z}_{\mathcal{N}_m}\right) \\
\nonumber 
&=&  
{\bl  \textstyle \sum_{m=1}^M\underbrace{\textstyle \sum_{(x_{i},y_k)\in \mathcal{H}_m} c_{i,k}z_{i,k}}_{\small \mbox{Following WEM}}} \\
&+& 
{\rd \textstyle \sum_{m=1}^M\underbrace{\textstyle \sum_{(x_{i},y_k)\in \mathcal{H}^\complement_m}c_{i,k}z_{i,k}}_{\small \mbox{Not following WEM}}}  \\
&=& 
\textstyle  
{\bl \sum_{k=1}^K \alpha_k w_k} + {\rd \sum_{m=1}^M \textbf{c}_{\mathcal{H}^\complement_m}\mathbf{z}_{\mathcal{H}^\complement_m}}
\end{eqnarray}
\normalsize
where each $\alpha_k = \sum_{m=1}^M\sum_{x_{i}\in \mathcal{H}_{m,k}} c_{i,l} e^{-\frac{\epsilon_{\textsf{pt}} d(x_{i}, y_\ell)}{2}}$ is a constant. The constraints described in Eq. (\ref{eq:mDPdiscrete})(\ref{eq:unitmeasure}) and Eq. (\ref{eq:mDPexp}) can be rewritten as 
\begin{equation}
\label{eq:LR-OMGconstrrewrite}
{\bl \underbrace{\mathbf{C}_{\mathcal{H}_m}\mathbf{z}_{\mathcal{H}_m}}_{\small \mbox{Following WEM}}} + {\rd \underbrace{\mathbf{C}_{\mathcal{H}^\complement_m}\mathbf{z}_{\mathcal{H}^\complement_m}}_{\small \mbox{Not following WEM}}} \geq \mathbf{b}_{\mathcal{N}_m}
\end{equation}
where $\left[\mathbf{C}_{\mathcal{H}_m}, \mathbf{C}_{\mathcal{H}^\complement_m}\right]$ includes the mDP constraints between the perturbation vectors of the records in $\mathcal{H}_m$: 
\begin{eqnarray}
\label{eq:C_H}
\nonumber && \left[\mathbf{C}_{\mathcal{H}_m}, \mathbf{C}_{\mathcal{H}^\complement_m}\right]\\ 
\nonumber 
&=& 
\left[\begin{array}{lcccr} 
\ddots & \cdots     & \cdots  & \cdots                              & \iddots \\
\cdots & 1   & \cdots  & -e^{{\epsilon_{\textsf{pt}} d_{x_i,x_j}}}    & \cdots \\
\cdots & -e^{{\epsilon_{\textsf{pt}} d_{x_i,x_j}}}    & \cdots  & 1    & \cdots \\
\iddots & \cdots     & \cdots  & \cdots \ & \ddots \\ \end{array}\right] 
\hspace{-0.06in}
\begin{array}{l}
\left\}
\begin{array}{l}
\forall (x_i, y), \\
(x_j,y) \\
\in \mathcal{N}_m 
\end{array}
\right.
\\
\end{array}
\end{eqnarray}
\normalsize
Here, the privacy budget assigned to the perturbation matrix optimization is $\epsilon_{\textsf{pt}}$. By sequential composition for PmDP (Proposition~\ref{prop:composition}), if $\epsilon_{\textsf{pt}} \le \epsilon - (\epsilon_{\textsf{dx}} + \epsilon_{\textsf{dy}} + \epsilon_{\textsf{cy}})$, then the overall mechanism satisfies the target $(\epsilon,\delta)$-PmDP guarantee.

Like \cite{Qiu-PETS2025}, we apply Benders' decomposition to decompose the LP into two simpler stages: a \emph{master program} and a set of \emph{subproblems}.

\noindent \textbf{Stage 1 - Master Program (MP):} The MP focuses on determining the weights of EM $\mathbf{w} = [w_1, \dots, w_K]$. To simplify the optimization, it replaces the utility loss terms associated with data perturbation for each user $m$, given by $\textbf{c}_{\mathcal{H}^\complement_m}\mathbf{z}_{\mathcal{H}^\complement_m}$, with a single decision variable $v_m$, i.e., $v_m = \textbf{c}_{\mathcal{H}^\complement_m}\mathbf{z}_{\mathcal{H}^\complement_m}$. The MP is formulated as the following LP problem: 
\begin{eqnarray}
\min && {\bl \underbrace{\textstyle \sum_{k=1}^K \alpha_k w_k}_{\small \mbox{Utility loss (following WEM)}}} \\
&+& {\rd \underbrace{\textstyle \sum_{m=1}^M v_m}_{\small \mbox{Substituted utility loss (not following WEM)}}}~
~\\
\text{subject to}&& \mathbf{v} \in \mathcal{V}
\end{eqnarray}
where $\mathcal{V}$ denotes the feasible region of $\mathbf{v} = [v_1, \dots, v_M]$. The MP iteratively estimates the values of $\mathbf{v}$ and checks the feasibility/optimality of $\mathbf{v}$ in the subsequent stage, adding cuts to $\mathcal{V}$ to refine the solution of MP.

\noindent \textbf{Stage 2 - Subproblems:} Each subproblem $\mathsf{sub}_m$ validates the guessed values of $v_m$ generated by the MP, subject to its own constraints: $\mathbf{C}_{\mathcal{H}_m}\mathbf{z}_{\mathcal{H}_m} \geq \mathbf{b}_{\mathcal{N}_m} - \mathbf{C}_{\mathcal{H}^\complement_m}\mathbf{z}^*_{\mathcal{H}^\complement_m}$. If the guessed value is suboptimal or infeasible, the subproblem generates a new constraint (or cut) to refine the feasible region $\mathcal{V}$, improving the MP’s future estimates of $\mathbf{v}$. The constraint matrices $\mathbf{C}_{\mathcal{H}_m}$ across different subproblems are disjoint, enabling parallel validation of the guessed values of $v_m$. This iterative process continues, with each subproblem progressively narrowing the solution space until convergence to the optimal solution is achieved.

{\qiu Finally, the coordinator restores the original ordering by applying the inverse permutation. For the square intra-neighborhood distance block,
\begin{equation}
\mathbf{Z}_m = \mathbf{E}_m\widehat{\mathbf{Z}}_m.
\label{eq:unpermute_square}
\end{equation}}

\DEL{
\begin{proposition}[mDP violation bound \cite{}]
\label{prop:privacyguarantee}
For any LR location $v_i$ of a User $n$, i.e., $v_i \in \mathcal{N}_n$, let $\mathcal{A}_i$ denote the set of obfuscated locations of $v_i$ that adhere to the exponential distribution constraints. Then:
\begin{itemize}
\item[(1)] For each pair of Users $n$ and $m$, the Geo-Ind constraint violation ratio is upper bounded by: 
\begin{equation}
1- \frac{2\sum_{(v_i, v_j) \in \mathcal{N}_n \times \mathcal{N}_m} |\mathcal{A}_i \cap \mathcal{A}_j|+(|\mathcal{N}_n|^2+|\mathcal{N}_m|^2-|\mathcal{N}_n|-|\mathcal{N}_m|)|\mathcal{V}|}{(|\mathcal{N}_n|+|\mathcal{N}_m|)(|\mathcal{N}_n|+|\mathcal{N}_m|-1)|\mathcal{V}|}.
\end{equation}
\item[(2)] For all users, the Geo-Ind constraint violation ratio is upper bounded by:
\begin{equation}
1-\frac{2\sum_{n=1}^M \sum_{m=n+1}^M \sum_{(v_i, v_j) \in \mathcal{N}_n \times \mathcal{N}_m} |\mathcal{A}_i \cap \mathcal{A}_j|+ \sum_{n=1}^M\left(|\mathcal{N}_n|^2 - |\mathcal{N}_n|\right)|\mathcal{V}|}{\sum_{n=1}^M|\mathcal{N}_n|\left(\sum_{n=1}^M|\mathcal{N}_n|-1\right)|\mathcal{V}|}.
\end{equation}
\end{itemize}
\end{proposition}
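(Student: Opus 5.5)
The plan is a counting argument: we lower-bound the number of Geo-Ind (mDP) constraints that the solved mechanism is \emph{guaranteed} to satisfy, and divide by the total number of constraints. Fix users $n$ and $m$, and treat $\mathcal{N}_n$ and $\mathcal{N}_m$ as disjoint index sets, as they are in the permuted, per-user formulation. Every constraint has the form $z_{i,k}\le e^{\epsilon d_{v_i,v_j}}z_{j,k}$, indexed by an ordered pair of distinct locations $(v_i,v_j)$ in $\mathcal{N}_n\cup\mathcal{N}_m$ and an output $y_k\in\mathcal{V}$. Hence the total number of constraints is $(|\mathcal{N}_n|+|\mathcal{N}_m|)(|\mathcal{N}_n|+|\mathcal{N}_m|-1)|\mathcal{V}|$. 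The violation ratio is at most one minus (number of provably satisfied constraints)/(total).

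\emph{Step 1 (intra-neighborhood pairs).} The subproblem $\mathsf{sub}_n$ explicitly includes the constraint matrix of Eq.~(\ref{eq:C_H}) for every pair $(v_i,y),(v_j,y)$ with $v_i,v_j\in\mathcal{N}_n$. At convergence of the Benders iteration its solution is feasible for those rows. So all $|\mathcal{N}_n|(|\mathcal{N}_n|-1)|\mathcal{V}|$ ordered intra-$\mathcal{N}_n$ constraints hold, and likewise for $\mathcal{N}_m$. Together these account for the term $(|\mathcal{N}_n|^2+|\mathcal{N}_m|^2-|\mathcal{N}_n|-|\mathcal{N}_m|)|\mathcal{V}|$.

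\emph{Step 2 (cross pairs via WEM).} Take $v_i\in\mathcal{N}_n$, $v_j\in\mathcal{N}_m$ and $y_k\in\mathcal{A}_i\cap\mathcal{A}_j$. Both probabilities then follow the exponential form, $z_{i,k}=w_k e^{-\frac{\epsilon}{2}d_{v_i,y_k}}$ and $z_{j,k}=w_k e^{-\frac{\epsilon}{2}d_{v_j,y_k}}$. The shared weight $w_k$ cancels, and the triangle inequality gives
\[
\frac{z_{i,k}}{z_{j,k}}=e^{\frac{\epsilon}{2}(d_{v_j,y_k}-d_{v_i,y_k})}\le e^{\frac{\epsilon}{2}d_{v_i,v_j}}\le e^{\epsilon d_{v_i,v_j}}.
\]
The same bound holds with $i$ and $j$ swapped. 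So each such $(v_i,v_j,y_k)$ contributes two satisfied ordered constraints, for a total of $2\sum_{(v_i,v_j)\in\mathcal{N}_n\times\mathcal{N}_m}|\mathcal{A}_i\cap\mathcal{A}_j|$. No other cross constraint is guaranteed, so we count none of them. Adding Steps 1 and 2 and dividing by the total yields part~(1).

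\emph{Step 3 (all users).} Part~(2) follows by the same bookkeeping over $\bigcup_n\mathcal{N}_n$. The total number of constraints is $\sum_n|\mathcal{N}_n|\big(\sum_n|\mathcal{N}_n|-1\big)|\mathcal{V}|$. Each neighborhood contributes its intra-neighborhood term $(|\mathcal{N}_n|^2-|\mathcal{N}_n|)|\mathcal{V|}$. Each unordered user pair $n<m$ contributes its cross term from Step 2. These sets of constraints are disjoint, so the counts add.

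The main obstacle is Step 2. It requires that the WEM weight $w_k$ be common to all users, which is why $\mathcal{Y}$ is not permuted and the master problem optimizes a single $\mathbf{w}$. The factor $\frac{\epsilon}{2}$ in the exponent must be dominated by $\epsilon d$ via the triangle inequality. If the reconstructed distances from Step~\textcircled{2} are used, the argument also needs the one-sided (downward) noise. Under that noise the enforced constraints use distances no larger than the true ones, and so remain valid for the true metric.
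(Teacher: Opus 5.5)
The paper never proves this proposition. It appears only inside a deleted block with an empty citation, evidently carried over from the locality-aware framework of \cite{Qiu-PETS2025}, so there is no in-paper argument to compare against.

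Your counting argument is correct for the setting the statement is written in. That setting assumes exact distances and a single weight vector $\mathbf{w}$ shared by all users. It also assumes every ordered intra-neighborhood constraint is present in the subproblems and satisfied at Benders convergence. Finally, it treats neighborhoods as disjoint row blocks, which is what the denominator presupposes. Under these assumptions your argument is exactly the bookkeeping the formula encodes.

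One small repair is needed in Step~2. Do not divide by $z_{j,k}$, which vanishes when $w_k=0$. Instead, using $w_k\ge 0$, write $z_{i,k}=w_k e^{-\frac{\epsilon}{2}d_{v_i,y_k}}\le e^{\frac{\epsilon}{2}d_{v_i,v_j}}\,w_k e^{-\frac{\epsilon}{2}d_{v_j,y_k}}\le e^{\epsilon d_{v_i,v_j}}z_{j,k}$. The same works if the WEM exponent is $\epsilon$ rather than $\epsilon/2$, as in the paper's definition of $\mathcal{H}_m$.

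Your closing remark about reconstructed distances, however, is wrong and should be removed or replaced by explicit hypotheses.

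\textbf{Step~1.} Subtracting noise from singular values does not shrink every entry of $\mathbf{U}_k\widetilde{\boldsymbol{\Sigma}}_k\mathbf{V}_k^\top$. The singular vectors have mixed signs, and truncation by itself can overestimate entries. The paper's own tables report post-perturbation entrywise overestimation ratios of about $0.07$--$0.23$ for the SVD surrogate. So Step~1 with $\widetilde{\mathbf{A}}^{\textsf{dx}}$ needs an explicit guarantee, such as the appendix's repair $\bar d_\gamma$ with $\gamma\ge\Delta^+$.

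\textbf{Step~2.} Here a downward bias is not even the relevant condition. With reconstructed $\tilde d$ in the WEM exponent you need $\tilde d_{v_j,y_k}-\tilde d_{v_i,y_k}\le 2d_{v_i,v_j}$ in both orders. Lowering $\tilde d_{v_i,y_k}$ increases $z_{i,k}/z_{j,k}$. So one-sided noise on the $\textsf{dy}$ block can break exactly the cross-user constraints you are counting, and so can the server's rectangular tightening, which only lowers entries. Writing $\tilde d=d-e$ with $e\ge 0$, it would suffice that $e_{v_i,y_k}\le d_{v_i,v_j}$ and $e_{v_j,y_k}\le d_{v_i,v_j}$ for the relevant pairs. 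Nothing in the mechanism guarantees this, so the bound carries over to \textsc{DistA-LP} only under such additional error conditions on both reconstructed blocks.
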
}

{\cam \subsection{A Two-User Running Example}
\label{subsec:running_example}
Consider two users with local neighborhoods $\mathcal{N}_1=\{x_1,x_2,x_3\}$ and $\mathcal{N}_2=\{x_4,x_5,x_6\}$. and a shared public output domain $\mathcal{Y}=\{y_1,y_2,y_3\}$. Suppose User~1's true location is $x_2$ and User~2's is $x_5$. Neither true location is sent to the server during mechanism generation.

\noindent \textbf{Local construction and release.} Each user $m$ constructs three blocks: $\mathbf{A}_m^{\textsf{dx}}$, containing distances between locations in $\mathcal{N}_m$; $\mathbf{A}_m^{\textsf{dy}}$, containing distances from $\mathcal{N}_m$ to $\mathcal{Y}$; and $\mathbf{A}_m^{\textsf{cy}}$, containing the corresponding utility losses. All three blocks are $3\times3$ in this example. The user permutes the neighborhood indices, fits a surrogate to each permuted block (for example, a rank-$r$ SVD approximation), perturbs the fitted parameters, and uploads the three perturbed parameter sets under budgets $\epsilon_{\textsf{dx}}$, $\epsilon_{\textsf{dy}}$, and $\epsilon_{\textsf{cy}}$. The user retains the mapping between the permuted rows and the original locations.

\noindent \textbf{Reconstruction and accounting.} From each upload, the server reconstructs an approximate $3\times3$ block. The privacy auditor evaluates the disclosure associated with the parameter releases and determines whether their allocated budgets meet the stated privacy criterion. The remaining budget $\epsilon_{\textsf{pt}}$ is used for the output mechanism, with equation~\ref{eqn:privacy_accounting}.
\DEL{\[
\epsilon_{dx}+\epsilon_{dy}+\epsilon_{cy}+\epsilon_{pt}
\leq \epsilon.
\]}
Reconstruction and optimization operate on the released parameters and require no further upload of the raw blocks.

\noindent \textbf{Optimization.} Using the reconstructed distance and loss blocks, the server identifies the entries handled by the WEM structure and the entries optimized in the local LPs. Its Benders master problem chooses the shared WEM weights and an estimated cost for each user. \emph{Subproblem~1} checks the master solution against the constraints for $\mathcal{N}_1$; \emph{Subproblem~2} does the same for $\mathcal{N}_2$. An infeasible or underestimated subproblem produces a cut, and the server resolves the master problem until the stopping criterion is met. The result comprises two row-stochastic blocks, $\mathbf{Z}^{*}_{\mathcal{N}_1\times\mathcal{Y}}$ and $\mathbf{Z}^{*}_{\mathcal{N}_2\times\mathcal{Y}}$, each of size $3\times3$. For example, the constraint comparing $x_1$ and $x_2$ at output $y_1$ is
\[
z_{1,1}
\leq
\exp\bigl(\epsilon_{\textsf{pt}} \tilde d(x_1,x_2)\bigr) z_{2,1},
\]
with the reverse-direction constraint also included when that pair is enforced.

\noindent \textbf{Local reporting.} The server returns each user's mechanism block with its row order. User~1 privately selects the row corresponding to $x_2$, and User~2 selects the row corresponding to $x_5$. Each user samples one $y_k$ according to the probabilities in that row and sends only the sampled report for the downstream task. Thus, a $3\times3$ block describes possible reports for all three locations in a neighborhood; the row selection itself remains local to the user.}

\section{Performance Evaluation}
\label{sec:exp}
In this section, we empirically evaluate the effectiveness of our proposed approximation-based distributed LP framework\footnote{Artifact is available at \url{https://anonymous.4open.science/r/DistA-LP-D2F0}}. 

\smallskip
\noindent\textbf{Datasets}. 
The geographic boundaries and dataset statistics for each city are summarized below:
\begin{itemize}
    \item \textbf{Rome, Italy}, bounded by (lat = 41.66, lon = 12.24) in the southwest and (lat = 42.10, lon = 12.81) in the northeast, including 43,160 nodes and 89,739 edges.
    \item \textbf{NYC, USA}, bounded by (lat = 40.50, lon = -74.25) in the southwest and (lat = 40.91, lon = -73.70) in the northeast, including 55,353 nodes and 139,638 edges.
    \item \textbf{London, UK}, bounded by (lat = 51.29, lon = -0.51) in the southwest and (lat = 51.69, lon = 0.28) in the northeast, including 12,820 nodes and 299,524 edges.
\end{itemize}

For each city, we do not rely solely on the full metropolitan graph but also construct a family of smaller, axis-aligned subregions designed to probe the robustness of our framework under varying spatial extents and network densities. In Rome, we systematically partition the area spanning longitudes 12.24-12.81 and latitudes 41.66-42.10 into a set of contiguous rectangular windows, yielding induced subgraphs with approximately 3,000 to 15,000 nodes. These regions are concentrated around the urban core, where the underlying road network is dense and highly irregular. For New York City, we select several windows within longitudes -74.25 to -73.70 and latitudes 40.50 to 40.91, covering lower and mid-Manhattan, as well as adjacent neighborhoods. This produces subgraphs with approximately 6,000 to 29,000 nodes. Unlike Rome, these subgraphs inherit NYC's strongly anisotropic, grid-like structure. For London, we focus on central districts within longitudes -0.51 to 0.28 and latitudes 51.29 to 51.69, forming six moderately sized rectangles that encompass Westminster, the City of London, and neighboring boroughs. Across all these cities, these carefully chosen subregions provide compact yet representative benchmarks, allowing us to evaluate the effect of our obfuscation and reconstruction methods across heterogeneous spatial environments. 

\smallskip
{\cam \noindent\textbf{Experimental and privacy-auditing settings.} We evaluate overall privacy budgets $\epsilon\in\{4,7,10\}\,\mathrm{km}^{-1}$. For each coefficient type $t\in\{\textsf{dx},\textsf{dy},\textsf{cy}\}$, the upload allocation $\epsilon_t$ is set from the maximum privacy loss observed in its audit; the remaining budget is $\epsilon_{\textsf{pt}} = \epsilon_{\textsf{dx}} + \epsilon_{\textsf{dy}} + \epsilon_{\textsf{cy}}$. We set the audit threshold to $\delta=10^{-4}$ and use $S_{\ell}=100$ audit samples. Each experiment is repeated 10 times, and results are reported as mean $\pm$ standard deviation. Unless otherwise specified, each user has a neighborhood of size $|\mathcal{N}_m|=20$, and the low-rank SVD surrogate uses rank $r=5$.}

\smallskip
\noindent\textbf{Computation resource.} Our tests were conducted on a machine equipped with an Intel Core i9-13900F processor, featuring 24 cores with a base clock speed ranging from 2.00 GHz to a maximum of 5.60 GHz. The system was configured with 32 GB of DDR5 RAM (32 GB, 4800 MHz) and included an NVIDIA GeForce RTX 4090 graphics card with 24 GB of GDDR6X VRAM. We use the MATLAB optimization toolbox ``$\mathsf{linearprog}$'' to solve the LP problems.


\subsection{Matrix Approximation Evaluation}
\label{subsec:exp:approx}
\begin{table*}[t]
\caption{Performance of coefficient approximation across the three datasets (mean~$\pm$~$1.96\times$standard error).}
\centering
\footnotesize
\begin{tabular}{p{2.0cm} p{1.2cm}p{1.6cm}p{1.2cm} p{1.2cm}p{1.6cm}p{1.2cm} p{1.2cm}p{1.2cm}}
\toprule
\multicolumn{1}{l}{}& \multicolumn{3}{c}{Distance matrix $\tilde{\mathbf{A}}_m^{\textsf{dx}}$} & \multicolumn{3}{c}{Distance matrix $\tilde{\mathbf{A}}^d_{\mathcal{N}_m\times \mathcal{Y}}$} & \multicolumn{2}{c}{Utility loss matrix $\tilde{\mathbf{A}}^c_{\mathcal{N}_m\times \mathcal{Y}}$}\\
 \cline{2-9}
\textbf{Methods} & PL & RelErr & ViolRatio & PL & RelErr & ViolRatio & PL & RelErr  \\
\midrule
\multicolumn{9}{c }{Rome road map}\\
\hline
2D Gaussian & 1.34±0.93 & 0.60±0.07 & 0.38±0.04 & 1.11±0.66 & 0.50±0.09 & 0.44±0.06 & 1.93±1.91 & 0.60±0.05  \\
\hline
2D Polynomial & 5.36±3.27 & 168.64±75.42 & 0.39±0.14 & 5.59±3.04 & 257.21±219.4 & 0.45±0.16 & 6.42±4.64 & 53.83±49.01  \\
\hline
RBF & 8.02±7.15 & 0.57±0.03 & 0.41±0.03 & 9.26±8.62 & 0.54±0.04 & 0.41±0.04 & 8.06±6.87 & 0.70±0.02  \\
\hline
Low-Rank SVD & 0.95±0.16 & 0.07±0.00 & 0.13±0.01 & 0.76±0.13 & 0.04±0.00 & 0.22±0.01 & 1.25±0.23 & 0.03±0.00  \\
\midrule
\multicolumn{9}{c }{London road map}
\\
\hline
2D Gaussian & 0.96±0.28 & 0.85±0.66 & 0.42±0.07 & 0.67±0.29 & 0.48±0.03 & 0.45±0.06 & 1.00±0.56 & 0.48±0.02  \\
\hline
2D Polynomial & 3.24±2.27 & 220.31±123.4 & 0.43±0.13 & 2.32±1.15 & 189.81±116.53 & 0.45±0.17 & 3.04±0.71 & 161.41±81.89  \\
\hline
RBF & 3.86±0.90 & 0.52±0.01 & 0.43±0.03 & 3.52±1.14 & 0.54±0.02 & 0.38±0.03 & 3.36±1.10 & 0.63±0.01  \\
\hline
Low-Rank SVD & 0.91±0.09 & 0.07±0.00 & 0.13±0.01 & 0.64±0.12 & 0.04±0.00 & 0.20±0.02 & 0.92±0.58 & 0.03±0.00  \\
\midrule
\multicolumn{9}{c }{New York City road map}
\\
\hline
2D Gaussian & 0.52±0.14 & 0.66±0.03 & 0.39±0.02 & 0.51±0.23 & 0.66±0.07 & 0.44±0.08 & 0.54±0.28 & 0.54±0.00  \\
\hline
2D Polynomial & 2.43±0.49 & 253.83±141.4 & 0.42±0.22 & 2.07±0.29 & 219.51±61.59 & 0.50±0.18 & 3.32±2.63 & 48.78±31.61  \\
\hline
RBF & 3.13±2.12 & 0.64±0.03 & 0.45±0.03 & 4.19±1.91 & 0.64±0.03 & 0.42±0.08 & 3.94±2.87 & 0.67±0.02  \\
\hline
Low-Rank SVD & 1.02±0.20 & 0.08±0.01 & 0.09±0.01 & 0.63±0.10 & 0.06±0.01 & 0.15±0.02 & 1.03±0.36 & 0.03±0.00  \\
\bottomrule
\end{tabular}
\label{tab:exp:matrix_upload}
\end{table*}

\smallskip
\noindent \textbf{Comparison of Different Coefficient Approximation Methods.} We evaluated four surrogate families in Table \ref{tab:exp:matrix_upload} and found that \emph{low-rank SVD} offers the best privacy-accuracy trade-off, consistently yielding low reconstruction error and low violation ratios under small privacy budgets. Based on this comparison, our framework adopts the low-rank SVD surrogate as the default approximation model because it provides the best balance between accuracy, scalability, and privacy robustness. Detailed comparative results are reported in \textbf{Section~\ref{subsec:exp:approx}}.
 
\noindent \textbf{Performance metrics}. In this section, we evaluate two key aspects of the approximated matrices,  $\widehat{\mathbf{A}}^d_{\mathcal{N}^2_m}$, $\widehat{\mathbf{A}}^d_{\mathcal{N}_m\times \mathcal{Y}}$, and $\widehat{\mathbf{A}}^c_{\mathcal{N}_m\times \mathcal{Y}}$: 
(1) the \emph{privacy loss} (formally defined in Eq. (\ref{eq:privacyloss}) in \textbf{Definition~\ref{def:PL}}), 
and (2) the approximation accuracy. 

To quantify the overall approximation accuracy, we use the \emph{relative Frobenius norm error (or RelErr for simplicity)} 
between the original matrix $A$ and its approximation $\hat{A}$:
\begin{equation}
\text{RelErr} = \frac{\| A - \hat{A} \|_{F}}{\| A \|_{F}} 
= \frac{\sqrt{\sum_{i=1}^{n} \sum_{j=1}^{m} \left( A_{ij} - \hat{A}_{ij} \right)^{2}}}
{\sqrt{\sum_{i=1}^{n} \sum_{j=1}^{m} A_{ij}^{2}}},    
\end{equation}
where $\|\cdot\|_{F}$ denotes the Frobenius norm. 
This metric measures the overall deviation of the approximation from the original matrix 
as a fraction of the original matrix’s magnitude. 
A lower value indicates a closer match, while a higher value reflects larger deviations. 
Normalization by $\|A\|_{F}$ makes the measure scale-independent, 
allowing fair comparisons across matrices with different magnitudes.

For the distance matrices $\mathbf{A}_m^{\textsf{dx}}$ and $\mathbf{A}^d_{\mathcal{N}_m\times \mathcal{Y}}$, 
approximation accuracy is particularly critical because it directly determines 
the mDP indistinguishability constraints. 
If an approximated distance value $\hat{d}(x_i,x_j)$ exceeds the true distance $d_{x_i,x_j}$, 
the term $e^{\epsilon \hat{d}(x_i,x_j)}$ becomes larger, thereby relaxing the constraint
\begin{equation}
z_{i,k} - e^{\epsilon  d_{x_i,x_j}} z_{j,k} \le 0,
\end{equation}
which can lead to solutions that violate the intended privacy guarantee under the true distances. 
To capture this effect, we define the \emph{violation ratio} as
\begin{equation}
\text{ViolRatio} = \frac{\left| \left\{ (i,j) | \hat{d}(x_i,x_j) > d_{x_i,x_j} \right\} \right|}{n \times n}, 
\end{equation}
where $|\cdot|$ denotes set cardinality and $n \times n$ is the total number of distance entries. 
A lower violation ratio reflects a more conservative approximation that preserves 
or strengthens the original privacy constraints.

\smallskip
\noindent \textbf{Experimental results}. Table~\ref{tab:exp:matrix_upload} compares RelErr, violation ratio, and privacy loss of different approximation methods across the three datasets. The performance patterns observed in Table~\ref{tab:exp:matrix_upload} are consistent with the modeling/complexity trade-offs summarized in Table~\ref{tab:approx_privacy}.
Particularly, \emph{Low-rank SVD} captures the global structure of the distance/utility matrices with only $k(m+n)$ parameters and inherits an implicit denoising effect from truncation, so its parameter perturbation concentrates on a small, well-conditioned set, yielding both low reconstruction error and low violation ratios in Table~\ref{tab:exp:matrix_upload}.
\emph{2D Gaussian} offers a compact, smooth parametric surface; when the underlying matrices are approximately smooth and unimodal within local neighborhoods, its few parameters perturb stably, producing moderate errors and moderate violation ratios (Table~\ref{tab:exp:matrix_upload}), making it a reasonable backup where SVD is impractical.
\emph{RBF} has high expressivity but a much larger parameter set; privacy noise must be spread over many coefficients (and often asymmetrically to curb violations), which amplifies effective perturbation and explains the higher privacy loss and violation ratios observed in Table~\ref{tab:exp:matrix_upload} despite decent fit on certain regions. Finally, \emph{2D Polynomial} surrogates can be numerically unstable and poorly capture non-polynomial curvature, especially near boundaries; under perturbation, these instabilities magnify, leading to the very large relative errors and no compensating safety gains seen in Table~\ref{tab:exp:matrix_upload}.
Overall, Table~1’s complexity/regularity arguments justify the Table~\ref{tab:exp:matrix_upload} outcome: SVD provides the best privacy-utility-safety balance, with Gaussian serving as a compact and robust runner-up, while RBF and Polynomial methods are dominated once privacy noise is accounted for.

\begin{table*}[t]
\caption{Violation ratio (Mean$\pm$1.96$\times$standard error).}
\centering
\scriptsize 
\begin{tabular}{p{1.80cm} |  p{0.80cm} p{1.10cm} p{1.10cm}  p{1.10cm} | p{1.10cm} p{1.10cm} p{1.10cm} | p{1.10cm} p{1.10cm} p{1.10cm}} 
\toprule
\multicolumn{2}{c }{Number of records} & \multicolumn{3}{c }{2,000} & \multicolumn{3}{c }{4,000}& \multicolumn{3}{c }{6,000}\\
\midrule
\multicolumn{2}{c }{Privacy budget (km$^{-1}$)} & $\epsilon = 4.0$& $\epsilon = 7.0$& $\epsilon = 10.0$ & $\epsilon = 4.0$& $\epsilon = 7.0$& $\epsilon = 10.0$& $\epsilon = 4.0$& $\epsilon = 7.0$& $\epsilon = 10.0$\\
\midrule
\multicolumn{11}{c }{Rome road map}\\
\hline
Pre-defined& EM & 0.00±0.00 & 0.00±0.00 & 0.00±0.00 & 0.00±0.00 & 0.00±0.00 & 0.00±0.00 & 0.00±0.00 & 0.00±0.00 & 0.00±0.00 \\
\cline{2-2}
Noise Distribution & Laplace  & 0.00±0.00 & 0.00±0.00 & 0.00±0.00 & 0.00±0.00 & 0.00±0.00 & 0.00±0.00 & 0.00±0.00 & 0.00±0.00 & 0.00±0.00 \\
\cline{1-2}
Hybrid  & RMP & 0.00±0.00 & 0.00±0.00 & 0.00±0.00 & 0.00±0.00 & 0.00±0.00 & 0.00±0.00 & 0.00±0.00 & 0.00±0.00 & 0.00±0.00 \\
\cline{2-2}
Method &  COPT  & ------------- & ------------- & ------------- & ------------- & ------------- & ------------- & ------------- & ------------- & ------------- \\
\cline{1-2}
Optimization &  LP  & ------------- & ------------- & ------------- & ------------- & ------------- & ------------- & ------------- & ------------- & ------------- \\
\cline{2-2}
Based & LP-A & 0.59±0.04 & 0.59±0.04 & 0.59±0.0436 & 0.57±0.04 & 0.57±0.04 & 0.57±0.04 & 0.57±0.04 & 0.57±0.04 & 0.57±0.04 \\
\cline{2-2} 
Methods & PAnDA & 0.00±0.00 & 0.00±0.00 & 0.00±0.00 & 0.00±0.00 & 0.00±0.00 & 0.00±0.00 & 0.00±0.00 & 0.00±0.00 & 0.00±0.00 \\
\hline
\rowcolor{lightgray!30}
\multicolumn{2}{c}{\textsc{DistA-LP}$^\dagger$} & 0.00$\pm$0.00 & 0.00$\pm$0.00 & 0.00$\pm$0.00 & 0.00$\pm$0.00 & 0.00$\pm$0.00 & 0.00$\pm$0.00 & 0.00$\pm$0.00 & 0.00$\pm$0.00 & 0.00$\pm$0.00 \\
\midrule
\multicolumn{11}{c }{London road map}\\
 \hline
Pre-defined& EM & 0.00±0.00 & 0.00±0.00 & 0.00±0.00 & 0.00±0.00 & 0.00±0.00 & 0.00±0.00 & 0.00±0.00 & 0.00±0.00 & 0.00±0.00 \\
\cline{2-2}
Noise Distribution & Laplace  & 0.00±0.00 & 0.00±0.00 & 0.00±0.00 & 0.00±0.00 & 0.00±0.00 & 0.00±0.00 & 0.00±0.00 & 0.00±0.00 & 0.00±0.00 \\
\cline{1-2}
Hybrid  & RMP & 0.00±0.00 & 0.00±0.00 & 0.00±0.00 & 0.00±0.00 & 0.00±0.00 & 0.00±0.00 & 0.00±0.00 & 0.00±0.00 & 0.00±0.00  \\
\cline{2-2}
Method &  COPT  & ------------- & ------------- & ------------- & ------------- & ------------- & ------------- & ------------- & ------------- & ------------- \\
\cline{1-2}
Optimization &  LP  & ------------- & ------------- & ------------- & ------------- & ------------- & ------------- & ------------- & ------------- & ------------- \\
\cline{2-2}
Based & LP-A & 0.50±0.03 & 0.50±0.03 & 0.50±0.03 & 0.50±0.03 & 0.50±0.03 & 0.50±0.03 & 0.50±0.03 & 0.50±0.03 & 0.50±0.03 \\
\cline{2-2} 
 Methods & PAnDA & 0.00±0.00 & 0.00±0.00 & 0.00±0.00 & 0.00±0.00 & 0.00±0.00 & 0.00±0.00 & 0.00±0.00 & 0.00±0.00 & 0.00±0.00 \\
\hline
\rowcolor{lightgray!30}
\multicolumn{2}{c}{\textsc{DistA-LP}$^\dagger$} & 0.00$\pm$0.00 & 0.00$\pm$0.00 & 0.00$\pm$0.00 & 0.00$\pm$0.00 & 0.00$\pm$0.00 & 0.00$\pm$0.00 & 0.00$\pm$0.00 & 0.00$\pm$0.00 & 0.00$\pm$0.00 \\
\midrule
\multicolumn{11}{c }{New York City road map}\\
\hline
Pre-defined& EM & 0.00±0.00 & 0.00±0.00 & 0.00±0.00 & 0.00±0.00 & 0.00±0.00 & 0.00±0.00 & 0.00±0.00 & 0.00±0.00 & 0.00±0.00 \\
\cline{2-2}
Noise Distribution & Laplace  & 0.00±0.00 & 0.00±0.00 & 0.00±0.00 & 0.00±0.00 & 0.00±0.00 & 0.00±0.00 & 0.00±0.00 & 0.00±0.00 & 0.00±0.00 \\
\cline{1-2}
Hybrid  & RMP & 0.00±0.00 & 0.00±0.00 & 0.00±0.00 & 0.00±0.00 & 0.00±0.00 & 0.00±0.00 & 0.00±0.00 & 0.00±0.00 & 0.00±0.00 \\
\cline{2-2}
Method &  COPT  & ------------- & ------------- & ------------- & ------------- & ------------- & ------------- & ------------- & ------------- & ------------- \\
\cline{1-2}
Optimization &  LP  & ------------- & ------------- & ------------- & ------------- & ------------- & ------------- & ------------- & ------------- & ------------- \\
\cline{2-2}
Based & LP-A & 0.56±0.01 & 0.56±0.01 & 0.56±0.01 & 0.56±0.01 & 0.56±0.01 & 0.56±0.01 & 0.56±0.01 & 0.56±0.01 & 0.56±0.01 \\
\cline{2-2} 
Methods & PAnDA & 0.00±0.00 & 0.00±0.00 & 0.00±0.00 & 0.00±0.00 & 0.00±0.00 & 0.00±0.00 & 0.00±0.00 & 0.00±0.00 & 0.00±0.00 \\
\hline
\rowcolor{lightgray!30}
\multicolumn{2}{c}{\textsc{DistA-LP}$^\dagger$} &  0.00$\pm$0.00 & 0.00$\pm$0.00 & 0.00$\pm$0.00 & 0.00$\pm$0.00 & 0.00$\pm$0.00 & 0.00$\pm$0.00 & 0.00$\pm$0.00 & 0.00$\pm$0.00 & 0.00$\pm$0.00 \\
\bottomrule
\end{tabular}
\vspace{-0.15in}
\label{tab:violation}
\end{table*}

\subsection{Utility Loss}

We evaluate utility using the expected distance between the true location $\mathbf{x}$ and the reported location $\mathcal{M}(\mathbf{x})$ under the same $\ell_p$-norm. Formally, we compute $\mathbb{E}_{\mathbf{x} \sim p(\mathbf{x}), \mathbf{y} \sim \mathcal{M}(\mathbf{x})}[d_p(\mathbf{x}, \mathbf{y})]$.


We compare our approach against the following representative methods based on mDP:
\begin{itemize}
  \item \textbf{Pre-defined noise distribution}, including \textbf{Exponential Mechanism (EM)} \cite{Chatzikokolakis-PoPETs2015}, which samples $\mathbf{y}$ with $\Pr[\mathcal{M}(\mathbf{x})=\mathbf{y}] \propto \exp\big(-\epsilon d_2(\mathbf{x},\mathbf{y})\big)$, where $d_2$ is the $\ell_2$ distance, and \textbf{Planar Laplace (Laplace)} \cite{Andres-CCS2013}, which achieves geo-indistinguishability by adding 2D Laplace noise, $\Pr[\mathcal{M}(\mathbf{x})=\mathbf{y}] = \frac{\epsilon^2}{2\pi_m}\exp\big(-\epsilon d_2(\mathbf{x},\mathbf{y})\big)$. On discrete grids, Laplace uses an effective $\epsilon' \le \epsilon$ to maintain mDP.
  \item \textbf{Optimization-based methods}, including \textbf{LP} \cite{Bordenabe-CCS2014} and \textbf{PAnDA} \cite{Liu-CCS2025}. LP optimizes a mechanism that minimizes expected loss subject to mDP constraints, which scales poorly over $\mathcal{X}$. PAnDA uses a two-phase framework in which each user selects a small anchor set, allowing the server to solve a compact LP on a reduced domain. We partition the target region into a coarse $8 \times 8$ grid and apply the LP method (low computation time but inaccurate utility loss), denoted as \textbf{LP}. 

  \item \textbf{Hybrid Methods}, including \textbf{COPT} \cite{ImolaUAI2022}, which runs LP on a small, utility-sensitive subset of outputs and uses EM elsewhere—cutting cost while keeping mDP, and \textbf{Bayesian Remapping (RMP)} \cite{Chatzikokolakis-PETS2017}, a post-process of an initial output $\mathbf{y}$ to $\tilde{\mathbf{y}}=\arg\min_{\tilde{\mathbf{y}}\in\mathcal{Y}} \mathbb{E}_{\mathbf{x}\sim \Pr[\mathbf{x}\mid \mathbf{y}]}\left[\mathcal{L}(\mathbf{x},\tilde{\mathbf{y}})\right]$,     where $\Pr[\mathbf{x}\mid \mathbf{y}]$ follows from prior $p_{\mathbf{x}}$ and mechanism $\mathbf{Z}$. RMP improves utility without weakening mDP.
\end{itemize}


Table~\ref{tab:UL2norm} reports the expected utility loss across perturbation mechanisms on the Rome, London, and New York City road-map datasets, with values scaled by $10{,}000$. Across all privacy budgets $\epsilon\in\{4,7,10\}\mathrm{km}^{-1}$ and dataset sizes ($N=2,000$, $4,000$, $6,000$), \textsc{DistA-LP} consistently achieves the lowest loss with narrow confidence bands, outperforming predefined-noise baselines (EM, Laplace) and optimization/hybrid baselines (RMP, LP, PAnDA). 

Averaged over the three road-network datasets and grid sizes, DistA-LP reduces expected utility loss by 50\% compared to EM/Laplace and centralized LP, and achieves about a 49\% reduction relative to PAnDA. Compared to predefined-noise mechanisms (EM, Laplace), which calibrate noise only to the perturbation magnitude and treat all directions in the metric space isotropically, DistA-LP explicitly optimizes a utility-loss objective over the discretized domain, so it can allocate probability mass toward utility-favorable outputs while still satisfying mDP constraints.

Relative to hybrid methods such as COPT and RMP, which either optimize only on a small subset of outputs or improve utility through a post-processing remap of an existing mechanism, DistA-LP optimizes the full perturbation matrix jointly via the WEM+LP formulation, allowing it to adjust all entries coherently instead of partially correcting a predefined mechanism. Finally, compared to centralized LP optimization, DistA-LP leverages surrogate-based coefficient approximation to obtain conservative yet tight distance/utility matrices that exhibit near-zero violation ratios (as reported in Table \ref{tab:violation}) in the Appendix; this avoids the over-conservative noise that arises when centralized LP accumulates large distance overestimates, thereby yielding lower distortion at similar privacy budgets.



\begin{table*}[t]
\caption{Utility loss across different perturbation methods (10,000 meters) \newline Mean$\pm$1.96$\times$standard error; the algorithm without getting the results, we label its utility by ``--------''.}
\centering
\scriptsize 
\begin{tabular}{p{1.80cm} |  p{0.80cm} p{1.10cm} p{1.10cm}  p{1.10cm} | p{1.10cm} p{1.10cm} p{1.10cm} | p{1.10cm} p{1.10cm} p{1.10cm}} 
\toprule
\multicolumn{2}{c }{Number of records} & \multicolumn{3}{c }{2,000} & \multicolumn{3}{c }{4,000}& \multicolumn{3}{c }{6,000}\\
\midrule
\multicolumn{2}{c }{Privacy budget (km$^{-1}$)} & $\epsilon = 4.0$& $\epsilon = 7.0$& $\epsilon = 10.0$ & $\epsilon = 4.0$& $\epsilon = 7.0$& $\epsilon = 10.0$& $\epsilon = 4.0$& $\epsilon = 7.0$& $\epsilon = 10.0$\\
\midrule
\multicolumn{11}{c }{Rome road map}\\
\hline
Pre-defined& EM & 5.63±0.95 & 5.60±0.95 & 5.57±0.95 & 7.16±2.53 & 7.12±2.51 & 7.10±2.50 & 7.98±3.35 & 7.96±3.34 & 7.94±3.34 \\
\cline{2-2}
Noise Distribution & Laplace  & 5.68±0.95 & 5.68±0.95 & 5.68±0.95 & 7.30±2.66 & 7.30±2.66 & 7.30±2.66 & 8.01±3.35 & 8.01±3.35 & 8.01±3.35 \\
\cline{1-2}
Hybrid  & RMP & 5.30±0.91 & 5.26±0.91 & 5.23±0.92 & 6.81±2.54 & 6.79±2.54 & 6.78±2.54 & 7.73±3.37 & 7.71±3.36 & 7.69±3.35 \\
\cline{2-2}
Method &  COPT  & ------------- & ------------- & ------------- & ------------- & ------------- & ------------- & ------------- & ------------- & ------------- \\
\cline{1-2}
Optimization &  LP  & ------------- & ------------- & ------------- & ------------- & ------------- & ------------- & ------------- & ------------- & ------------- \\
\cline{2-2}
Based & LP-A & 5.19±0.91 & 5.15±0.91 & 5.14±0.91 & 6.76±2.55 & 6.73±2.55 & 6.72±2.55 & 7.63±3.32 & 7.60±3.32 & 7.60±3.32 \\
\cline{2-2} 
Methods & PAnDA & 4.95±1.03 & 4.95±1.03 & 4.95±1.03 & 6.43±2.60 & 6.43±2.60 & 6.43±2.60 & 7.39±3.25 & 7.39±3.25 & 7.39±3.25 \\
\hline
\rowcolor{lightgray!30}
\multicolumn{2}{c}{\textsc{DistA-LP}$^\dagger$} & 3.47$\pm$0.24 & 3.47$\pm$0.24 & 3.47$\pm$0.24 & 2.88$\pm$0.21 & 2.88$\pm$0.21 & 2.88$\pm$0.21 & 3.14$\pm$0.26 & 3.14$\pm$0.26 & 3.14$\pm$0.26 \\
\midrule
\multicolumn{11}{c }{London road map}\\
 \hline
Pre-defined& EM & 6.61±2.03 & 6.54±2.04 & 6.50±2.05 & 6.57±1.85 & 6.53±1.84 & 6.50±1.84 & 6.47±1.90 & 6.44±1.91 & 6.42±1.91 \\
\cline{2-2}
Noise Distribution & Laplace  & 6.74±2.00 & 6.74±2.01 & 6.74±2.01 & 6.66±1.86 & 6.66±1.86 & 6.66±1.86 & 6.51±1.90 & 6.51±1.90 & 6.51±1.90 \\
\cline{1-2}
Hybrid  & RMP & 6.22±1.99 & 6.17±2.00 & 6.14±2.01 & 6.30±1.87 & 6.27±1.87 & 6.25±1.87 & 6.22±1.92 & 6.20±1.92 & 6.18±1.92  \\
\cline{2-2}
Method &  COPT  & ------------- & ------------- & ------------- & ------------- & ------------- & ------------- & ------------- & ------------- & ------------- \\
\cline{1-2}
Optimization &  LP  & ------------- & ------------- & ------------- & ------------- & ------------- & ------------- & ------------- & ------------- & ------------- \\
\cline{2-2}
Based & LP-A & 6.13±2.01 & 6.08±2.02 & 6.07±2.02 & 6.22±1.88 & 6.18±1.88 & 6.17±1.88 & 6.14±1.92 & 6.11±1.92 & 6.10±1.92 \\
\cline{2-2} 
Methods & PAnDA & 6.03±2.17 & 6.02±2.17 & 6.02±2.17 & 5.93±1.80 & 5.93±1.80 & 5.93±1.80 & 5.88±1.99 & 5.88±1.99 & 5.88±1.99 \\
\hline
\rowcolor{lightgray!30}
\multicolumn{2}{c}{\textsc{DistA-LP}$^\dagger$} & 2.74$\pm$0.20 & 2.74$\pm$0.21 & 2.74$\pm$0.20 & 3.79$\pm$0.08 & 3.79$\pm$0.08 & 3.79$\pm$0.08 & 3.99$\pm$0.11 & 3.99$\pm$0.11 & 3.99$\pm$0.11 \\
\midrule
\multicolumn{11}{c }{New York City road map}\\
\hline
Pre-defined& EM & 5.64±0.91 & 5.58±0.92 & 5.54±0.93 & 5.66±0.85 & 5.62±0.85 & 5.59±0.85 & 4.98±0.81 & 4.93±0.79 & 4.91±0.79 \\
\cline{2-2}
Noise Distribution & Laplace  & 5.80±0.92 & 5.80±0.92 & 5.80±0.92 & 5.76±0.87 & 5.77±0.87 & 5.77±0.87 & 5.14±0.87 & 5.14±0.87 & 5.14±0.87 \\
\cline{1-2}
Hybrid  & RMP & 5.22±0.96 & 5.18±0.95 & 5.15±0.95 & 5.36±0.84 & 5.34±0.84 & 5.32±0.84 & 4.74±0.75 & 4.72±0.75 & 4.70±0.74 \\
\cline{2-2}
Method &  COPT  & ------------- & ------------- & ------------- & ------------- & ------------- & ------------- & ------------- & ------------- & ------------- \\
\cline{1-2}
Optimization &  LP  & ------------- & ------------- & ------------- & ------------- & ------------- & ------------- & ------------- & ------------- & ------------- \\
\cline{2-2}
Based & LP-A & 5.12±0.92 & 5.05±0.92 & 5.04±0.92 & 5.25±0.83 & 5.22±0.84 & 5.21±0.84 & 4.66±0.74 & 4.62±0.73 & 4.61±0.73 \\
\cline{2-2} 
Methods & PAnDA & 5.12±1.20 & 5.12±1.20 & 5.12±1.21 & 4.78±0.73 & 4.78±0.73 & 4.78±0.73 & 4.89±0.98 & 4.89±0.98 & 4.89±0.98 \\
\hline
\rowcolor{lightgray!30}
\multicolumn{2}{c}{\textsc{DistA-LP}$^\dagger$} &  1.60$\pm$0.09 & 1.58$\pm$0.06 & 1.54$\pm$0.03 & 2.25$\pm$0.16 & 2.25$\pm$0.17 & 2.25$\pm$0.15 & 2.47$\pm$0.09 & 2.46$\pm$0.11 & 2.45$\pm$0.09 \\
\bottomrule
    \end{tabular}
\vspace{-0.25in}
\label{tab:UL2norm}
\end{table*}

\subsection{mDP Violation Ratio}
Table~\ref{tab:violation} reports the empirical mDP violation ratio, defined as the fraction of tested mDP constraints whose privacy loss exceeds the target budget $\epsilon$. A lower violation ratio indicates stronger empirical compliance with the required mDP constraints. Across all three road-network datasets, privacy budgets, and domain sizes, \textsc{DistA-LP} achieves $0.00\pm0.00$ violation ratio, matching the predefined mechanisms and post-processing baselines that preserve mDP by construction. In contrast, LP-A exhibits substantial violation ratios, around $0.50$--$0.59$, because its coarse approximation can relax distance-based mDP constraints and produce mechanisms that violate the intended privacy bound. These results show that the proposed surrogate-based approximation and conservative reconstruction steps preserve empirical mDP validity while enabling scalable optimization.

\subsection{Computation Efficiency}

Table~\ref{tab:time} reports the end-to-end computation time for each perturbation mechanism across privacy budgets $\epsilon \in \{4,7,10\}\mathrm{km}^{-1}$ and dataset sizes ($|\mathcal{X}|=2,000$, $4,000$, $6,000$) on the road-network datasets. Predefined-noise baselines (EM, Laplace) are extremely fast ($\le 0.05$ time units) and essentially insensitive to $\epsilon$ or $|\mathcal{X}|$, since they avoid solving large optimization problems. Hybrid methods (RMP, PAnDA) incur modest overhead (roughly $0.03\text{--}0.6$ seconds), remaining lightweight compared to full LP. In contrast, centralized LP/COPT often reach the 1,800-second timeout and, when they do finish, exhibit a steep growth in runtime---from about $15.5\text{--}18.0$ seconds at $|\mathcal{X}| = 2{,}000$ to $\approx 30.2\text{--}37.7$ seconds at $|\mathcal{X}| = 4{,}000$ and $\approx 45.1\text{--}46.9$ at $|\mathcal{X}| = 6{,}000$, reflecting their poor scalability. DistA-LP lies between these extremes: it is somewhat more expensive than purely closed-form or hybrid baselines but remains in a practical range and shows substantially more stable scaling than centralized LP. Overall, the table illustrates that DistA-LP achieves a favorable trade-off between computation cost and utility, dramatically improving scalability over centralized LP-based mechanisms while retaining their optimization benefits.

\begin{table*}[t]
\caption{Computation time (Seconds) Mean$\pm$1.96$\times$standard error. }
\centering
\scriptsize 
\begin{tabular}{p{1.80cm} |  p{0.80cm} p{1.10cm} p{1.10cm}  p{1.10cm} | p{1.10cm} p{1.10cm} p{1.10cm} | p{1.10cm} p{1.10cm} p{1.10cm}} 
\toprule
\multicolumn{2}{c }{Number of records} & \multicolumn{3}{c }{2,000} & \multicolumn{3}{c }{4,000}& \multicolumn{3}{c }{6,000}\\
\midrule
\multicolumn{2}{c }{Privacy budget (km$^{-1}$)} & $\epsilon = 4.0$& $\epsilon = 7.0$& $\epsilon = 10.0$ & $\epsilon = 4.0$& $\epsilon = 7.0$& $\epsilon = 10.0$& $\epsilon = 4.0$& $\epsilon = 7.0$& $\epsilon = 10.0$\\
\midrule
\multicolumn{11}{c }{Rome road map}\\
\hline
Pre-defined& EM & $\leq 0.05$ & $\leq 0.05$ & $\leq 0.05$ & $\leq 0.05$ & $\leq 0.05$ & $\leq 0.05$ & $\leq 0.05$ & $\leq 0.05$ & $\leq 0.05$ \\
\cline{2-2}
Noise Distribution & Laplace  & $\leq 0.05$ & $\leq 0.05$ & $\leq 0.05$ & $\leq 0.05$ & $\leq 0.05$ & $\leq 0.05$ & $\leq 0.05$ & $\leq 0.05$ & $\leq 0.05$ \\
\cline{2-2}
Hybrid  & RMP & $\leq 0.05$ & $\leq 0.05$ & $\leq 0.05$ & 0.25±0.00 & 0.26±0.01 & 0.27±0.01 & 0.45±0.00 & 0.45±0.00 & 0.58±0.04 \\
\cline{2-2}
Method &  COPT  & $\geq 1,800$ & $\geq 1,800$ & $\geq 1,800$ & $\geq 1,800$ & $\geq 1,800$ & $\geq 1,800$ & $\geq 1,800$ & $\geq 1,800$ & $\geq 1,800$ \\
\cline{1-2}
Optimization &  LP  & $\geq 1,800$ & $\geq 1,800$ & $\geq 1,800$ & $\geq 1,800$ & $\geq 1,800$ & $\geq 1,800$ & $\geq 1,800$ & $\geq 1,800$ & $\geq 1,800$ \\
\cline{2-2}
Based & LP-A & 11.27±0.12 & 11.79±0.86 & 13.69±0.54 & 21.99±0.29 & 23.03±0.67 & 28.67±0.95 & 32.24±0.41 & 33.33±0.53 & 36.95±1.21 \\
\cline{2-2} 
 Methods & PAnDA & 0.083±0.014 & 0.083±0.016 & 0.082±0.015 & 0.157±0.057 & 0.163±0.070 & 0.161±0.065 & 0.33±0.30 & 0.29±0.22 & 0.29±0.21 \\
\hline
\rowcolor{lightgray!30}
\multicolumn{2}{c}{\textsc{DistA-LP}$^\dagger$} & 2.03$\pm$0.16 & 2.07$\pm$0.20 & 2.15$\pm$0.33 & 4.68$\pm$0.36 & 4.55$\pm$0.11 & 4.54$\pm$0.19 & 2.75$\pm$0.21 & 2.74$\pm$0.36 & 3.71$\pm$0.52 \\
\midrule
\multicolumn{11}{c }{London road map}\\
 \hline
Pre-defined& EM & $\leq 0.05$ & $\leq 0.05$ & $\leq 0.05$ & $\leq 0.05$ & $\leq 0.05$ & $\leq 0.05$ & $\leq 0.05$ & $\leq 0.05$ & $\leq 0.05$ \\
\cline{2-2}
Noise Distribution & Laplace  & $\leq 0.05$ & $\leq 0.05$ & $\leq 0.05$ & $\leq 0.05$ & $\leq 0.05$ & $\leq 0.05$ & $\leq 0.05$ & $\leq 0.05$ & $\leq 0.05$ \\
\cline{1-2}
Hybrid  & RMP & 0.03±0.00 & 0.03±0.00 & 0.03±0.00 & 0.20±0.01 & 0.18±0.00 & 0.22±0.01 & 0.47±0.04 & 0.58±0.07 & 0.55±0.03  \\
\cline{2-2}
Method &  COPT  & $\geq 1,800$ & $\geq 1,800$ & $\geq 1,800$ & $\geq 1,800$ & $\geq 1,800$ & $\geq 1,800$ & $\geq 1,800$ & $\geq 1,800$ & $\geq 1,800$ \\
\cline{1-2}
Optimization &  LP  & $\geq 1,800$ & $\geq 1,800$ & $\geq 1,800$ & $\geq 1,800$ & $\geq 1,800$ & $\geq 1,800$ & $\geq 1,800$ & $\geq 1,800$ & $\geq 1,800$ \\
\cline{2-2}
Based & LP-A & 41.8±0.51 & 42.9±1.99 & 42.1±0.52 & 83.9±2.06 & 80.7±0.94 & 87.2±2.87 & 127.6±9.8 & 145.8±13.7 & 144.1±2.6 \\
\cline{2-2} 
 Methods & PAnDA & 0.09±0.01 & 0.09±0.00 & 0.09±0.01 & 0.14±0.02 & 0.14±0.02 & 0.15±0.03 & 0.24±0.04 & 0.24±0.06 & 0.24±0.03 \\
\hline
\rowcolor{lightgray!30}
\multicolumn{2}{c}{\textsc{DistA-LP}$^\dagger$} & 2.08$\pm$0.05 & 3.89$\pm$2.76 & 3.21$\pm$0.12 & 0.51$\pm$0.06 & 0.77$\pm$0.15 & 0.67$\pm$0.01 & 2.22$\pm$0.23 & 2.31$\pm$0.37 & 2.23$\pm$0.28 \\
\midrule
\multicolumn{11}{c }{New York City road map}\\
\hline
Pre-defined& EM & $\leq 0.05$ & $\leq 0.05$ & $\leq 0.05$ & $\leq 0.05$ & $\leq 0.05$ & $\leq 0.05$ & $\leq 0.05$ & $\leq 0.05$ & $\leq 0.05$ \\
\cline{2-2}
Noise Distribution & Laplace  & $\leq 0.05$ & $\leq 0.05$ & $\leq 0.05$ & $\leq 0.05$ & $\leq 0.05$ & $\leq 0.05$ & $\leq 0.05$ & $\leq 0.05$ & $\leq 0.05$ \\
\cline{1-2}
Hybrid  & RMP & $\leq 0.05$ & $\leq 0.05$ & $\leq 0.05$ & 0.18±0.00 & 0.23±0.01 & 0.22±0.01 & 0.42±0.01 & 0.45±0.00 & 0.53±0.01 \\
\cline{2-2}
Method &  COPT  & $\geq 1,800$ & $\geq 1,800$ & $\geq 1,800$ & $\geq 1,800$ & $\geq 1,800$ & $\geq 1,800$ & $\geq 1,800$ & $\geq 1,800$ & $\geq 1,800$ \\
\cline{1-2}
Optimization &  LP  & $\geq 1,800$ & $\geq 1,800$ & $\geq 1,800$ & $\geq 1,800$ & $\geq 1,800$ & $\geq 1,800$ & $\geq 1,800$ & $\geq 1,800$ & $\geq 1,800$ \\
\cline{2-2}
Based & LP-A & 15.52±0.50 & 15.84±0.52 & 18.03±1.13 & 30.19±0.52 & 37.70±1.82 & 33.62±0.39 & 45.09±1.23 & 46.00±1.05 & 46.92±1.22 \\
\cline{2-2} 
Methods & PAnDA & 0.089±0.015 & 0.090±0.022 & 0.089±0.018 & 0.13±0.017 & 0.14±0.0063 & 0.15±0.021 & 0.24±0.078 & 0.23±0.073 & 0.23±0.084 \\
\hline
\rowcolor{lightgray!30}
\multicolumn{2}{c}{\textsc{DistA-LP}$^\dagger$} &  0.92$\pm$0.02 & 0.94$\pm$0.05 & 0.93$\pm$0.05 & 18.28$\pm$0.07 & 18.57$\pm$0.18 & 18.70$\pm$0.18 & 2.77$\pm$0.31 & 2.81$\pm$0.37 & 2.80$\pm$0.45 \\
\bottomrule
    \end{tabular}
\label{tab:time}
\vspace{-0.20in}
\end{table*}

\section{Related Work}
\label{sec:related}

Since the introduction of mDP by \cite{Chatzikokolakis-PETS2013}, substantial research efforts have focused on developing data perturbation mechanisms. These mechanisms generally fall into two categories: \emph{predefined noise distribution mechanisms} and \emph{optimization-based mechanisms}.

\smallskip
\noindent \textbf{Predefined Noise Distribution Mechanisms.} 
The Laplace mechanism is a classic perturbation mechanism, originally proposed for DP~\cite{Dwork-TC2006} and later adapted to mDP by Chatzikokolakis et al.~\cite{Chatzikokolakis-PETS2013}. Instead of scaling noise to global sensitivity, it uses the metric distance $d_{\mathcal{X}}$ between records and sets the noise scale to $\frac{d_{\mathcal{X}}}{\epsilon}$, so that outputs for nearby records remain statistically indistinguishable.
This adaptation underlies Geo-Indistinguishability~\cite{Andres-CCS2013} for location privacy: early work focused on sporadic locations~\cite{Chatzikokolakis-PETS2013, Andres-CCS2013}, while later extensions considered continuous tracking, real-time trajectories, and personalized budgets~\cite{chatzikokolakis2014predictive, hua2017geo, yu2023privacy, min2023personalized}. Despite its efficiency~\cite{Carvalho2021TEMHU}, additive Laplace noise is poorly suited to structured domains (e.g., graphs, embeddings), where it can distort structure or ignore density. For word embeddings, perturbation may produce invalid vectors, requiring nearest-neighbor projection~\cite{fernandes2018author, Feyisetan-ICDM2019}, which in turn assumes the embedding space itself is non-sensitive and can introduce privacy risks during neighbor retrieval.  \looseness = -1

EM~\cite{Carvalho2021TEMHU, ImolaUAI2022} addresses these limitations by selecting outputs from discrete sets via utility-driven quality scores, bypassing continuous perturbation. This proves particularly effective in location privacy (trajectory protection~\cite{Chatzikokolakis-PoPETs2015, Yu-NDSS2017}; spatial crowdsourcing~\cite{gursoy2019secure, niu2020eclipse, ren2022distpreserv, dong2019preserving}), textual data (word substitution~\cite{wang2017local, Feyisetan-WDSM2020}; embeddings~\cite{Carvalho2021TEMHU, meisenbacher20241}), and graph data (edge preservation~\cite{raskhodnikova2016lipschitz}; query responses~\cite{fioretto2019privacy, kamalaruban2020not}).  
While EM offers greater flexibility and often superior utility compared to the Laplace mechanism, it bases selection on perturbation magnitude $d_{x,y}$, which may not fully capture directional variations in utility loss across the output space. Therefore, many recent efforts have prioritized \emph{optimization-based mechanisms}, which minimize data utility loss by explicitly modeling both the magnitude and directional impact of each candidate perturbation.

\smallskip
\noindent \textbf{Optimization (or LP)-based Mechanisms.}  
Since the seminal work by Bordenabe et al.~\cite{Bordenabe-CCS2014}, which introduced the use of LP to optimize perturbation mechanisms under mDP, significant effort has been devoted to developing LP-based mechanisms. These approaches generally aim to minimize utility loss by explicitly considering both the magnitude and direction of the loss for each possible perturbation. Such mechanisms are typically formalized as stochastic matrices, with the LP objective minimizing the expected utility loss subject to mDP constraints. However, these formulations involve $O(N^2)$ decision variables, making them theoretically optimal but computationally prohibitive for large domains. Many existing implementations (e.g., \cite{Bordenabe-CCS2014, Wang-ICDM2016, Yu-NDSS2017, Wang-WWW2017}) are limited to relatively small domains with $N \leq 100$. Although Bordenabe et al.~\cite{Bordenabe-CCS2014} proposed using spanners to approximate $d_{\mathcal{X}}$ via shortest path distances in a graph, the scalability remains limited, for example, solving the LP takes over 1,800 seconds when $N = 400$ \cite{ImolaUAI2022}.

Recent research has prioritized improving the scalability of LP-based methods. Ahuja et al.~\cite{AhujaEDBT2019} proposed a hierarchical index structure to protect location data under Geo-Indistinguishability (Geo-Ind), while others leveraged decomposition techniques, such as Dantzig-Wolfe~\cite{Qiu-TMC2022,Qiu-CIKM2020} and Benders decomposition~\cite{Qiu-IJCAI2024}, to partition large-scale LP problems into tractable subproblems. Hybrid frameworks combining LP with probabilistic mechanisms have also emerged, adapting noise using local utility/sensitivity. For example, Imola et al.~\cite{ImolaUAI2022} integrated the weighted EM with selective LP optimization on sensitive inputs. Despite these advances, these state-of-the-art methods still only handle domains with $N \leq 1{,}000$. 
A more recent work by Qiu et al.~\cite{Qiu-PETS2025} introduces a locality-aware optimization approach, restricting computation to spatial neighborhoods. While scaling to $N = 1{,}600$, it does so at the cost of strict compliance with mDP, as the use of neighborhood-specific parameters introduces potential information leakage. 

\vspace{-0.05in}
\section{Conclusions}
\label{sec:conclusion}
\vspace{-0.05in}

We introduced \textsc{DistA-LP}, a privacy-preserving distributed LP framework for scalable mDP optimization with explicit upload-stage privacy accounting. \textsc{DistA-LP} replaces raw user-specific LP coefficients with noisy surrogate parameters, composes upload leakage with the downstream perturbation mechanism for end-to-end $(\epsilon,\delta)$ guarantees, and uses privacy-free post-processing plus Benders-style decomposition for scalable, geometrically consistent optimization. Experiments on Rome, London, and NYC road-network datasets show that \textsc{DistA-LP} achieves lower utility loss, near-zero reported mDP violations, and substantially faster runtime than monolithic LP and hybrid baselines. These results demonstrate that surrogate parameterization and probabilistic auditing can protect upload-stage privacy without sacrificing utility or formal guarantees.

Promising avenues include deriving tighter analytic bounds for asymmetric parameter noise to complement the PmDP auditor; exploring richer surrogate families (e.g., kernelized or learned surrogates) with privacy-aware regularization; extending the framework beyond geospatial domains to text, graphs, and multimodal metrics; handling adversarial or colluding participants in stronger threat models; and integrating adaptive privacy budgeting across time or tasks. Finally, we believe \textsc{DistA-LP} provides a principled foundation for scalable, auditable mDP optimization in distributed settings.


\section*{Ethical Considerations}
We reviewed the venue's ethics guidance, submission instructions, and ethics-document requirements. This work was conducted as a simulation-based study using public infrastructure-level data, and we believe the research process poses minimal direct risk to individuals. At the same time, because the paper concerns privacy mechanisms for sensitive records, especially location data, we explicitly consider potential downstream risks and limitations.

\paragraph{Stakeholders}
The primary stakeholders include: (i) researchers studying metric differential privacy (mDP), distributed optimization, and privacy auditing; (ii) practitioners and platform operators who may deploy mDP mechanisms in navigation, mobility analytics, spatial crowdsourcing, or related services; (iii) end users and data subjects whose locations or other sensitive records may be protected by such mechanisms; (iv) data and infrastructure providers such as OpenStreetMap contributors; and (v) society at large, including communities that may face heightened risks from location tracking, surveillance, or privacy-invasive analytics. The research process itself has limited direct impact on these stakeholders, while publication and deployment may influence how distributed mDP mechanisms are designed, audited, and used in practice.

\paragraph{Impacts of the research process and publication}
Regarding the \emph{research process}, our evaluation uses publicly available OpenStreetMap road-network graphs and synthetic perturbation experiments. We do not collect, process, or attempt to infer personal or identifiable user-level trajectories, and we do not interact with live users or deployed systems. Thus, the risk of direct tangible harms or rights-based harms to individuals during the research process is minimal.

Regarding the \emph{publication and potential deployment} of the results, the intended impact is positive. DISTA-LP is designed to make LP-based mDP optimization more scalable while explicitly accounting for privacy leakage from distributed coefficient uploads. In particular, the framework replaces raw user-specific LP coefficient blocks with perturbed surrogate parameters, uses probabilistic mDP auditing to quantify upload-stage leakage, and composes this leakage with the downstream perturbation mechanism under an end-to-end privacy budget. These contributions can benefit researchers by clarifying a previously under-accounted leakage channel, practitioners by providing an auditable design pattern for distributed mDP optimization, and data subjects by strengthening protections in settings where local distance or utility structures may themselves reveal sensitive information.

At the same time, several risks remain. First, the framework could be misused to justify weak privacy configurations, such as overly large $\epsilon$ or $\delta$ values, poorly chosen metrics, or insufficient sampling in the PmDP auditor. Second, practitioners may over-interpret the empirical guarantees outside the stated assumptions, such as applying the method under stronger adversaries, colluding users, or domains whose metric does not reflect meaningful privacy harms. Third, because the paper highlights leakage from coefficient uploads, careless readers might focus on this vulnerability without adopting the proposed mitigation. These risks are particularly important for high-risk location-based applications and for populations that may be disproportionately affected by surveillance or privacy failures.

\paragraph{Mitigations and residual risks}
To mitigate these risks, we clearly document the threat model, including the honest-but-curious server assumption, honest users, and the exclusion of collusion from the main analysis. We also distinguish exact mDP from probabilistic mDP and emphasize that the upload-stage privacy budget must be explicitly allocated and composed with the downstream perturbation budget. We recommend that practitioners use conservative privacy budgets, validate metric choices with respect to the relevant application and affected population, perform adequate privacy auditing before deployment, and avoid deploying the method in high-stakes settings without additional legal, ethical, and domain-specific review. Our released artifacts are limited to code and public or derived experimental data needed for reproducibility; they do not include personal trajectories or sensitive raw user data. Despite these mitigations, residual risk remains that third parties may ignore the assumptions, select weak parameters, or deploy the method in inappropriate contexts.

\paragraph{Decision to conduct and publish the research}
We judged that conducting this research is ethically justified because it uses public infrastructure-level data and synthetic experiments, while addressing an important privacy risk in distributed mDP optimization: leakage from user-specific coefficient uploads. Publishing the work provides a net benefit by enabling community scrutiny, improving reproducibility, and giving practitioners concrete guidance for designing and auditing privacy-preserving distributed LP mechanisms. We believe that making the assumptions, limitations, and safeguards explicit is preferable to leaving upload-stage leakage unexamined in future distributed privacy systems.






%



\normalsize
\appendix
This appendix provides supplementary details supporting the main paper, including extended derivations, detailed algorithm design, and complete results omitted from the main text due to space constraints. In particular, it presents the mathematical details of the proposed DistA-LP framework, privacy-loss accounting procedures, surrogate approximation methods, and additional empirical analyses used to validate scalability, privacy preservation, and utility performance.

\setcounter{tocdepth}{2}  

\subsection{Mathematical Notations}



\begin{table}[h]
\centering
\caption{Notation summary}
\label{tab:notation}
\small
\renewcommand{\arraystretch}{1.05}
\begin{tabularx}{\columnwidth}{@{}p{0.21\columnwidth} X@{}}
\toprule
\multicolumn{2}{@{}l}{\textbf{Sets \& Indices}} \\
\midrule
$\mathcal{X}$ & Secret/input space; records $x_i \in \mathcal{X}$. \\
$\mathcal{Y}$ & Output/reported space; $y_k \in \mathcal{Y}$. \\
$\mathcal{M}$ & Perturbation mechanism; $\mathcal{M}:\mathcal{X}\to\mathcal{Y}$ \\
$\mathcal{T}$ & Index set of coefficient-block types; $\mathcal{T}=\{\textsf{dx}, \textsf{dy}, \textsf{cy}\}$ \\
$d(\cdot,\cdot)$ & Metric on $\mathcal{X}$; $d_{i,j}=d_{x_i,x_j}$. \\
$M$ & Number of users; $m\in\{1,\dots,M\}$. \\
$\mathcal{N}_m$ & User-$m$ local neighborhood (relevance subset). \\
$\mathcal{H}_m$ & ``Far'' set for $m$ handled via WEM structure. \\
$i,j$ & Record indices in $\mathcal{X}$. \\
$k$ & Output index in $\mathcal{Y}$. \\
\midrule
\addlinespace[2pt]
\multicolumn{2}{@{}l}{\textbf{Matrices, Variables, and Parameters}} \\
\midrule
$\mathbf{Z}\in\mathbb{R}_{\ge0}^{|\mathcal{X}|\times |\mathcal{Y}|}$ & Mechanism matrix; row $i$ is $z_i^\top$ and $\sum_k z_{i,k}=1$. \\
$z_{i,k}$ & $\Pr[y_k \mid x_i]$. \\
$\mathbf{A}_m^{\textsf{dx}}$ & LP coefficients for indistinguishability over $(i,j)\in\mathcal{N}_m^2$. \\
$\mathbf{A}_m^{\textsf{dy}}$ & Distance-based LP coefficients for $(i,k)\in\mathcal{N}_m\times\mathcal{Y}$. \\
$\mathbf{A}_m^{\textsf{cy}}$ & Cost-based LP coefficients for $(i,k)\in\mathcal{N}_m\times\mathcal{Y}$. \\
$\mathbf{A}$ & Generic placeholder for a coefficient matrix above. \\
$f_{\boldsymbol{\Theta}}$ & Coefficient surrogate with parameters $\boldsymbol{\Theta}$. \\
$\boldsymbol{\Theta},\widetilde{\boldsymbol{\Theta}}$ & Unperturbed / perturbed surrogate parameters. \\
$\widetilde{\mathbf{A}}=f(\cdot;\widetilde{\boldsymbol{\Theta}})$ & Reconstructed approximate coefficients on the server. \\
$r$ & Surrogate rank (SVD/low-rank). \\
$\mathbf{U},\boldsymbol{\Sigma},\mathbf{V}$ & Truncated SVD factors: $\mathbf{A}\approx\mathbf{U}\boldsymbol{\Sigma}\mathbf{V}^\top$. \\
$\mathbf{P},\mathbf{Q}$ & Low-rank factors: $\mathbf{A}\approx\mathbf{P}\mathbf{Q}^\top$ (uploads may use $\widetilde{\mathbf{P}},\widetilde{\mathbf{Q}}$). \\
$w$ & WEM weights (master problem). \\
$v_m$ & User-$m$ subproblem value in Benders cuts. \\
\midrule
\addlinespace[2pt]
\multicolumn{2}{@{}l}{\textbf{Objectives, Constraints, and Privacy Metrics}} \\
\midrule
$\mathcal{F}_{\mathrm{LP}}(\cdot)$ & Local LP over $(\mathcal{N}_m,\mathcal{Y})$ with $(\mathbf{D},\mathbf{C})$ and $(\mathbf{A}^d,\mathbf{A}^c)$. \\
$\Omega(\cdot)$ & Feasible region (local/master LP). \\
$\epsilon$ & mDP privacy parameter. \\
$\delta$ & Failure probability for probabilistic mDP (PmDP). \\
$\mathcal{L}_{x_i,x_j}(y)$ & $\ln\frac{\Pr[\mathcal{M}(x_i)=y]}{\Pr[\mathcal{M}(x_j)=y]}\big/ d_{x_i,x_j}$ (privacy loss). \\
$\mathrm{ViolRatio}$ & Fraction of violations of $\epsilon$-mDP. \\
$\mathrm{RelErr}$ & Relative error vs.true coefficients/utilities. \\
\bottomrule
\end{tabularx}
\end{table}

\subsection{Omitted Proofs}

\subsection*{\textbf{Proof of Proposition \ref{prop:composition}}}
\label{subsec:proof:composition}

\begin{reproposition}
\label{thm:composition}
(Sequential composition of PmDP, adapted from
\cite{Liu-CCS2025})
Let $\{\mathcal{M}_t^{\textsf{p}}\}_{t\in\mathcal{T}}$
be randomized mechanisms with conditionally
independent outputs given the input.
If each $\mathcal{M}_t^{\textsf{p}}$ satisfies
$(\epsilon_t,\delta_t)$-PmDP and
$\sum_{t\in\mathcal{T}}\delta_t<1$,
then the joint mechanism
$\mathcal{M}=(\mathcal{M}_t^{\textsf{p}})_{t\in\mathcal{T}}$
satisfies
$(\sum_{t\in\mathcal{T}}\epsilon_t,
\sum_{t\in\mathcal{T}}\delta_t)$-PmDP.
\end{reproposition}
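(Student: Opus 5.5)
We prove the composition by splitting the joint privacy loss into per-mechanism losses and then bounding the probability that any one of them exceeds its budget with a union bound.

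\emph{Step 1: decompose the joint loss.} Fix a pair $x_i\neq x_j$. Let the joint output be $\mathbf{z}=(z_t)_{t\in\mathcal{T}}$, with $z_t$ the output of $\mathcal{M}_t^{\textsf{p}}$. By the assumed conditional independence given the input, the joint density factorizes as $\Pr[\mathcal{M}(x)=\mathbf{z}]=\prod_{t\in\mathcal{T}}\Pr[\mathcal{M}_t^{\textsf{p}}(x)=z_t]$. Substituting into Definition~\ref{def:PL}, the logarithm turns the product into a sum, so the metric-normalized loss splits additively:
\begin{equation*}
\mathrm{PL}_{x_i,x_j}(\mathbf{z})=\sum_{t\in\mathcal{T}}\mathrm{PL}^{(t)}_{x_i,x_j}(z_t).
\end{equation*}
Here $\mathrm{PL}^{(t)}$ is the loss of $\mathcal{M}_t^{\textsf{p}}$ alone. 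The common normalizer $1/d_{x_i,x_j}$ factors out, which makes this step exact. (If the absolute value in Eq.~(\ref{eq:privacyloss}) is kept, the triangle inequality still gives $\le$ in place of $=$, and that is enough.)

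\emph{Step 2: reduce to a union of bad events.} If $\mathrm{PL}^{(t)}_{x_i,x_j}(z_t)\le\epsilon_t$ holds for every $t$, then the joint loss is at most $\sum_t\epsilon_t$. Taking the contrapositive,
\begin{equation*}
\Big\{\mathrm{PL}_{x_i,x_j}(\mathbf{z})>\textstyle\sum_t\epsilon_t\Big\}\subseteq\bigcup_{t\in\mathcal{T}}\big\{\mathrm{PL}^{(t)}_{x_i,x_j}(z_t)>\epsilon_t\big\}.
\end{equation*}
Draw $\mathbf{z}\sim\mathcal{M}(x_i)$. Each coordinate $z_t$ then has marginal law $\mathcal{M}_t^{\textsf{p}}(x_i)$, so the union bound gives
\begin{equation*}
\Pr_{\mathbf{z}}\big[\mathrm{PL}_{x_i,x_j}(\mathbf{z})>\textstyle\sum_t\epsilon_t\big]\le\sum_t\Pr_{z_t\sim\mathcal{M}_t^{\textsf{p}}(x_i)}\big[\mathrm{PL}^{(t)}_{x_i,x_j}(z_t)>\epsilon_t\big].
\end{equation*}

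\emph{Step 3: average over pairs.} Average both sides over all pairs $(x_i,x_j)$ with weight $1/|\mathcal{X}|^2$, as in Definition~\ref{def:PmDP}. Linearity of the average lets us swap it with the sum over $t$. Each resulting term is at most $\delta_t$ by the $(\epsilon_t,\delta_t)$-PmDP hypothesis, so the total violation probability is at most $\sum_t\delta_t$. The assumption $\sum_t\delta_t<1$ only ensures that this value lies in the admissible range $[0,1)$ required by the definition.

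\emph{Main obstacle.} The delicate step is Step 1 on measure-zero or zero-density regions. There a factor can vanish and $\mathrm{PL}^{(t)}$ becomes $+\infty$ or undefined, and the asymmetric, truncated noise makes this a real possibility. I would handle it by using the convention that an infinite component loss counts as a violation of its own event. Such outputs are then already covered by the union bound in Step 2, and outputs where both densities vanish have probability zero under $\mathcal{M}(x_i)$. A secondary point is to state explicitly that the diagonal pairs $x_i=x_j$ contribute zero, or are excluded, since $d_{x_i,x_i}=0$. That matches whatever convention the definition implicitly uses and does not affect the bound.
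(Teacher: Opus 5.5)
Your proposal is correct and follows the paper's proof step for step. Conditional independence factorizes the joint density, so the joint loss is the sum of the component losses. A union bound over the per-mechanism violation events bounds the violation probability for each pair, and averaging over pairs with each $(\epsilon_t,\delta_t)$ guarantee gives $\sum_t\delta_t$. Your extra remarks on the $1/d_{x_i,x_j}$ normalizer, the absolute value, zero-density outputs, and diagonal pairs are careful refinements that the paper leaves implicit.
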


\begin{proof}
Let $\mathbf{z}=(z_t)_{t\in\mathcal{T}}$ denote
the joint output of $\mathcal{M}$.
Since the mechanisms are conditionally
independent given the input, the privacy
loss of the joint mechanism satisfies
\begin{equation}
\begin{aligned}
\mathrm{PL}^{\mathcal{M}}_{x_i,x_j}(\mathbf{z})
&=
\log\frac{\Pr[\mathcal{M}(x_i)=\mathbf{z}]}
{\Pr[\mathcal{M}(x_j)=\mathbf{z}]}\\
&=
\sum_{t\in\mathcal{T}}
\mathrm{PL}^{(t)}_{x_i,x_j}(z_t),
\end{aligned}
\end{equation}
where $\mathrm{PL}^{(t)}_{x_i,x_j}(z_t)$
denotes the privacy loss of
$\mathcal{M}_t^{\textsf{p}}$.

By the union bound, for any input pair
$(x_i,x_j)$, we have
\begin{equation}
\begin{aligned}
&\Pr\left[
\mathrm{PL}^{\mathcal{M}}_{x_i,x_j}(\mathbf{z})
>\sum_{t\in\mathcal{T}}\epsilon_t
\right]\\
&\quad\leq
\sum_{t\in\mathcal{T}}
\Pr\left[
\mathrm{PL}^{(t)}_{x_i,x_j}(z_t)
>\epsilon_t
\right].
\end{aligned}
\end{equation}

Averaging over all input pairs yields
\begin{equation}
\begin{aligned}
&\frac{1}{|\mathcal{X}|^2}
\sum_{x_i,x_j\in\mathcal{X}}
\Pr\left[
\mathrm{PL}^{\mathcal{M}}_{x_i,x_j}(\mathbf{z})
>\sum_{t\in\mathcal{T}}\epsilon_t
\right]\\
&\quad\leq
\sum_{t\in\mathcal{T}}
\frac{1}{|\mathcal{X}|^2}
\sum_{x_i,x_j\in\mathcal{X}}
\Pr\left[
\mathrm{PL}^{(t)}_{x_i,x_j}(z_t)
>\epsilon_t
\right]\\
&\quad\leq
\sum_{t\in\mathcal{T}}\delta_t,
\end{aligned}
\end{equation}
where the last inequality follows from
the $(\epsilon_t,\delta_t)$-PmDP guarantee
of each $\mathcal{M}_t^{\textsf{p}}$.

Therefore, by Definition~\ref{def:PmDP},
$\mathcal{M}$ satisfies
$(\sum_{t\in\mathcal{T}}\epsilon_t,
\sum_{t\in\mathcal{T}}\delta_t)$-PmDP.
\end{proof}

\subsection*{\textbf{Proof of Proposition \ref{prop:sampling} (Concentration Guarantees)}}
\label{subsec:proof:sampling}

\begin{reproposition}[Concentration Guarantee for Probabilistic mDP Sampling] 
\label{reprop:sampling}
 
If the empirical violation rate satisfies $\hat{p}_{\mathcal{S}_{\ell}} = \xi \delta$ for some constant $\xi < 1$, then $\Pr\left[p_{\mathcal{X}_{\ell}^2} \leq \delta\right] \geq 1 - 2\exp(-2S_\ell(1-\xi)^2\delta^2)$. 
\end{reproposition}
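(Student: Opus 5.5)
The plan is to treat the auditor's empirical violation rate as an average of i.i.d.\ Bernoulli indicators and apply Hoeffding's inequality. Concretely, the auditor draws $S_\ell$ independent samples: each sample picks a pair $(x_i,x_j)$ uniformly from $\mathcal{X}_\ell^2$ and then an output $z\sim\mathcal{M}_\ell(x_i)$. It records
\[
B_s=\mathbf{1}\{\mathrm{PL}_{x_i,x_j}(z)>\epsilon\}.
\]
By Definition~\ref{def:PmDP}, each $B_s$ is Bernoulli with mean exactly $p_{\mathcal{X}_\ell^2}$, the averaged violation probability. The empirical rate is $\hat p_{\mathcal{S}_\ell}=\frac{1}{S_\ell}\sum_s B_s$. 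The first step is to state this sampling model explicitly, since the proposition is silent on it. It is also what makes $\hat p_{\mathcal{S}_\ell}$ an unbiased estimator of $p_{\mathcal{X}_\ell^2}$.

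Next, apply two-sided Hoeffding to the $[0,1]$-valued variables $B_s$. For every $t>0$,
\[
\Pr\bigl[|\hat p_{\mathcal{S}_\ell}-p_{\mathcal{X}_\ell^2}|\ge t\bigr]\le 2\exp(-2S_\ell t^2).
\]
Take $t=(1-\xi)\delta$, which is positive because $\xi<1$. On the complementary event,
\[
p_{\mathcal{X}_\ell^2}<\hat p_{\mathcal{S}_\ell}+(1-\xi)\delta .
\]
Substituting the observed value $\hat p_{\mathcal{S}_\ell}=\xi\delta$ gives $p_{\mathcal{X}_\ell^2}<\delta$. Hence $\Pr[p_{\mathcal{X}_\ell^2}\le\delta]\ge 1-2\exp(-2S_\ell(1-\xi)^2\delta^2)$. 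A one-sided Hoeffding bound would already yield the stronger constant $1$ in place of $2$; I will keep the two-sided form to match the statement, noting it is implied.

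The main obstacle is interpretive rather than computational. The quantity $p_{\mathcal{X}_\ell^2}$ is a fixed, deterministic number. Conditioning on the realized value $\hat p=\xi\delta$ and then writing a probability about $p$ is, strictly speaking, a confidence statement and not a probability over $p$.

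I would handle this by stating the result in the standard frequentist form. Under the hypothesis $p_{\mathcal{X}_\ell^2}>\delta$, the probability that the audit reports a rate of at most $\xi\delta$ is at most $\exp(-2S_\ell(1-\xi)^2\delta^2)$. Equivalently, the event
\[
\{p_{\mathcal{X}_\ell^2}\le \hat p_{\mathcal{S}_\ell}+(1-\xi)\delta\}
\]
holds with the stated probability over the audit randomness. The proposition's phrasing is then read as this confidence guarantee.

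A secondary point is that estimating $\mathrm{PL}$ itself may require evaluating the densities of the continuous mechanism. I will assume the auditor computes $B_s$ exactly, or at least with independent errors that do not bias it, so that the indicators remain i.i.d.\ with mean $p_{\mathcal{X}_\ell^2}$.
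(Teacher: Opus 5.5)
Your proposal is correct and follows the paper's proof essentially line for line: you apply two-sided Hoeffding to the empirical violation rate with $t=(1-\xi)\delta$, then substitute $\hat{p}_{\mathcal{S}_\ell}=\xi\delta$ on the good event to conclude $p_{\mathcal{X}_\ell^2}\le\delta$. Your additions are sound refinements of points the paper's proof leaves implicit: you make the i.i.d.\ Bernoulli sampling model explicit, you note that a one-sided bound already gives the constant $1$ in place of $2$, and you read the conclusion as a frequentist confidence statement because $p_{\mathcal{X}_\ell^2}$ is deterministic.
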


\begin{proof}
Suppose the empirical violation rate satisfies $\hat{p}_{\mathcal{S}} = \xi \delta$ for some constant $\xi < 1$.  
Our goal is to show that, with high probability, the true violation probability $p_{\mathcal{X}^2}$ is at most $\delta$.

First, by Hoeffding's inequality, for any $t > 0$:
\begin{equation}
\small \Pr\left[\left|\hat{p}_{\mathcal{S}} - p_{\mathcal{X}^2}\right| > t\right] \leq 2e^{-2S_\ell t^2}.
\end{equation}
Let $t = (1-\xi)\delta$. Then:
\begin{equation}
\small \Pr\left[\left|\hat{p}_{\mathcal{S}} - p_{\mathcal{X}^2}\right| > (1-\xi)\delta\right] \leq 2e^{-2S_\ell(1-\xi)^2\delta^2}.
\end{equation}
This implies:
\begin{equation}
\small \Pr\left[\left|\hat{p}_{\mathcal{S}} - p_{\mathcal{X}^2}\right| \leq (1-\xi)\delta\right] \geq 1 - 2e^{-2S_\ell(1-\xi)^2\delta^2}.
\end{equation}
Now, under the event that $\left|\hat{p}_{\mathcal{S}} - p_{\mathcal{X}^2}\right| \leq (1-\xi)\delta$, we have:
\begin{equation}
\small p_{\mathcal{X}^2} \leq \hat{p}_{\mathcal{S}} + (1-\xi)\delta = \xi\delta + (1-\xi)\delta = \delta.
\end{equation}
Thus:
\begin{equation}
\small \Pr\left[p_{\mathcal{X}^2} \leq \delta\right] \geq 1 - 2e^{-2S_\ell(1-\xi)^2\delta^2},
\end{equation}
which completes the proof.
\end{proof}

\vspace{0.2em}
\subsection{Detailed Algorithm Design}
\label{app:algorithm_design}
\vspace{0.4em}
\subsection*{\textbf{Laplace Perturbation Baseline for Coefficient Blocks}}
\label{app:laplace_mdp}
\vspace{0.4em}

\noindent We instantiate a naive additive-noise baseline by releasing
\[
\mathcal{M}_t(x) = \mathrm{vec}\!\left(\mathbf{A}_m^{t}(x)\right) + \eta,
\qquad
\eta \sim \text{i.i.d. }\mathrm{Lap}(0,b_t),
\]
for $t\in\{\textsf{dx},\textsf{dy},\textsf{cy}\}$.
For i.i.d.Laplace noise, the privacy loss between two secrets $x_i,x_j$ satisfies
\[
\mathrm{PL}_{x_i,x_j}(y)
= \log\frac{p(y\mid x_i)}{p(y\mid x_j)}
\;\le\;
\frac{\big\|\mathrm{vec}(\mathbf{A}_m^{t}(x_i))-\mathrm{vec}(\mathbf{A}_m^{t}(x_j))\big\|_1}{b_t}.
\]
To align with $\epsilon$-mDP, i.e., $\mathrm{PL}_{x_i,x_j}(y)\le \epsilon\, d(x_i,x_j)$ for all $i,j,y$,
it suffices to set $b_t=\Delta_t/\epsilon$, where the metric (Lipschitz) sensitivity is
\[
\Delta_t \triangleq \max_{i\neq j}\;
\frac{\big\|\mathrm{vec}(\mathbf{A}_m^{t}(x_i))-\mathrm{vec}(\mathbf{A}_m^{t}(x_j))\big\|_1}{d(x_i,x_j)}.
\]
In our experiments, we fix $\epsilon=4$ and apply entrywise perturbation
$\mathrm{Lap}\!\left(0,\frac{\Delta_t}{\epsilon}\right)$ to obtain $\widetilde{\mathbf{A}}$,
then compute the induced privacy loss $\mathrm{PL}_{x_i,x_j}\!\left(\widetilde{\mathbf{A}}\right)$.
The resulting privacy-loss distributions are reported in Fig.~\ref{fig:motivatingexample}(a)--(c).

\vspace{0.9em}
\subsection*{\textbf{Coefficient Matrix Approximation}}
\vspace{0.5em}

\noindent Joint Optimization of Permutation $\{\mathbf{E}_m,\boldsymbol{\Theta}_m^{t}\}_{m=1}^M$:

\begin{itemize}
\item [(1)] \emph{Parameter fitting.}
Holding $\mathbf{E}_m$ fixed, for each surrogate family we fit $\boldsymbol{\Theta}_m^t$ by minimizing the weighted entrywise $\ell_1$ reconstruction loss in~\eqref{eq:block_fit_obj} using the \emph{Nelder--Mead simplex algorithm} \cite{NelderMead1965}, a robust, derivative-free optimizer well-suited for nonconvex objectives and nonsmooth losses. This choice avoids instabilities we observed with gradient-based methods under discretization and privacy-induced perturbations. 

For the Gaussian family, we impose implicit box constraints to ensure physically meaningful spreads $(s_x,s_y)$ and nonnegative amplitude $(A\ge 0)$, avoiding degenerate fits.
For the polynomial family, we add mild ridge regularization to improve conditioning and reduce overfitting, i.e., we minimize the $\ell_1$ reconstruction loss plus $\rho\|\boldsymbol{\theta}\|_2^2$ over polynomial coefficients.
For the RBF family, we restrict the bandwidth $\sigma$ to a bounded range for numerical stability and softly regularize the basis weights $\alpha_k$ to prevent over-concentration and preserve smoothness across neighboring centers.

\item [(2)] \emph{Permutation search.} With $\{\boldsymbol{\Theta}_m^t\}$ fixed, we refine the local permutation $\mathbf{E}_m$ via greedy pairwise index swaps $(i,j)$ that decrease the global objective in~\eqref{eq:block_fit_obj}.
We accept only swaps that reduce the total weighted $\ell_1$ loss, ensuring monotonic descent.
After each accepted swap, we refit $\boldsymbol{\Theta}_m^t$ under the updated ordering for consistency.
This alternating scheme converges to a locally optimal configuration in practice and typically stabilizes within a few iterations.
\end{itemize}

\subsection*{- \, Detailed Description of Matrix Approximation Methods}
\begin{itemize}
    \item \emph{2D Gaussian.} Parameters $\Theta=\{c_x,c_y,s_x,s_y,A,b\}$, where $(c_x,c_y)$ is the center, $(s_x,s_y)$ are the spreads, $A\ge 0$ is the amplitude, and $b$ is an offset. Widths are clipped to a reasonable range to avoid degeneracy. This very compact model captures smooth, bell-shaped structure and aggressively denoises.

    \item \emph{2D Polynomial.} Parameters
    \begin{equation}
    \Theta=\{\beta_0,\beta_{1x},\beta_{1y},\beta_{2x},\beta_{2y},\beta_{xy},\beta_{3x},\beta_{3y},\beta_{2xy},\beta_{xy}^2\},
    \end{equation}
    including an intercept $\beta_0$, first-order slopes $(\beta_{1x},\beta_{1y})$, second-order curvature terms $(\beta_{2x},\beta_{2y},\beta_{xy})$, and higher-order cross-interaction coefficients $(\beta_{3x},\beta_{3y},\beta_{2xy},\beta_{xy}^2)$. We use a low-degree bivariate basis on $[0,1]^2$ (e.g., a Legendre-product basis for numerical stability) and apply $\ell_2$ regularization to mitigate overfitting. This family captures smooth global trends with a moderate parameter count.

    \item \emph{Radial Basis Function (RBF).} Parameters
    \begin{equation}
    \Theta=\{\alpha_1,\ldots,\alpha_K,\mu_{1x},\mu_{1y},\ldots,\mu_{Kx},\mu_{Ky},\sigma,b\},
    \end{equation}
    where $\alpha_k$ are basis weights, $(\mu_{kx},\mu_{ky})$ are centers, $\sigma$ is a shared bandwidth, and $b$ is an offset. Centers are fixed on a coarse grid with shared bandwidth; only the weights are optimized. This yields a localized, multi-peak approximation that remains stable under mild regularization.
\end{itemize}

\begin{table*}[t]
\caption{Comparison of different matrix approximation methods.}
\centering
\small
\begin{tabular}{p{1.4cm}p{2.4cm}p{1.5cm}p{1.2cm}p{3.8cm}p{4.8cm}}
\toprule
\rowcolor{lightgray!30} \textbf{Surrogate} & \textbf{Params in} $\boldsymbol\Theta_m^t$ &  \textbf{Size of} $\boldsymbol\Theta_m^t$ & \textbf{Accuracy} & \textbf{Notes} & \textbf{Noise Strategy} \\
\hline
2D \newline Gaussian & $c_x$, $c_y$, $s_x$, $s_y$, $a$, $b$ & 6 & Low & Compact, but only fits symmetric bell-shaped data well. & Subtract Laplace noise from amplitude ($a$) and increase spread ($\sigma$) to lower the surface everywhere. \\
\hline
2D \newline Polynomial & $\beta_0$, $\beta_{1x}$, $\beta_{1y}$, $\beta_{2x}$, $\beta_{2y}$, $\beta_{xy}$, $\beta_{3x}$, $\beta_{3y}$, $\beta_{2xy}$, $\beta_{xy}^2$ & 10 & Medium & Simple to fit; more effective for smooth surface trends. & Subtract Laplace noise from coefficients, especially the intercept, to bias the surface downward. \\
\hline
RBF & $\alpha_1, \ldots, \alpha_K$,  $\mu_{1x}, \mu_{1y}, \ldots, \mu_{Kx}, \mu_{Ky}$, $\sigma$, $b $ & 25 & High & Flexible and localized, works well for multi-peak or irregular matrices. & Add negative-biased noise to weights ($\alpha_i - |\text{Lap}(b)|$); optionally perturb centers for robustness. \\
\hline
Low-Rank SVD & $\mathbf{U}_k$, $\boldsymbol\Sigma_k$, $\mathbf{V}_k$ & 30--50 & Very High & Captures global structure with fewer components; efficient to store and reconstruct. & Subtract noise from singular values ($S_k$) or from the reconstructed matrix directly to ensure conservative output. \\
\bottomrule
\end{tabular}
\label{tab:approx_privacy}
\end{table*}

\subsection*{- \, Low-Rank SVD}
We approximate each coefficient matrix $\mathbf{A}\in\mathbb{R}^{m\times n}$ with a rank-$k$ surrogate that preserves its dominant structure while drastically reducing dimensionality. Compute the SVD $\mathbf{A}=\mathbf{U}\boldsymbol{\Sigma}\mathbf{V}^{\top}$, where $\mathbf{U}\in\mathbb{R}^{m\times m}$ and $\mathbf{V}\in\mathbb{R}^{n\times n}$ are orthogonal, and $\boldsymbol{\Sigma}=\mathrm{diag}(\sigma_1,\ldots,\sigma_r)$ with $\sigma_1\ge\cdots\ge\sigma_r>0$ and $r=\mathrm{rank}(\mathbf{A})$. The truncated SVD retains the top-$k$ components,
\begin{equation}
\mathbf{A}_k = \mathbf{U}_k\boldsymbol{\Sigma}_k\mathbf{V}_k^{\top},\qquad k\ll\min\{m,n\},
\end{equation}
where $\mathbf{U}_k\in\mathbb{R}^{m\times k}$, $\mathbf{V}_k\in\mathbb{R}^{n\times k}$, and $\boldsymbol{\Sigma}_k=\mathrm{diag}(\sigma_1,\ldots,\sigma_k)$. For compact storage and upload, we use the factorized parameterization
\begin{equation}
\boldsymbol{\Theta}=(\mathbf{P},\mathbf{Q}),\quad \mathbf{P}:=\mathbf{U}_k\boldsymbol{\Sigma}_k\in\mathbb{R}^{m\times k},\quad \mathbf{Q}:=\mathbf{V}_k\in\mathbb{R}^{n\times k},
\end{equation}
so that $\mathbf{A}_k=\mathbf{P}\mathbf{Q}^{\top}$. This reduces the representation from $mn$ entries to $k(m{+}n)$ parameters and concentrates subsequent privacy noise on a small set of variables.

\medskip
\noindent\emph{Error guarantees and rank selection.}
By the Eckart--Young theorem, $\mathbf{A}_k$ is the best rank-$k$ approximation to $\mathbf{A}$ in both spectral and Frobenius norms:
\begin{equation}
\|\mathbf{A}-\mathbf{A}_k\|_F^2=\sum_{i>k}\sigma_i^2,\qquad
\|\mathbf{A}-\mathbf{A}_k\|_2=\sigma_{k+1}.
\end{equation}
A common choice sets $k$ to capture a target energy fraction
$\sum_{i=1}^{k}\sigma_i^2 \big/ \sum_{i=1}^{r}\sigma_i^2 \ge \tau$ (e.g., $\tau=0.95$), trading a controlled approximation bias for substantial compression. When privacy noise is added, the final approximation error satisfies
\begin{equation}
\|\mathbf{A}-\widetilde{\mathbf{A}}\| \le 
\underbrace{\|\mathbf{A}-\mathbf{A}_k\|}_{\text{truncation}} +
\underbrace{\|\mathbf{A}_k-\widetilde{\mathbf{P}}\widetilde{\mathbf{Q}}^{\top}\|}_{\text{noise \& reconstruction}},
\end{equation}
explicitly showing how $k$ (bias) and noise scale (variance) jointly determine fidelity. Thus, low-rank SVD provides scalability and inherent denoising at the utility cost of a controlled approximation bias.

\vspace{0.9em}
\subsection*{\textbf{Surrogate Parameterization and Parameter Perturbation}}
\label{app:paraperturbation}
\vspace{0.5em}

We present both \emph{predefined parametric surrogates} and \emph{low-rank surrogates} under a single parameterization. 
For each local distance/cost block $\widehat{\mathbf{A}}^t_m$ (after the local permutation in~\eqref{eq:permute_blocks}), we fit a compact parameter vector/tensor $\boldsymbol{\Theta}_m$ and a corresponding reconstruction map $\mathcal{R}$ such that
\begin{equation}
\widehat{\mathbf{A}}^t_m \approx \mathcal{R}(\boldsymbol{\Theta}^t_m).
\label{eq:unified_surrogate}
\end{equation}
We then apply privacy-preserving perturbation directly in \emph{parameter space}, rather than perturbing matrix entries. This preserves the surrogate’s analytic structure and helps avoid undesirable artifacts (e.g., negative distances or loss of symmetry) that can arise from entrywise noise.

Let $\Phi_m:\widehat{\mathbf{A}}_m\mapsto \boldsymbol{\Theta}_m$ denote the deterministic parameter-estimation map, and let
$\Delta_{\Theta}=\max_{\widehat{\mathbf{A}}_m,\widehat{\mathbf{A}}_m'}\|\Phi_m(\widehat{\mathbf{A}}_m)-\Phi_m(\widehat{\mathbf{A}}_m')\|_1$
be its global $\ell_1$ sensitivity under our adjacency notion. Given privacy budget $\epsilon>0$, we release
\begin{equation}
\widetilde{\boldsymbol{\Theta}}_m
=
\mathcal{P}_{\mathcal{D}}\big(\boldsymbol{\Theta}_m + \boldsymbol{\eta}_m\big),
\qquad
\eta_{m,j}\sim \mathrm{Lap}\left(0,\frac{\Delta_{\Theta}}{\epsilon}\right),
\label{eq:unified_param_mech}
\end{equation}
where $\mathcal{P}_{\mathcal{D}}$ projects the noisy parameters to a valid domain $\mathcal{D}$ (e.g., positivity constraints, bounded coefficients, or non-negativity constraints), ensuring numerical stability and well-posed reconstruction.
The server reconstructs a perturbed coefficient block via
\begin{equation}
\widetilde{\mathbf{A}}_m = \mathcal{R}\left(\widetilde{\boldsymbol{\Theta}}_m\right).
\label{eq:unified_reconstruct}
\end{equation}

\paragraph{Instantiation A: predefined parametric functions.}
In this case, $\boldsymbol{\Theta}_m=\{\theta_{m,1},\ldots,\theta_{m,d}\}\in\mathbb{R}^d$ parameterizes a predefined family
$f(\cdot;\boldsymbol{\Theta})$ (2D Gaussian, 2D Polynomial, or RBF), and the reconstruction map is evaluation on the index grid:
\begin{equation}
\big[\mathcal{R}_{\mathrm{para}}(\boldsymbol{\Theta}_m)\big]_{ij}
\triangleq
f(i,j;\boldsymbol{\Theta}_m).
\end{equation}
The projection $\mathcal{P}_{\mathcal{D}}$ enforces family-specific validity constraints (e.g., $\sigma>0$, bounded polynomial coefficients, or $A\ge 0$) so that the released surrogate remains stable and interpretable.

\paragraph{Instantiation B: low-rank surrogates (SVD/factorization).}
In this case, $\boldsymbol{\Theta}_m$ encodes a rank-$k$ factorization, and $\mathcal{R}$ composes the low-rank matrix. Two equivalent parameterizations are convenient:
\begin{itemize}
    \item[(i)] \textbf{SVD factors:} $\boldsymbol{\Theta}_m=(\mathbf{U}_k,\boldsymbol{\Sigma}_k,\mathbf{V}_k)$ with
    $\mathcal{R}_{\mathrm{svd}}(\boldsymbol{\Theta}_m)=\mathbf{U}_k\boldsymbol{\Sigma}_k\mathbf{V}_k^\top$.
    \item[(ii)] \textbf{Rank-$k$ factors:} $\boldsymbol{\Theta}_m=(\mathbf{P},\mathbf{Q})$ with
    $\mathbf{P}:=\mathbf{U}_k\boldsymbol{\Sigma}_k\in\mathbb{R}^{m\times k}$ and $\mathbf{Q}:=\mathbf{V}_k\in\mathbb{R}^{n\times k}$, so that
    $\mathcal{R}_{\mathrm{fac}}(\boldsymbol{\Theta}_m)=\mathbf{P}\mathbf{Q}^\top$.
\end{itemize}
Because $k(m+n)\ll mn$, injecting noise into $\boldsymbol{\Theta}_m$ substantially reduces posterior leakage relative to entrywise perturbation while preserving approximation fidelity.

\paragraph{A practical perturbation rule for SVD surrogates.}
For distance and utility-loss blocks, we perturb only the singular values while keeping $\mathbf{U}_k$ and $\mathbf{V}_k$ fixed to maximize utility per unit of privacy. Let
$s=\mathrm{diag}(\boldsymbol{\Sigma}_k)\in\mathbb{R}_{\ge 0}^{k}$.
We draw i.i.d.Laplace noise $\eta_i\sim \mathrm{Lap}(0,b)$ and apply a downward-biased soft-thresholding:
\begin{equation}
\tilde{s}_i=\max\{s_i-\tau-|\eta_i|,0\},\qquad i=1,\ldots,k.
\label{eq:svd_shrink}
\end{equation}
Here $b=\frac{\Delta_s}{\epsilon}$ with $\Delta_s$ a Lipschitz bound (metric sensitivity) of the singular-value map under our adjacency notion; when a tight analytical $\Delta_s$ is unavailable, we select $b$ via a binary search in our posterior-leakage auditor to satisfy the target $(\epsilon,\delta)$-PmDP constraint. The deterministic shrinkage $\tau=\frac{c}{\epsilon}$ (with $c\in[0.3,1.0]$ tuned on a small validation set) reduces the probability of overestimation (e.g., $\tilde{d}>d$) that would otherwise relax mDP constraints, thereby lowering violation rates without additional privacy spend.

We then form $\widetilde{\boldsymbol{\Sigma}}_k=\mathrm{diag}(\tilde{s})$ and truncate to the conforming rank
$r=\#\{i:\tilde{s}_i>0\}$, yielding
$\widetilde{\mathbf{U}}=\mathbf{U}_k(:,1:r)$,
$\widetilde{\mathbf{V}}=\mathbf{V}_k(:,1:r)$,
and $\widetilde{\boldsymbol{\Sigma}}=\mathrm{diag}(\tilde{s}_{1:r})$.
The uploaded object is $\widetilde{\boldsymbol{\Theta}}_m=(\widetilde{\mathbf{U}},\widetilde{\boldsymbol{\Sigma}},\widetilde{\mathbf{V}})$ (or its equivalent $(\widetilde{\mathbf{P}},\widetilde{\mathbf{Q}})$), and the server-side reconstruction follows from~\eqref{eq:unified_reconstruct}:
$\widetilde{\mathbf{A}}_m=\widetilde{\mathbf{U}}\widetilde{\boldsymbol{\Sigma}}\widetilde{\mathbf{V}}^\top$ (or $\widetilde{\mathbf{A}}_m=\widetilde{\mathbf{P}}\widetilde{\mathbf{Q}}^\top$).

\DEL{
\subsection{Matrix Reordering (MR)}
\todo{Add detailed description of matrix alignment and coarse-grained discretization.}

Our Matrix Reordering proceeds in two stages: a \emph{coarse‑grained discretization} to reduce noise and dimensionality, followed by a \emph{matrix alignment} via permutation optimization.

\paragraph{Coarse‑Grained Discretization} 
To stabilize the subsequent permutation search, we first quantize each continuous target matrix $F_k$ into $G$ uniform bins.  For each entry $(i,j)$ and $k\in\{1,2,3\}$, define
\begin{eqnarray}
  \tilde{F}_k(i,j)
  = \mathrm{round}\Bigl(\frac{F_k(i,j)-\min F_k}{\max F_k - \min F_k}(G-1)\Bigr)\in\{0,\dots,G-1\}.
\end{eqnarray}
Likewise discretize $A_k\to\tilde{A}_k$.  Build a similarity graph on the index set $I$ with edge weights
\begin{eqnarray}
  w_{ip}
  = \sum_{k=1}^3\sum_{j=1}^m
    \bigl|\tilde{A}_k(p,j)-\tilde{F}_k(i,j)\bigr|,
\end{eqnarray}
then cluster indices into coarse blocks by spectral clustering.

\paragraph{Matrix Alignment}
Within each coarse block $g$, let $I_g$ be its original indices and $J_g$ the target positions.  Form the block utility loss matrix
\begin{eqnarray}
  C^{(g)}_{ip}
  = \sum_{j=1}^n\bigl|A_1\bigl(p,\pi_m(j)\bigr)-F_1(i,j)\bigr|
    +\lambda_2\sum_{j=1}^m\bigl|A_2(p,j)-F_2(i,j)\bigr|
    +\lambda_3\sum_{j=1}^m\bigl|A_3(p,j)-F_3(i,j)\bigr|.
\end{eqnarray}
Solve the assignment $I_g\to J_g$ via the Hungarian algorithm to update $\pi_m$.  Iterate over all blocks until the total $L_1$ error
\begin{eqnarray}
  \|PA_1P^\top - F_1\|_1
  + \lambda_2\|PA_2 - F_2\|_1
  + \lambda_3\|PA_3 - F_3\|_1
\end{eqnarray}
converges.
We consider two complementary alignment strategies:

\begin{itemize}
  \item \textbf{Method A: Iterative Pairwise Refinement}  
    We start from identity permutations $P_k = I_n$, $Q_k = I_n$, $Q'_k = I_m$ for each sample $k$.  At each iteration:
    \begin{enumerate}
      \item \emph{Pairwise Alignment:} For every pair of samples $(i,j)$, we compute the utility loss  
      \normalsize
      \begin{eqnarray} 
      &&\underbrace{\left\| P_i A_1^{(i)} Q_i - P_j A_1^{(j)} Q_j \right\|_F^2}_{\text{square matrix (original locations)}} \\
      &+& \lambda_2 \underbrace{\left\| P_i A_2^{(i)} Q_i' - P_j A_2^{(j)} Q_j' \right\|_F^2}_{\text{rectangular matrix (obfuscated)}} \\
      &+& \lambda_3 \underbrace{\left\| P_i A_3^{(i)} Q_i' - P_j A_3^{(j)} Q_j' \right\|_F^2}_{\text{rectangular matrix (utility loss cofficient)}}
      \end{eqnarray}
      \normalsize
      and solve for the updated row and column permutations that reduce pairwise discrepancies $(P_i,Q_i,Q'_i)$ vs.$(P_j,Q_j,Q'_j)$ via the Hungarian algorithm.
      \item \emph{Consensus Update:} We aggregate updated permutations across all pairs e.g.by majority vote or alternating minimization and replace each $P_k,Q_k,Q'_k$.
      \item \emph{Convergence Check:} We repeat until the total Frobenius‐norm loss no longer decreases by more than $\epsilon$.
    \end{enumerate}

  \item \textbf{Method B: Multi‑User $L_1$ Minimization}  
    We view each of $U$ users’ matrices $A^{(u)}_1,A^{(u)}_2,A^{(u)}_3$ as being permuted by $\pi_u$.  Define
    \begin{eqnarray}
      A^{(u)\prime}_1(i,j)=A^{(u)}_1(\pi_u(i),\pi_u(j)),\quad
      A^{(u)\prime}_{2,3}(i,j)=A^{(u)}_{2,3}(\pi_u(i),j).
    \end{eqnarray}
    We then solve
    \begin{eqnarray}
    \min_{\{\pi_u\}_{u=1}^U}
    \sum_{u < v} \left[
    \| A_1^{(u)\prime} - A_1^{(v)\prime} \|_1
    + \lambda_2 \| A_2^{(u)\prime} - A_2^{(v)\prime} \|_1
    + \lambda_3 \| A_3^{(u)\prime} - A_3^{(v)\prime} \|_1
    \right]
    \end{eqnarray}
    typically via block‐coordinate descent over each $\pi_u$ and solving a linear assignment for each update.
\end{itemize}

\subsection{Matrix Approximation (MA) Using Predefined Functions}
\todo{Add detailed description of Gaussian function fitting approach.}

We have a symmetric matrix $A_1 \in \mathbb{R}^{n \times n}$, indexed by a set $I = \{1, 2, \dots, n\}$ on both rows and columns, and two rectangular matrices: $A_2, A_3 \in \mathbb{R}^{n \times m}$, indexed by $I$ on rows and $J = \{1, 2, \dots, m\}$ on columns with three target functions:
    \begin{itemize}
        \item $f_1: [n] \times [n] \rightarrow \mathbb{R}$ for fitting $A_1$,
        \item $f_2: [n] \times [m] \rightarrow \mathbb{R}$ for fitting $A_2$,
        \item $f_3: [n] \times [m] \rightarrow \mathbb{R}$ for fitting $A_3$.
    \end{itemize}
We fit each matrix $A_k$ to known target functions $f_k$ by minimizing an $L_1$ fitting error over permutations. Let $\pi_m: I \rightarrow I$ be a permutation of the index set $I$. Define the permuted matrices as:
\begin{eqnarray}
A_1'(i,j) = A_1(\pi_m(i), \pi_m(j)), \quad
A_2'(i,j) = A_2(\pi_m(i), j), \quad
A_3'(i,j) = A_3(\pi_m(i), j)
\end{eqnarray}
and find the permutation $\pi_m$ that minimizes the total
L1 fitting error:
\begin{eqnarray}
  \min_{\pi_m}
  \sum_{i,j}\Bigl|A_1(\pi_m(i),\pi_m(j)) - f_1(i,j)\Bigr|
  + \lambda_2\sum_{i,j}\Bigl|A_2(\pi_m(i),j) - f_2(i,j)\Bigr|
  + \lambda_3\sum_{i,j}\Bigl|A_3(\pi_m(i),j) - f_3(i,j)\Bigr|. 
\end{eqnarray} 

Joint $L_1$ Fitting to Target Functions: Let $F_1(i,j)=f_1(i,j)$, $F_2(i,j)=f_2(i,j)$, $F_3(i,j)=f_3(i,j)$.  The optimization (permutation $P$ to minimize) becomes
\begin{eqnarray}
\| P A_1 P^\top - F_1 \|_1
+ \lambda_2 \| P A_2 - F_2 \|_1
+ \lambda_3 \| P A_3 - F_3 \|_1
\end{eqnarray}
where $\|\cdot\|_1$ is the entrywise $L_1$ norm. 
}

\subsection*{\textbf{Coefficient Reconstruction}}

The server-side after perturbation, depending on the surrogate family used during fitting, the server reconstructs each coefficient block through a predefined parametric mapping or through a low-rank factorization and then applies a structure-preserving post-processing step (which is privacy-free due to the post-processing immunity) to ensure feasibility and improve downstream stability.

\smallskip
\noindent\textbf{Server-side reconstruction (parametric and low-rank surrogates).}
The coordinator receives the privatized surrogate parameters $\{\widetilde{\boldsymbol{\Theta}}_m^{t}\}_{t\in\mathcal{T}}$ from each user $m$ and reconstructs each coefficient block in the permuted index space used during local fitting. The reconstruction follows the corresponding surrogate family: 
$\widetilde{\mathbf{A}}_m^{t} = f\left(\cdot;\widetilde{\boldsymbol{\Theta}}_m^{t}\right),~t\in\mathcal{T}$.

\smallskip
\noindent\textbf{Instantiation A: predefined parametric reconstruction.}
When a predefined parametric family is used, reconstruction is performed element-wise by evaluating the recovered surface:
\begin{equation}
\big[\mathcal{R}_{\mathrm{para}}(\widetilde{\boldsymbol{\Theta}})\big]_{ij}
\triangleq
f(i,j;\widetilde{\boldsymbol{\Theta}}).
\end{equation}
Concretely, typical instances include
\begin{equation}
f(x,y;\widetilde{\boldsymbol{\Theta}})=
\begin{cases}
\tilde{A}\exp\left(-\frac{(x-\tilde{c}_x)^2}{2\tilde{s}_x^2}-\frac{(y-\tilde{c}_y)^2}{2\tilde{s}_y^2}\right)+\tilde{b}, & \text{Gaussian},\\[0.9em]
\displaystyle \sum_{p,q}\tilde{\beta}_{pq}L_p(x)L_q(y)+\tilde{b}, & \text{Polynomial},\\[0.9em]
\displaystyle \sum_{k}\tilde{\alpha}_k\phi\left(\|[x,y]-\tilde{\mu}_k\|;\tilde{\sigma}\right)+\tilde{b}, & \text{RBF},
\end{cases}
\end{equation}
where $L_p(\cdot)$ denotes Legendre polynomials (chosen for numerical stability and orthogonality), and $\phi(\cdot;\tilde{\sigma})$ is the radial basis kernel. The resulting $\widehat{\mathbf{A}}_m^{t}$ is then mapped back to the original index order via Eq.~\eqref{eq:unpermute_square}.

\smallskip
\noindent\textbf{Instantiation B: low-rank reconstruction.}
When a low-rank surrogate is used, the coordinator receives perturbed rank-$k$ factors
$\widetilde{\boldsymbol{\Theta}}=(\widetilde{\mathbf{P}},\widetilde{\mathbf{Q}})$ (or equivalently $(\widetilde{\mathbf{U}},\widetilde{\boldsymbol{\Sigma}},\widetilde{\mathbf{V}})$) and forms the raw reconstruction
\begin{equation}
\widehat{\mathbf{A}} = \widetilde{\mathbf{P}}\widetilde{\mathbf{Q}}^\top
\quad
(\text{equivalently } \widehat{\mathbf{A}}=\widetilde{\mathbf{U}}\widetilde{\boldsymbol{\Sigma}}\widetilde{\mathbf{V}}^\top),
\end{equation}
followed by unpermutation using Eq.~\eqref{eq:unpermute_square}.

\smallskip
\noindent\textbf{Structure-preserving tightening.}
Because noise may violate algebraic or geometric invariants required by the LP constraints, the coordinator applies a structure-preserving tightening step. This yields more conservative (constraint-safe) coefficients and empirically reduces the violation ratio.

\begin{itemize}
  \item \textbf{Metric closure for intra-neighborhood distance blocks.}
  For $\widetilde{\mathbf{A}}_m^{\textsf{dx}}$, we compute the metric closure (Floyd--Warshall):
  \begin{equation}
  \widetilde{\mathbf{A}}_m^{\textsf{dx}}(i,j)
  \leftarrow
  \min_{k\in\mathcal{N}_m}
  \Big(\widetilde{\mathbf{A}}_m^{\textsf{dx}}(i,k)+\widetilde{\mathbf{A}}_m^{\textsf{dx}}(k,j)\Big).
  \end{equation}
  This enforces non-negativity, symmetry, zero diagonal, and all triangle inequalities by tightening noise-induced overestimates to the smallest path-consistent upper bounds.
  \item \textbf{Rectangular triangle tightening for cross blocks.}
  For each rectangular block $\widetilde{\mathbf{B}}_m\in\{\widetilde{\mathbf{A}}_m^{\textsf{dy}},\widetilde{\mathbf{A}}_m^{\textsf{cy}}\}$, we tighten entries using triangle-inequality-consistent upper bounds against the metric core:
  \begin{equation}
  \widetilde{\mathbf{B}}_m(i,j)
  \leftarrow
  \min_{k\in\mathcal{N}_m}
  \Big(\widetilde{\mathbf{A}}_m^{\textsf{dx}}(i,k)+\widetilde{\mathbf{B}}_m(k,j)\Big).
  \end{equation}
  This propagates the conservative geometry of $\widetilde{\mathbf{A}}_m^{\textsf{dx}}$ into the rectangular blocks, shrinking overestimates and reducing violations without additional privacy expenditure.
\end{itemize}
The tightened reconstructions $\{\widetilde{\mathbf{A}}_m^{t}\}_{t\in\mathcal{T}}$ are then used in the global LP exactly where the original coefficient blocks $\{\mathbf{A}_m^{t}\}_{t\in\mathcal{T}}$ would appear.

\DEL{
\subsection{Posterior Derivation with Laplace Noise}
Let $v \in \mathcal{V}$ be the true location. Around $v$, we have a set of neighbors 
\begin{equation}
\mathcal{N}(v) = \{v_1, v_2, \dots \}.
\end{equation}
For each pair $(v_i,v_j)$ in $\mathcal{N}(v)^2$, the true distance is
\begin{equation}
d_{i,j}(v).
\end{equation}
To protect location privacy, we add independent Laplace$(0,b)$ noise $a_{i,j}$ to each distance, yielding
\begin{equation}
d_{i,j}' = d_{i,j}(v) + a_{i,j}.
\end{equation}
We collect these noisy distances into the matrix
\begin{equation}
D'(v) = \{ d_{i,j}' \}_{(v_i,v_j)\in \mathcal{N}(v)^2}.
\end{equation}

\subsection*{Bayesian Formulation}

Let $\Pr(v)$ be the prior distribution over locations $v \in \mathcal{V}$. We want the posterior distribution
\begin{equation}
\Pr\bigl(v \mid D'(v)\bigr).
\end{equation}
By Bayes' theorem:
\begin{equation}
\Pr\bigl(v \mid D'(v)\bigr)
  = 
  \frac{
    \Pr\bigl(D'(v) \mid v\bigr)  \Pr(v)
  }{
    \displaystyle \sum_{v' \in \mathcal{V}}
      \Pr\bigl(D'(v) \mid v'\bigr)  \Pr(v')
  }.
\end{equation}

\subsection*{Likelihood under i.i.d.~Laplace Noise}

Since $d_{i,j}' = d_{i,j}(v) + a_{i,j}$ and each $a_{i,j}$ is drawn i.i.d.from $\mathrm{Laplace}(0,b)$, the density for a single observation $d_{i,j}'$ is:
\begin{equation}
f_{i,j}\bigl(d_{i,j}' \mid v\bigr)
  =
  \frac{1}{2b} \exp\Bigl(-\frac{\bigl|d_{i,j}' - d_{i,j}(v)\bigr|}{b}\Bigr).
\end{equation}
Assuming independence across all $(i,j)$ pairs, the joint likelihood for the entire matrix $D'(v)$ is:
\begin{equation}
\Pr\bigl(D'(v) \mid v\bigr)
  =
  \prod_{(i,j)\in \mathcal{N}(v)^2} 
    \frac{1}{2b} 
    \exp\Bigl(-\frac{\bigl|d_{i,j}' - d_{i,j}(v)\bigr|}{b}\Bigr).
\end{equation}
We can write this more compactly as:
\begin{equation}
\Pr\bigl(D'(v) \mid v\bigr)
  =
  \left(\frac{1}{2b}\right)^{\lvert \mathcal{N}(v)^2\rvert}
  \exp\Bigl(-\frac{1}{b}\sum_{(i,j)\in \mathcal{N}(v)^2} \bigl|d_{i,j}' - d_{i,j}(v)\bigr|\Bigr).
\end{equation}

\subsection*{Posterior}

Putting the likelihood and the prior together in Bayes' rule, we have:
\normalsize
\begin{eqnarray}
&& \Pr\bigl(v \mid D'(v)\bigr) \\
\nonumber &=& 
  \frac{
    \left(\frac{1}{2b}\right)^{\lvert \mathcal{N}(v)^2\rvert} 
    e^{-\frac{1}{b}\sum_{(i,j)\in \mathcal{N}(v)^2} \bigl|d_{i,j}' - d_{i,j}(v)\bigr|}\Pr(v)
  }{
    \sum_{v' \in \mathcal{V}} 
      \Bigl[\left(\frac{1}{2b}\right)^{\lvert \mathcal{N}(v')^2\rvert} 
      e^{-\frac{1}{b}\sum_{(i,j)\in \mathcal{N}(v')^2} \bigl|d_{i,j}' - d_{i,j}(v')\bigr|}\Bigr] 
      \Pr(v')
  }.
\end{eqnarray}
\normalsize
In many cases, the factor $\bigl(\tfrac{1}{2b}\bigr)^{\lvert \mathcal{N}(v)^2\rvert}$ does not depend on $v$ (if $|\mathcal{N}(v)|$ is fixed for all $v$), so it may partially cancel in the numerator and denominator. If $\vert \mathcal{N}(v)\vert$ differs across $v$, one must keep it explicit.}

\subsection{Additional Experimental Results}
\label{app:additional_experimental_results}

\begin{table*}[t]
\caption{Performance of coefficient approximation across the three datasets (4000 nodes).}
\centering
\scriptsize 
\begin{tabular}{p{2.0cm} p{1.2cm}p{1.6cm}p{1.2cm} p{1.2cm}p{1.56cm}p{1.2cm} p{1.2cm}p{1.2cm}}
\toprule
\multicolumn{1}{l}{}& \multicolumn{3}{c}{Distance matrix $\tilde{\mathbf{A}}_m^{\textsf{dx}}$} & \multicolumn{3}{c}{Distance matrix $\tilde{\mathbf{A}}^d_{\mathcal{N}_m\times \mathcal{Y}}$} & \multicolumn{2}{c}{Utility loss matrix $\tilde{\mathbf{A}}^c_{\mathcal{N}_m\times \mathcal{Y}}$}\\
 \cline{2-9}
\textbf{Methods} & PL & RelErr & ViolRatio & PL & RelErr & ViolRatio & PL & RelErr  \\
\midrule
\multicolumn{9}{c }{Rome road map}\\
\hline
2D Gaussian & 0.65±0.19 & 0.56±0.08 & 0.38±0.08 & 0.37±0.08 & 0.55±0.06 & 0.42±0.08 & 0.59±0.29 & 0.60±0.06  \\
\hline
2D Polynomial & 4.02±0.49 & 380.54±84.93 & 0.55±0.06 & 2.48±0.77 & 213.81±67.31 & 0.47±0.10 & 2.30±0.92 & 48.45±35.55 \\
\hline
RBF & 7.01±1.66 & 0.57±0.05 & 0.41±0.06 & 7.00±3.59 & 0.58±0.08 & 0.39±0.05 & 6.04±2.97 & 0.71±0.03 \\
\hline
Low-Rank SVD & 0.80±0.15 & 0.07±0.00 & 0.13±0.02 & 0.53±0.20 & 0.05±0.01 & 0.20±0.02 & 0.62±0.11 & 0.03±0.00  \\
\midrule
\multicolumn{9}{c }{London road map}
\\ 
\hline
2D Gaussian & 1.29±0.65 & 0.56±0.06 & 0.37±0.06 & 0.49±0.15 & 0.55±0.12 & 0.44±0.09 & 0.43±0.14 & 0.51±0.04  \\
\hline
2D Polynomial & 2.80±0.79x & 338.24±310.64 & 0.43±0.18 & 2.89±1.15 & 270.3±99.21 & 0.46±0.09 & 2.73±0.65 & 120.87±30.61 \\
\hline
RBF & 3.79±1.04 & 0.53±0.02 & 0.43±0.02 & 3.49±0.80 & 0.52±0.02 & 0.35±0.04 & 3.40±0.84 & 0.62±0.01 \\
\hline
Low-Rank SVD & 0.84±0.08 & 0.07±0.00 & 0.14±0.01 & 0.58±0.04 & 0.04±0.00 & 0.21±0.01 & 0.96±0.39 & 0.03±0.00  \\
\midrule
\multicolumn{9}{c }{New York City road map}
\\
\hline
2D Gaussian & 0.48±0.14 & 0.66±0.11 & 0.40±0.09 & 0.55±0.15 & 0.67±0.03 & 0.46±0.06 & 0.39±0.12 & 0.54±0.03  \\
\hline
2D Polynomial & 2.22±0.69 & 268.42±160.48 & 0.43±0.21 & 2.20±1.33 & 271.57±158.07 & 0.41±0.18 & 2.28±0.80 & 63.64±58.32 \\
\hline
RBF & 2.52±0.77 & 0.65±0.05 & 0.44±0.03 & 3.08±1.00 & 0.64±0.08 & 0.45±0.01 & 2.93±0.89 & 0.66±0.01 \\
\hline
Low-Rank SVD & 0.95±0.18 & 0.08±0.00 & 0.08±0.00 & 0.65±0.13 & 0.07±0.01 & 0.14±0.01 & 0.86±0.25 & 0.03±0.00  \\
\bottomrule
\end{tabular}
\label{tab:exp:matrix_upload_4000_nodes}
\end{table*}

\begin{table*}[t]
\caption{Performance of coefficient approximation across the three datasets (6000 nodes).}
\centering
\scriptsize 
\begin{tabular}{p{2.0cm} p{1.2cm}p{1.6cm}p{1.2cm} p{1.2cm}p{1.56cm}p{1.2cm} p{1.2cm}p{1.2cm}}
\toprule
\multicolumn{1}{l}{}& \multicolumn{3}{c}{Distance matrix $\tilde{\mathbf{A}}_m^{\textsf{dx}}$} & \multicolumn{3}{c}{Distance matrix $\tilde{\mathbf{A}}^d_{\mathcal{N}_m\times \mathcal{Y}}$} & \multicolumn{2}{c}{Utility loss matrix $\tilde{\mathbf{A}}^c_{\mathcal{N}_m\times \mathcal{Y}}$}\\
 \cline{2-9}
\textbf{Methods} & PL & RelErr & ViolRatio & PL & RelErr & ViolRatio & PL & RelErr  \\
\midrule

\multicolumn{9}{c }{Rome road map}\\
\hline
2D Gaussian & 1.17±1.33 & 0.60±0.12 & 0.39±0.08 & 0.83±0.62 & 0.50±0.07 & 0.47±0.07 & 0.85±0.76 & 0.56±0.04  \\
\hline
2D Polynomial & 3.43±1.05 & 350.98±217.18 & 0.50±0.14 & 2.58±1.37 & 207.81±143.53 & 0.42±0.04 & 3.02±1.87 & 6.34±10.45 \\
\hline
RBF & 3.62±1.45 & 0.58±0.01 & 0.41±0.03 & 5.55±2.90 & 0.53±0.03 & 0.36±0.05 & 3.54±1.19 & 0.69±0.02 \\
\hline
Low-Rank SVD & 0.79±0.16 & 0.07±0.00 & 0.13±0.01 & 0.65±0.32 & 0.04±0.00 & 0.19±0.02 & 0.78±0.23 & 0.04±0.00  \\
\midrule
\multicolumn{9}{c }{London road map}
\\
\hline
2D Gaussian & 0.55±0.12 & 0.57±0.04 & 0.39±0.05 & 0.72±0.22 & 1.17±1.21 & 0.47±0.06 & 0.68±0.47 & 0.45±0.04  \\
\hline
2D Polynomial & 2.68±1.25 & 393.66±152.71 & 0.51±0.16 & 2.17±0.71 & 463.08±279.57 & 0.46±0.11 & 2.97±1.67 & 85.62±45.48 \\
\hline
RBF & 4.45±2.92 & 0.53±0.02 & 0.41±0.05 & 4.45±1.72 & 0.52±0.01 & 0.39±0.03 & 3.69±1.87 & 0.62±0.01 \\
\hline
Low-Rank SVD & 0.87±0.06 & 0.07±0.00 & 0.14±0.01 & 0.56±0.06 & 0.04±0.00 & 0.22±0.02 & 0.84±0.46 & 0.03±0.00  \\
\midrule
\multicolumn{9}{c }{New York City road map}
\\
\hline
2D Gaussian & 0.54±0.12 & 0.68±0.06 & 0.44±0.05 & 0.72±0.33 & 1.74±2.24 & 0.50±0.03 & 0.65±0.36 & 0.51±0.05  \\
\hline
2D Polynomial & 2.35±0.59 & 483.97±284.02 & 0.50±0.15 & 2.56±1.40 & 352.18±168.12 & 0.43±0.09 & 2.39±0.93 & 12.02±15.26 \\
\hline
RBF & 2.44±0.67 & 0.66±0.03 & 0.47±0.05 & 3.30±1.71 & 0.65±0.04 & 0.48±0.08 & 3.00±1.37 & 0.67±0.03 \\
\hline
Low-Rank SVD & 0.96±0.23 & 0.08±0.01 & 0.09±0.02 & 0.65±0.16 & 0.06±0.01 & 0.15±0.01 & 0.91±0.25 & 0.03±0.00  \\
\bottomrule
\end{tabular}
\vspace{-0.02in}
\label{tab:exp:matrix_upload_6000_nodes}
\end{table*}

\subsection*{\textbf{Open Science}}
We provide a complete MATLAB implementation of the proposed DistA-LP framework to enable full reproducibility of all experimental results reported in the paper. The artifact is available at \url{https://anonymous.4open.science/r/DistA-LP-D2F0}. It includes a README that documents the code structure, key components, and all relevant configuration parameters.

\subsection*{\textbf{Violation Ratio}}

Table~\ref{tab:violation} summarizes the empirical mDP violation ratios for all perturbation mechanisms across datasets, privacy budgets, and domain sizes. Predefined-noise baselines (EM and Laplace) and the hybrid method RMP generally exhibit zero or near-zero violations. In contrast, centralized LP consistently incurs the largest violation ratios among all methods, as its coarse-grained distance approximation can substantially mis-estimate inter-record distances and thereby relax the enforced mDP constraints. DistA-LP, on the other hand, achieves empirically negligible violations, highlighting that privacy-aware coefficient approximation combined with probabilistic auditing can maintain near-zero mDP violations even in regimes where centralized LP accumulates substantial violations across scales and budgets. \\
{\rev
\noindent\textit{Conservative distance repair.}
Let \(d\) denote the true metric and let \(\hat d\) denote the reconstructed distance obtained from the perturbed surrogate parameters. For a repair parameter \(\gamma \geq 0\), define
\begin{equation}
\bar d_\gamma(x_i,x_j)=\max\{\hat d(x_i,x_j)-\gamma,0\}.   
\end{equation}
Let
\begin{equation*}
\Delta^+ = \max_{i,j}(\hat d(x_i,x_j)-d(x_i,x_j))_+ .
\end{equation*}
If \(\gamma \geq \Delta^+\), then
\begin{equation*}
\bar d_\gamma(x_i,x_j)\leq d(x_i,x_j), \quad \forall i,j.
\end{equation*}
Therefore, any perturbation matrix satisfying the mDP constraints under $\bar d_\gamma$ also satisfies the corresponding mDP constraints under the true metric \(d\).
\begin{proof}
For any pair \((i,j)\), if \(\hat d(x_i,x_j)\leq d(x_i,x_j)\), then
\begin{equation}
\bar d_\gamma(x_i,x_j)\leq \hat d(x_i,x_j)\leq d(x_i,x_j).
\end{equation}
If \(\hat d(x_i,x_j)>d(x_i,x_j)\), then by definition of \(\Delta^+\),
\begin{equation*}
\hat d(x_i,x_j)-d(x_i,x_j)\leq \Delta^+ \leq \gamma.
\end{equation*}
Thus,
\begin{equation}
\hat d(x_i,x_j)-\gamma \leq d(x_i,x_j).
\end{equation}
Since \(\bar d_\gamma\) is obtained by clipping \(\hat d-\gamma\) at zero, the inequality still holds. Hence \(\bar d_\gamma\leq d\) elementwise.
The mDP constraint has the form
\begin{equation*}
z_{i,k}\leq e^{\epsilon d(x_i,x_j)}z_{j,k}.
\end{equation*}
Because \(\bar d_\gamma(x_i,x_j)\leq d(x_i,x_j)\), we have
\begin{equation}
e^{\epsilon \bar d_\gamma(x_i,x_j)}
\leq
e^{\epsilon d(x_i,x_j)}.
\end{equation}
Therefore, constraints enforced using \(\bar d_\gamma\) are at least as strict as those under the true distance \(d\), giving a conservative mDP guarantee.
\end{proof}}

\subsection*{\textbf{Why local permutation helps.}}
Let $\Pi_m$ be a user-private random permutation over the neighborhood indices, applied to coefficient blocks as in~(15).
From the server's view, the uploaded surrogate parameters are drawn from a permutation-mixture distribution
$\Pr[\tilde{\Theta}_m \in S \mid A_m] = \mathbb{E}_{\Pi_m}\left[\Pr[\tilde{\Theta}_m \in S \mid \Pi_m(A_m)]\right]$.
In particular, any two coefficient tensors that differ only by a relabeling of neighborhood indices (i.e., $A_m'=\sigma(A_m)$
for some permutation $\sigma$) induce the same distribution over uploads, since $\Pi_m \circ \sigma$ is identically distributed to $\Pi_m$.
Thus, permutation primarily obscures the alignment between indices and concrete locations, weakening linkage-based inference.

\begin{table*}[t]
\caption{Performance of coefficient approximation across the three datasets (8000 nodes).}
\centering
\scriptsize 
\begin{tabular}{p{2.0cm} p{1.2cm}p{1.6cm}p{1.2cm} p{1.2cm}p{1.56cm}p{1.2cm} p{1.2cm}p{1.2cm}}
\toprule
\multicolumn{1}{l}{}& \multicolumn{3}{c}{Distance matrix $\tilde{\mathbf{A}}_m^{\textsf{dx}}$} & \multicolumn{3}{c}{Distance matrix $\tilde{\mathbf{A}}^d_{\mathcal{N}_m\times \mathcal{Y}}$} & \multicolumn{2}{c}{Utility loss matrix $\tilde{\mathbf{A}}^c_{\mathcal{N}_m\times \mathcal{Y}}$}\\
 \cline{2-9}
\textbf{Methods} & PL & RelErr & ViolRatio & PL & RelErr & ViolRatio & PL & RelErr  \\
\midrule
\multicolumn{9}{c }{Rome road map}\\
\hline
Low-Rank SVD & 0.60±0.06 & 0.07±0.00 & 0.14±0.01 & 0.58±0.15 & 0.04±0.01 & 0.21±0.01 & 1.09±0.35 & 0.04±0.00  \\
\midrule
\multicolumn{9}{c }{London road map}
\\
\hline
Low-Rank SVD & 0.93±0.07 & 0.07±0.00 & 0.14±0.01 & 0.68±0.11 & 0.04±0.00 & 0.23±0.01 & 0.79±0.26 & 0.04±0.00  \\
\midrule
\multicolumn{9}{c }{New York City road map}
\\
\hline
Low-Rank SVD & 0.86±0.09 & 0.08±0.00 & 0.08±0.01 & 0.67±0.16 & 0.06±0.00 & 0.14±0.03 & 0.73±0.10 & 0.03±0.00  \\
\bottomrule
\end{tabular}
\label{tab:exp:matrix_upload_8000_nodes}
\end{table*}

\begin{table*}[t]
\caption{Performance of coefficient approximation across the three datasets (10000 nodes).}
\centering
\scriptsize 
\begin{tabular}{p{2.0cm} p{1.2cm}p{1.6cm}p{1.2cm} p{1.2cm}p{1.56cm}p{1.2cm} p{1.2cm}p{1.2cm}}
\toprule
\multicolumn{1}{l}{}& \multicolumn{3}{c}{Distance matrix $\tilde{\mathbf{A}}_m^{\textsf{dx}}$} & \multicolumn{3}{c}{Distance matrix $\tilde{\mathbf{A}}^d_{\mathcal{N}_m\times \mathcal{Y}}$} & \multicolumn{2}{c}{Utility loss matrix $\tilde{\mathbf{A}}^c_{\mathcal{N}_m\times \mathcal{Y}}$}\\
 \cline{2-9}
\textbf{Methods} & PL & RelErr & ViolRatio & PL & RelErr & ViolRatio & PL & RelErr  \\
\midrule
\multicolumn{9}{c }{Rome road map}\\
\hline
Low-Rank SVD & 0.59±0.09 & 0.07±0.00 & 0.14±0.01 & 0.62±0.29 & 0.04±0.00 & 0.21±0.01 & 1.09±0.21 & 0.04±0.00  \\
\midrule
\multicolumn{9}{c }{London road map} \\
\hline
Low-Rank SVD & 0.89±0.10 & 0.07±0.00 & 0.13±0.01 & 0.56±0.04 & 0.04±0.00 & 0.20±0.01 & 0.80±0.27 & 0.04±0.00  \\
\midrule
\multicolumn{9}{c }{New York City road map} \\
\hline
Low-Rank SVD & 0.94±0.12 & 0.08±0.01 & 0.07±0.01 & 0.54±0.12 & 0.06±0.01 & 0.15±0.03 & 0.81±0.27 & 0.03±0.00  \\
\bottomrule
\end{tabular}
\label{tab:exp:matrix_upload_10000_nodes}
\end{table*}

\begin{table*}[t]
\caption{{\cam Memory cost.}}
\centering
\scriptsize 
\begin{tabular}{p{1.80cm} |  p{0.80cm} p{1.10cm} p{1.10cm}  p{1.10cm} | p{1.13cm} p{1.13cm} p{1.13cm} | p{1.22cm} p{1.22cm} p{1.22cm}} 
\toprule
\multicolumn{2}{c }{Number of records} & \multicolumn{3}{c }{2,000} & \multicolumn{3}{c }{4,000}& \multicolumn{3}{c }{6,000}\\
\midrule
\multicolumn{2}{c }{Privacy budget (km$^{-1}$)} & $\epsilon = 4.0$& $\epsilon = 7.0$& $\epsilon = 10.0$ & $\epsilon = 4.0$& $\epsilon = 7.0$& $\epsilon = 10.0$& $\epsilon = 4.0$& $\epsilon = 7.0$& $\epsilon = 10.0$\\
\midrule
\multicolumn{11}{c }{Rome road map}\\
\hline
Pre-defined& EM & 0.64 MB &  0.64 MB &  0.64 MB & 1.28 MB & 1.28 MB & 1.28 MB & 1.92 MB & 1.92 MB & 1.92 MB \\
\cline{2-2}
Noise Distribution & Laplace  & 0.96 MB & 0.96 MB & 0.96 MB & 1.92 MB & 1.92 MB & 1.92 MB & 2.88 MB & 2.88 MB & 2.88 MB \\
\cline{1-2}
Hybrid  & RMP & 0.96 MB & 0.96 MB & 0.96 MB & 1.92 MB & 1.92 MB & 1.92 MB & 2.88 MB & 2.88 MB & 2.88 MB \\
\cline{2-2}
Method &  COPT  & ------------- & ------------- & ------------- & ------------- & ------------- & ------------- & ------------- & ------------- & ------------- \\
\cline{1-2}
Optimization &  LP  & $>$25.6 TB & $>$25.6 TB & $>$25.6 TB & $>$204.7 TB & $>$204.7 TB & $>$204.7 TB & $>$691.1 TB & $>$691.1 TB & $>$691.1 TB \\
\cline{2-2}
Based & LP-A & 392 MB & 392 MB & 392 MB & 392 MB & 392 MB & 392 MB & 392 MB & 392 MB & 392 MB \\
\cline{2-2} 
Methods & PAnDA & 232.39 MB & 232.39 MB & 232.39 MB & 933.56 MB & 933.56 MB & 933.56 MB & 2313.32 MB & 2313.32 MB & 2313.32 MB \\
\hline
\rowcolor{lightgray!30}
\multicolumn{2}{c}{\textsc{DistA-LP}$^\dagger$} & 1.0375 MB & 7.7995 MB & 14.58 MB & 1.4009 MB & 6.6318 MB & 19.645 MB & 1.6471 MB & 8.4639 MB & 25.628 MB \\

\midrule
\multicolumn{11}{c }{London road map}\\
 \hline
Pre-defined& EM & 0.64 MB &  0.64 MB &  0.64 MB & 1.28 MB & 1.28 MB & 1.28 MB & 1.92 MB & 1.92 MB & 1.92 MB \\
\cline{2-2}
Noise Distribution & Laplace  & 0.96 MB & 0.96 MB & 0.96 MB & 1.92 MB & 1.92 MB & 1.92 MB & 2.88 MB & 2.88 MB & 2.88 MB  \\
\cline{1-2}
Hybrid  & RMP & 0.96 MB & 0.96 MB & 0.96 MB & 1.92 MB & 1.92 MB & 1.92 MB & 2.88 MB & 2.88 MB & 2.88 MB  \\
\cline{2-2}
Method &  COPT  & ------------- & ------------- & ------------- & ------------- & ------------- & ------------- & ------------- & ------------- & ------------- \\
\cline{1-2}
Optimization &  LP  & $>$25.6 TB & $>$25.6 TB & $>$25.6 TB & $>$204.7 TB & $>$204.7 TB & $>$204.7 TB & $>$691.1 TB & $>$691.1 TB & $>$691.1 TB \\
\cline{2-2}
Based & LP-A & 522.72 MB & 522.72 MB & 522.72 MB & 522.72 MB & 522.72 MB & 522.72 MB & 522.72 MB & 522.72 MB & 522.72 MB \\
\cline{2-2} 
 Methods & PAnDA & 401.42 MB & 401.42 MB & 401.42 MB & 923.15 MB & 923.15 MB & 923.15 MB & 2309.97 MB & 2309.97 MB & 2309.97 MB \\
\hline
\rowcolor{lightgray!30}
\multicolumn{2}{c}{\textsc{DistA-LP}$^\dagger$} & 1.7227 MB & 7.8758 MB & 22.355 MB & 2.3711 MB & 8.7055 MB & 22.77 MB & 2.7703 MB & 11.913 MB & 23.616 MB \\

\midrule
\multicolumn{11}{c }{New York City road map}\\
\hline
Pre-defined& EM & 0.64 MB &  0.64 MB &  0.64 MB & 1.28 MB & 1.28 MB & 1.28 MB & 1.92 MB & 1.92 MB & 1.92 MB \\
\cline{2-2}
Noise Distribution & Laplace  & 0.96 MB & 0.96 MB & 0.96 MB & 1.92 MB & 1.92 MB & 1.92 MB & 2.88 MB & 2.88 MB & 2.88 MB  \\
\cline{1-2}
Hybrid  & RMP & 0.96 MB & 0.96 MB & 0.96 MB & 1.92 MB & 1.92 MB & 1.92 MB & 2.88 MB & 2.88 MB & 2.88 MB \\
\cline{2-2}
Method &  COPT  & ------------- & ------------- & ------------- & ------------- & ------------- & ------------- & ------------- & ------------- & ------------- \\
\cline{1-2}
Optimization &  LP  &$>$25.6 TB & $>$25.6 TB & $>$25.6 TB & $>$204.7 TB & $>$204.7 TB & $>$204.7 TB & $>$691.1 TB & $>$691.1 TB & $>$691.1 TB \\
\cline{2-2}
Based & LP-A & 184.95 MB & 184.95 MB & 184.95 MB & 184.95 MB & 184.95 MB & 184.95 MB & 184.95 MB & 184.95 MB & 184.95 MB \\
\cline{2-2} 
Methods & PAnDA & 229.57 MB & 229.57 MB & 229.57 MB & 922.61 MB & 922.61 MB & 922.61 MB & 2312.96 MB & 2312.96 MB & 2312.96 MB\\
\hline
\rowcolor{lightgray!30}
\multicolumn{2}{c}{\textsc{DistA-LP}$^\dagger$} & 1.5049 MB & 4.0234 MB & 16.244 MB & 1.6511 MB & 5.335 MB & 20.137 MB & 6.4273 MB & 11.796 MB & 23.774 MB \\
\bottomrule
\end{tabular}
\label{tab:memory_cost}
\end{table*}

\begin{table*}[t]
\caption{{\cam Communication cost.}}
\centering
\scriptsize 
\begin{tabular}{
p{1.6cm}
>{\centering\arraybackslash}p{1.7cm}
>{\centering\arraybackslash}p{2.3cm}
>{\centering\arraybackslash}p{2.3cm}
>{\centering\arraybackslash}p{1.7cm}
>{\centering\arraybackslash}p{1.7cm}
}
\toprule
\textbf{Surrogate} &  
Upload to Server (Bytes) & 
Download from Server (Bytes) & 
Upload to Downstream (Bytes) & 
Total End-to-End (Bytes) &  
Total End-to-End (kb) \\
\midrule
Raw Matrices & 9600 & 3200 & 8 & 12808 & 12.5078 \\
\hline
2D Gaussian & 304 & 3200 & 8 & 3512 & 3.4297 \\
\hline
2D Polynomial & 400 & 3200 & 8 & 3608 & 3.5234 \\
\hline
RBF & 760 & 3200 & 8 & 3968 & 3.875 \\
\hline
Low-Rank SVD & 1640 & 3200 & 8 & 4848 & 4.7344 \\
\bottomrule
\end{tabular}
\label{tab:communication_cost}
\end{table*}

\begin{figure*}[t]
\centering
\begin{minipage}{0.95\textwidth}
\centering
\includegraphics[width=0.29\textwidth]{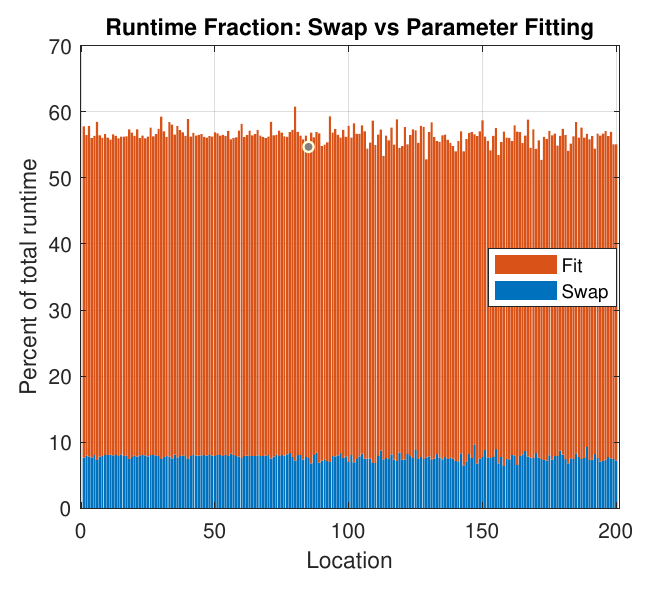}
\hspace{0.0in}
  \subfigure[Gaussian surrogate]{
\includegraphics[width=0.30\textwidth]{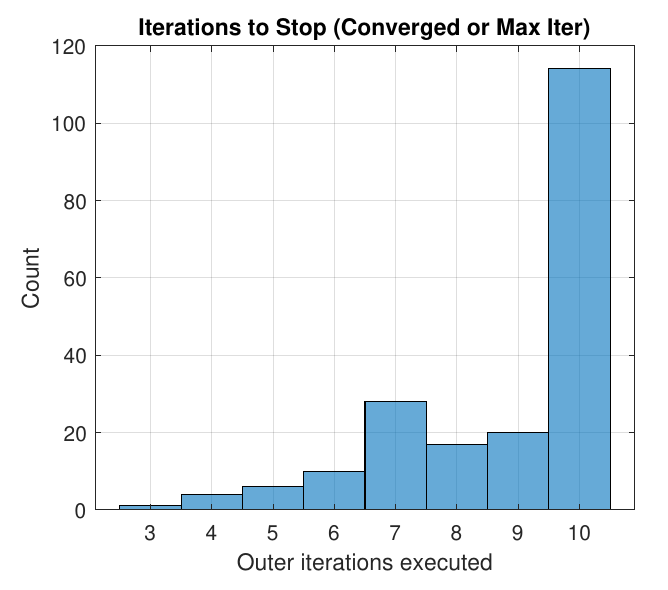}}
\hspace{0.0in}
\includegraphics[width=0.30\textwidth]{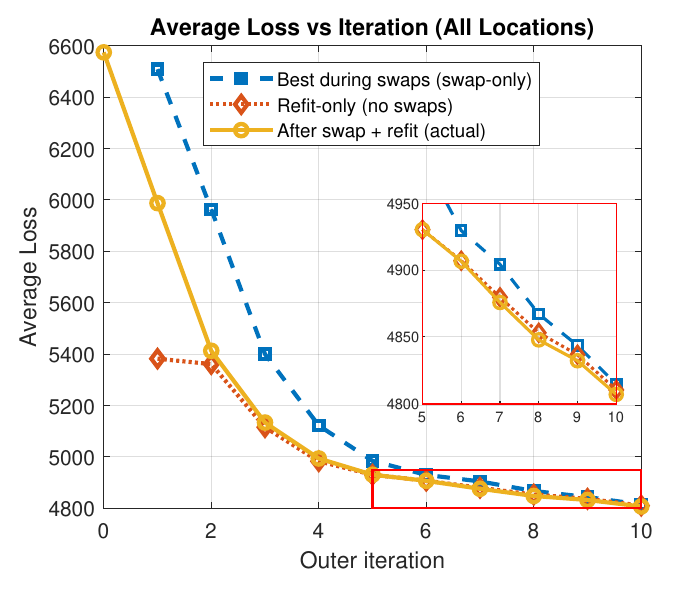}
\end{minipage}

\begin{minipage}{0.95\textwidth}
\centering
\includegraphics[width=0.30\textwidth]{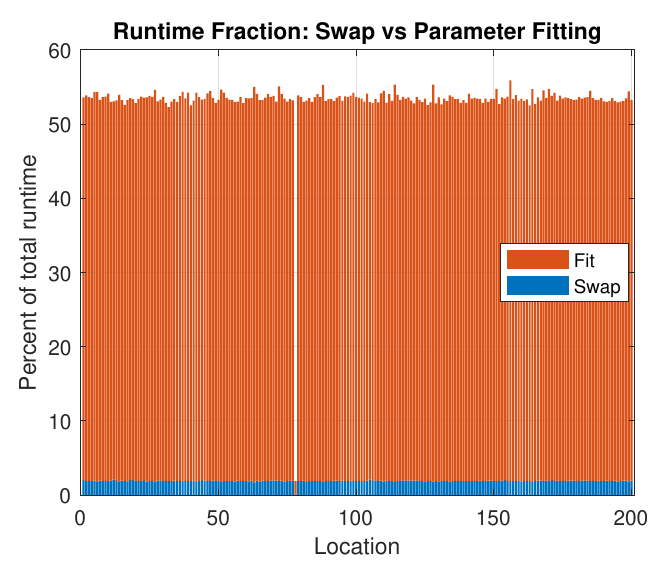}
\hspace{0.0in}
  \subfigure[Polynomial surrogate]{
\includegraphics[width=0.30\textwidth]{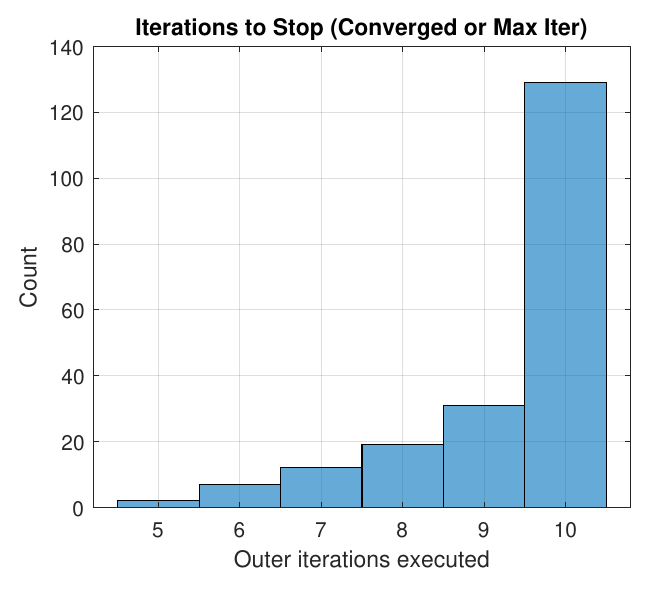}}
\hspace{0.0in}
\includegraphics[width=0.30\textwidth]{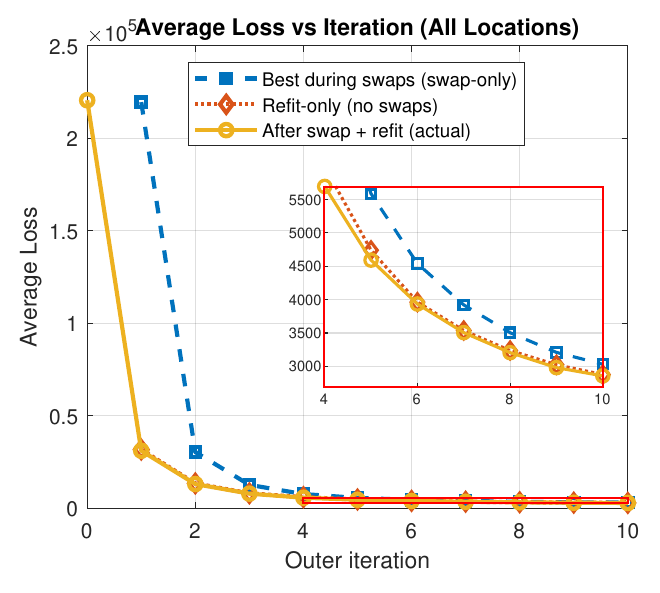}
\end{minipage}

\begin{minipage}{0.95\textwidth}
\centering
\includegraphics[width=0.30\textwidth]{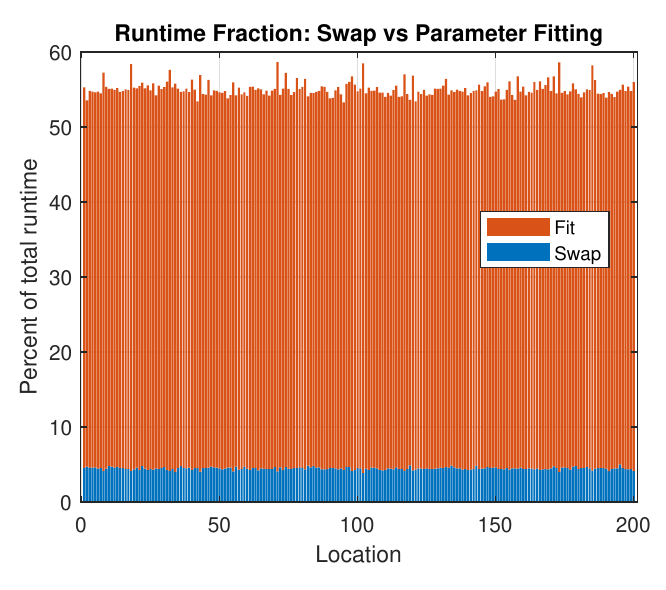}
\hspace{0.0in}
  \subfigure[RBF surrogate]{
\includegraphics[width=0.29\textwidth]{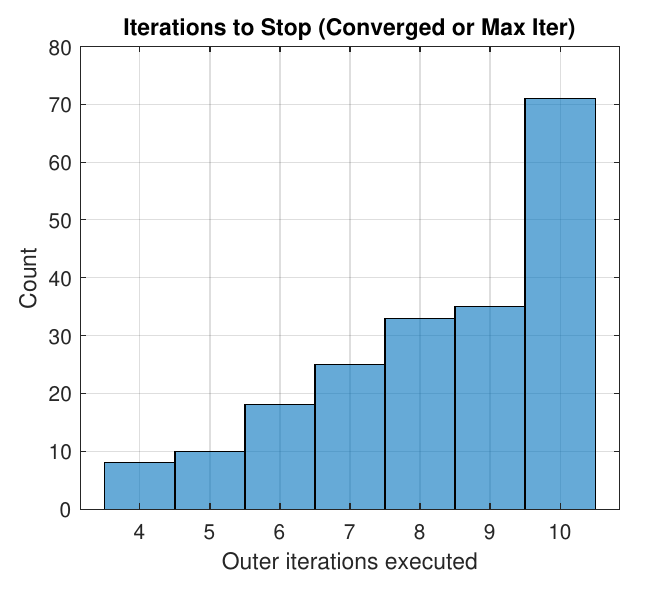}}
\hspace{0.0in}
\includegraphics[width=0.30\textwidth]{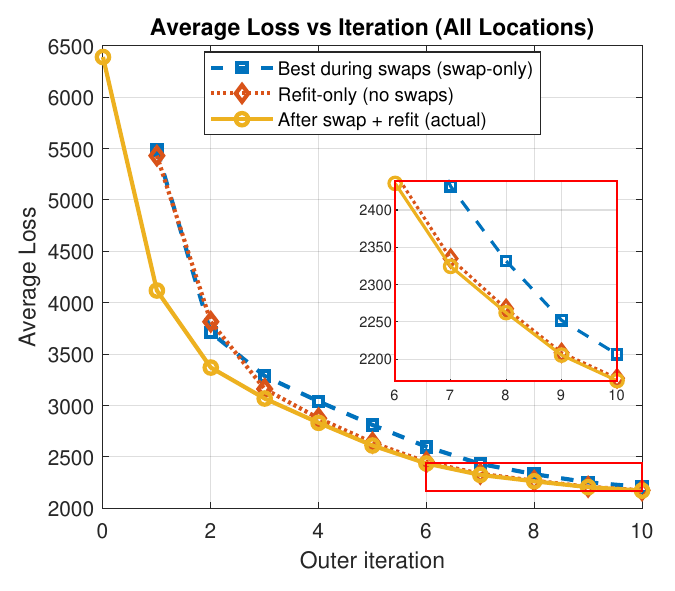}
\end{minipage}
\caption{Surrogates approximation performance with a maximum of 10 optimization iterations, showing the runtime split between parameter fitting and swap search; convergence/stopping iterations; and the decrease in average loss under swap-only, refit-only, and combined swap–refit updates. }
\label{fig:surrogate_iteration_10}
\vspace{-0.00in}
\end{figure*}

\begin{figure*}[t]
\centering
\begin{minipage}{0.95\textwidth}
\centering
\includegraphics[width=0.30\textwidth]{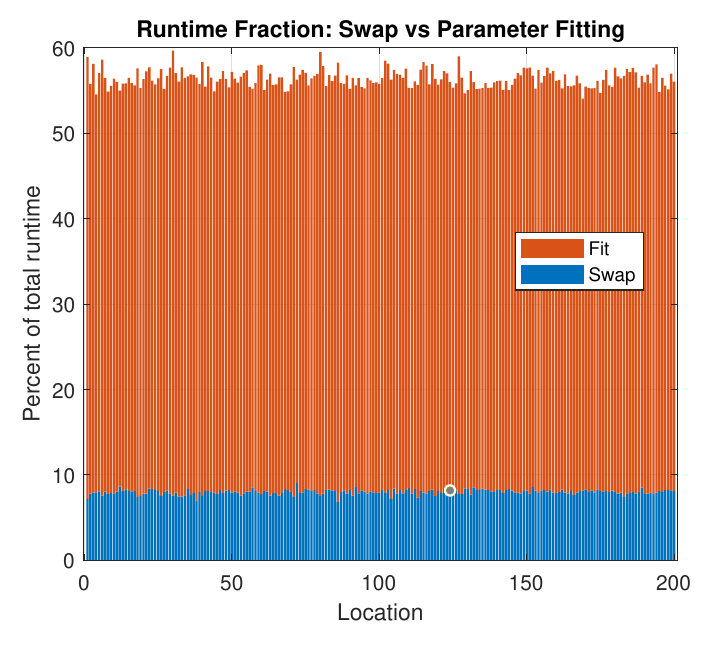}
\hspace{0.00in}
  \subfigure[Gaussian surrogate]{
\includegraphics[width=0.30\textwidth]{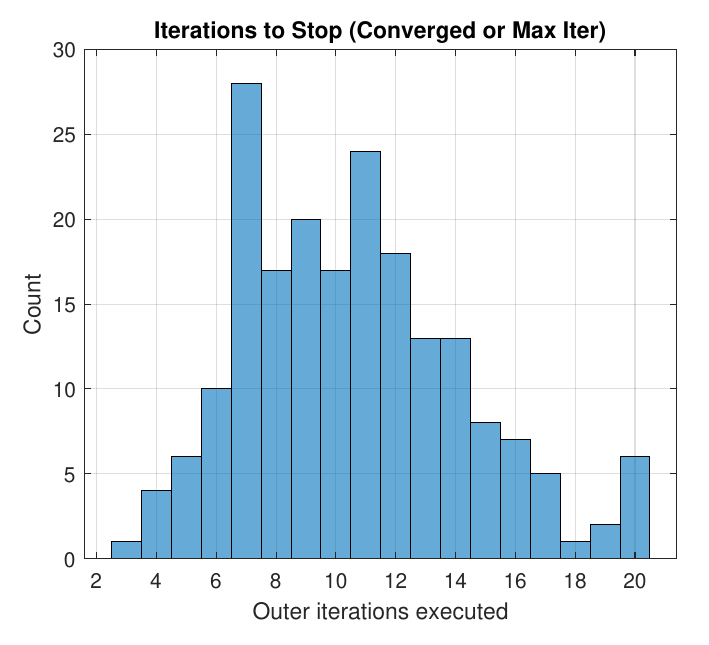}}
\hspace{0.00in}
\includegraphics[width=0.31\textwidth]{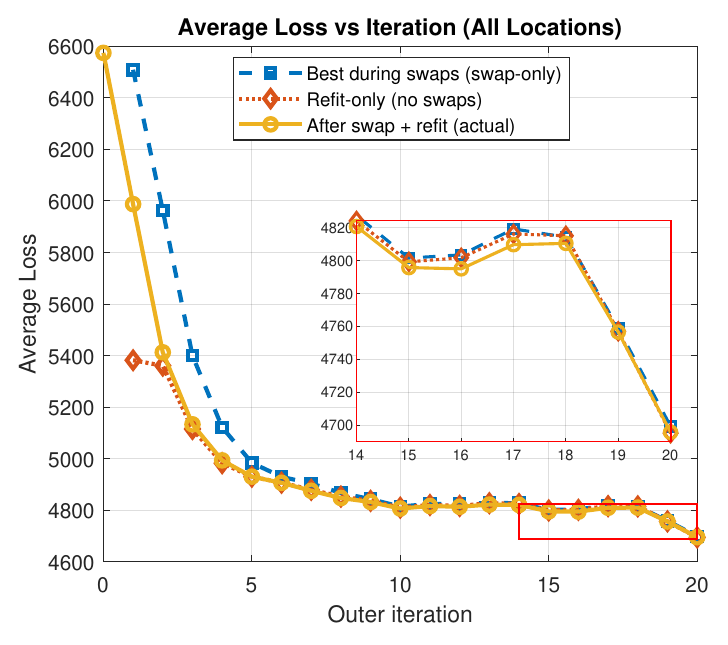}
\end{minipage}
\end{figure*}

\begin{figure*}[t]
\centering
\begin{minipage}{0.95\textwidth}
\centering
\includegraphics[width=0.31\textwidth]{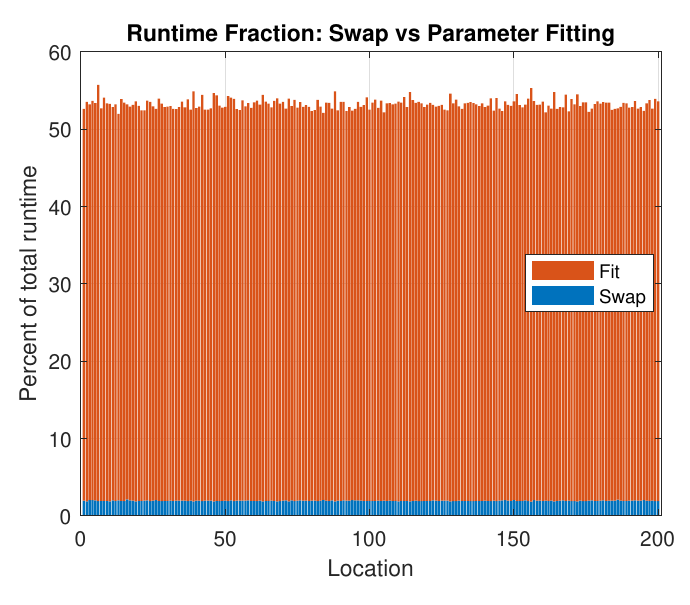}
\hspace{0.00in}
  \subfigure[Polynomial surrogate]{
\includegraphics[width=0.30\textwidth]{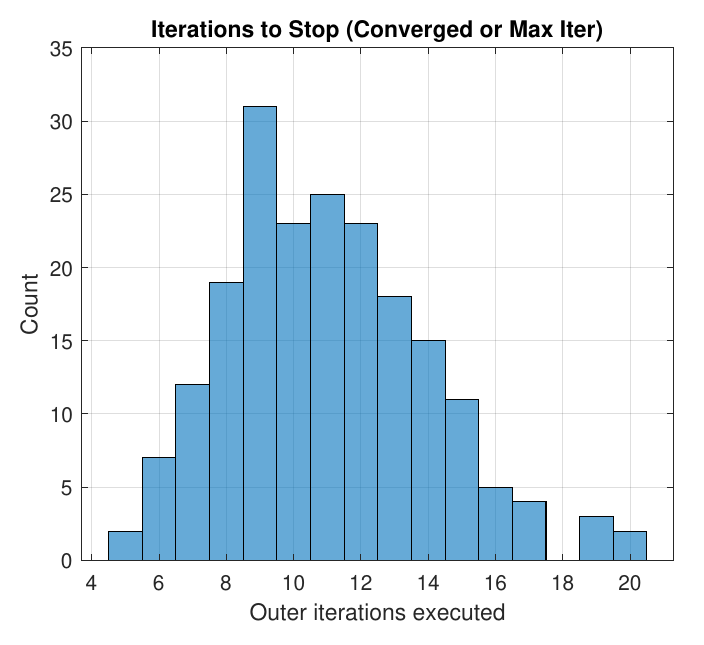}}
\hspace{0.00in}
\includegraphics[width=0.30\textwidth]{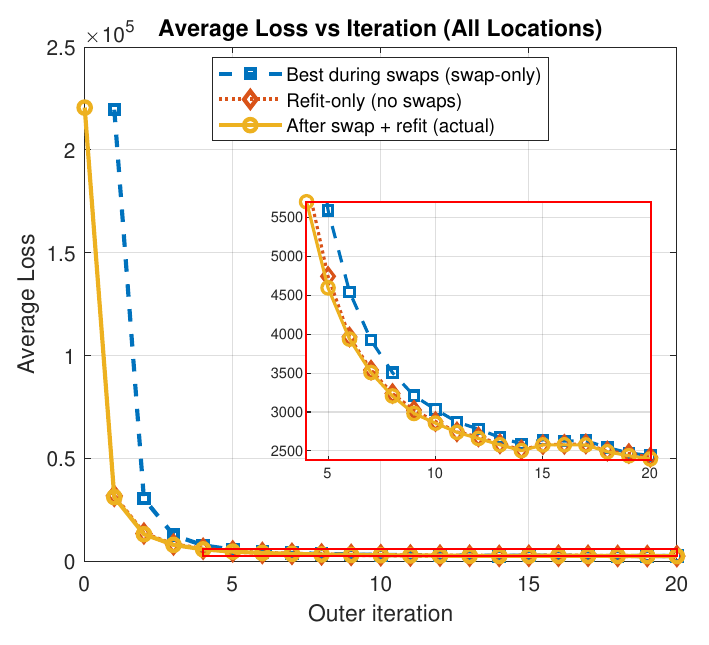}
\end{minipage}

\begin{minipage}{0.95\textwidth}
\centering
\includegraphics[width=0.32\textwidth]{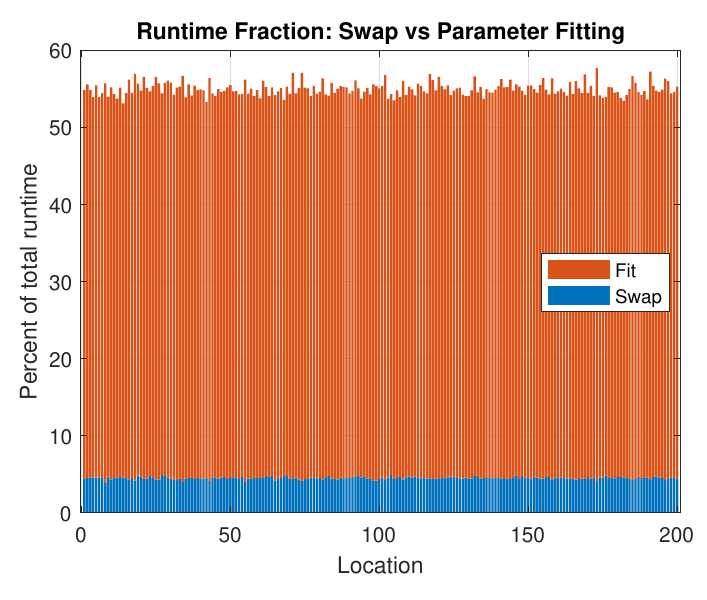}
\hspace{0.00in}
  \subfigure[RBF surrogate]{
\includegraphics[width=0.30\textwidth]{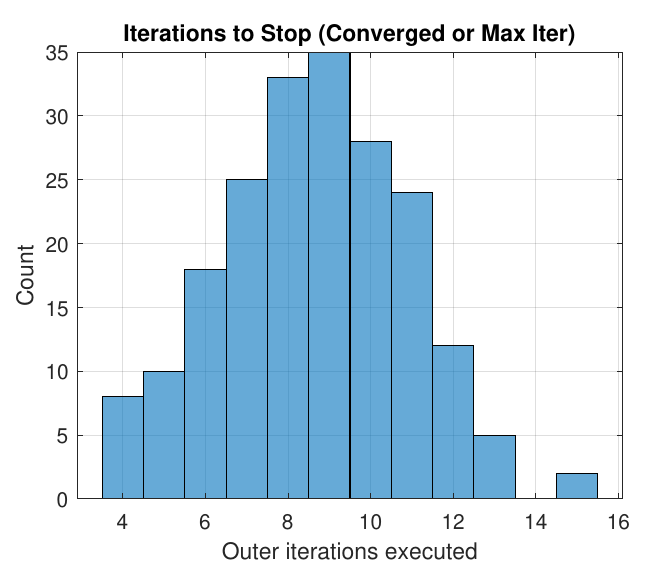}}
\hspace{0.00in}
\includegraphics[width=0.30\textwidth]{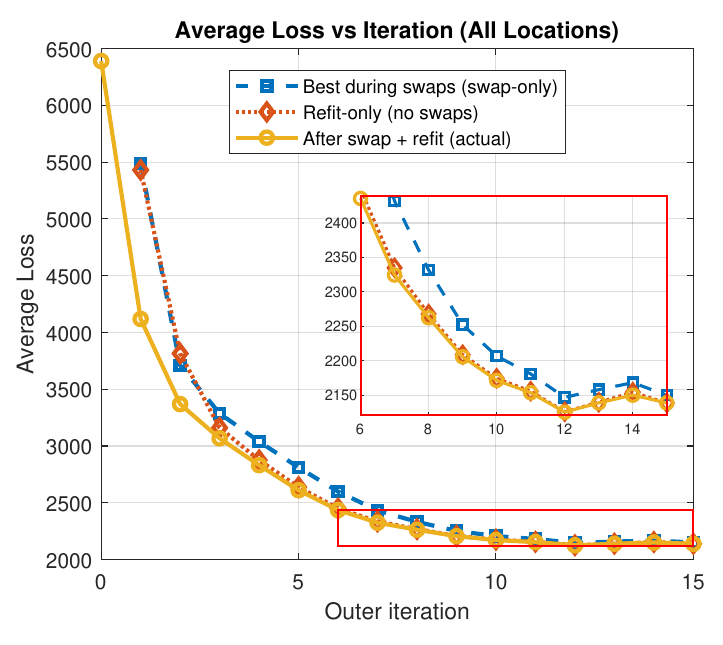}
\end{minipage}
\caption {Surrogate approximation performance with a maximum of 20 optimization iterations, showing the runtime fraction between the fitting and swap search; convergence/stopping-iteration distribution; and the average loss decay across iterations for swap-only, refit-only, and combined swap–refit updates.}
\label{fig:surrogate_iteration_20}
\vspace{-0.00in}
\end{figure*}

\begin{figure*}[t]
\centering
\begin{minipage}{0.9\textwidth}
\centering
  \subfigure{
\includegraphics[width=0.47\textwidth]{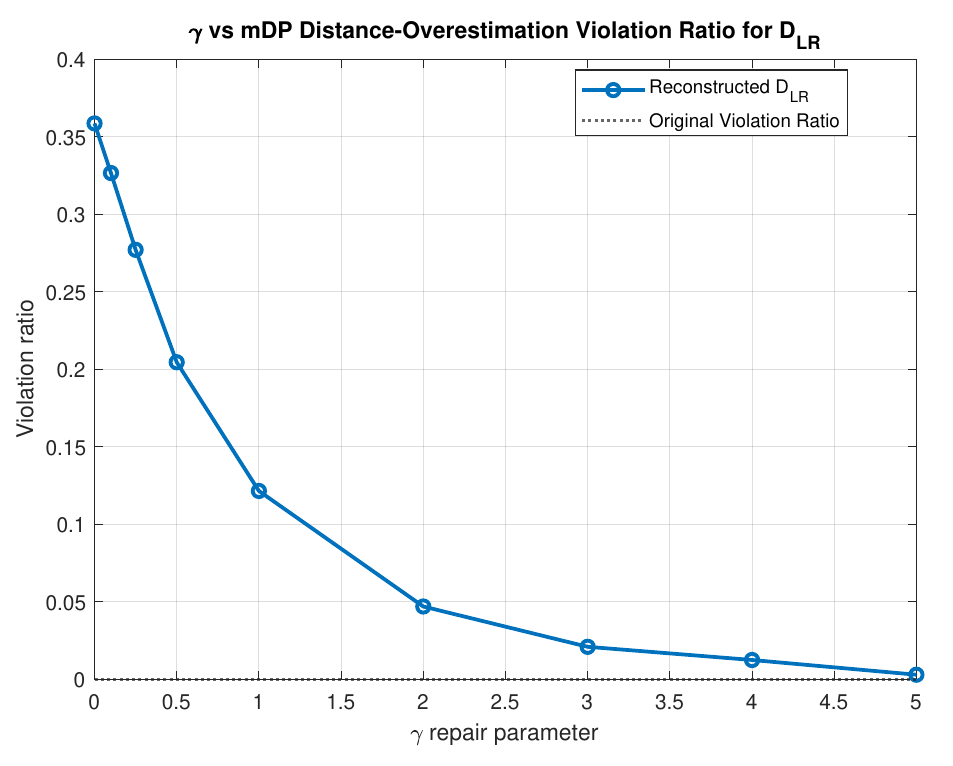}}
\hspace{0.02in}
 \subfigure{
\includegraphics[width=0.48\textwidth]{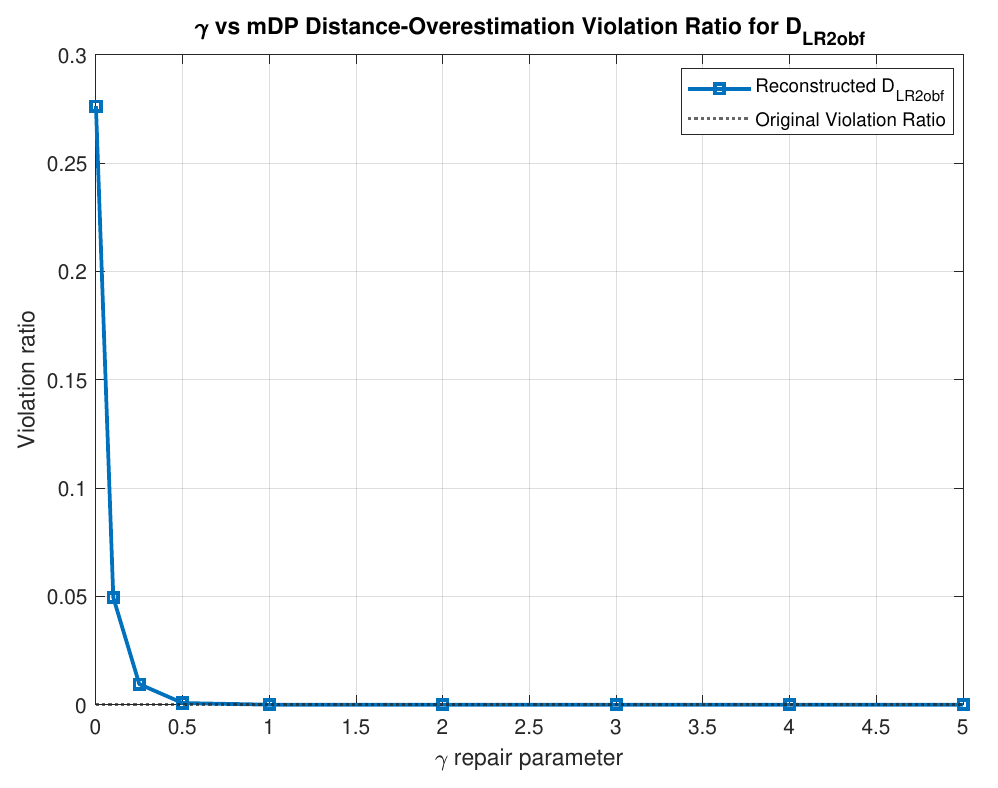}}
\caption{{\cam$\gamma$ vs mDP distance-overestimation violation ratio for both intra-neighborhood distance block and cross-domain distance block. }}
\label{fig:gamma_vs_mDP_violation}
\end{minipage}
\vspace{-0.00in}
\end{figure*}

\clearpage
\makeatletter
\global\@colroom=\@colht
\global\vsize=\@colht
\makeatother
\raggedbottom

\appendices
\section{Artifact Appendix}

\subsection{Description \& Requirements}

\subsubsection{How to access}
The artifact is available at \url{https://zenodo.org/records/21782566}. It contains the MATLAB implementation, experiment inputs, result directories, and documentation needed to evaluate DISTA-LP and its baselines. The primary entry point is
\texttt{Bendersdecomposition/main.m}; \texttt{main\_surrogate.m} evaluates coefficient approximation separately, and baseline entry points are under \texttt{Bendersdecomposition/baseline/}.

\subsubsection{Hardware dependencies}
The artifact was evaluated on a workstation with 32~GB RAM. We recommend a multicore CPU, at least 32~GB RAM, and 5--10~GB free storage. A GPU is not required. Begin with a 2,000-node experiment; centralized LP, COPT and larger PAnDA settings can require substantially more memory and time.

\subsubsection{Software dependencies}
The main framework requires MATLAB R2024b or later and Optimization Toolbox. Statistics and Machine Learning Toolbox is required by statistical and sampling functions. Parallel Computing Toolbox is recommended for parallel subproblems. Baselines were checked with MATLAB R2025b; baseline functions using \texttt{heaviside} require Symbolic Math Toolbox. The code supports Windows, macOS, and Linux when MATLAB and the required toolboxes are available.

\subsubsection{Benchmarks}
The supplied experiments use road-network data for Rome, London, and New York City (NYC), with 2,000, 4,000, or 6,000 selected records, privacy budgets $\epsilon\in\{4,7,10\}\,\mathrm{km}^{-1}$, ten users, and dataset-dependent repetitions:

\begin{center}
\small 
\begin{tabular}{lr}
\toprule
Setting & Repetitions \\
\midrule
Rome, 2,000 records & 5 \\
Rome, 4,000/6,000 records & 4 \\
London, all sizes & 6 \\
NYC, all sizes & 4 \\
\bottomrule
\end{tabular}
\end{center}

The evaluated baselines are Exponential Mechanism (EM), Bayesian remapping (RMP), coarse-grid LP approximation (LP-A), Planar Laplace, PAnDA, COPT and the original centralized metric-DP LP. The full centralized LP and COPT are infeasible at the evaluated domain sizes and are reported as unavailable.

\subsection{Artifact Installation \& Configuration}

Download the zip folder and enter the framework directory (Bendersdecomposition folder):
Start MATLAB in this directory. The main runner adds the required class and function directories to the MATLAB path.

Before a main-framework run, select a consistent dataset, target-region block, node count, uploaded-budget matrix, budget assigned, and obfuscated-location fraction.

\paragraph{Dataset} In the Bendersdecomposition folder, \texttt{func/read\_files/readCityMapInfo.m}, leave exactly one of the dataset (Rome, London, or NYC) loading blocks active depending on the dataset of consideration. From lines 14-17 for the Rome dataset, lines 20-23 for the NYC dataset, and lines 26-29 for the London dataset.

\paragraph{Target regions} In \texttt{main.m}, activate the \textit{Regions} block matching the city and record count. The region count $R=\textit{numel(Regions)}$ also determines the number of repetitions and columns in the uploaded-budget matrix. From lines 124-163, based on the dataset and node count.

\paragraph{Node count (nr\_loc\_selected) and uploaded budget} Set, for example,
\textit{env\_parameters.nr\_loc\_selected = 2000; \, budget\_assigned = budget\_assigned\_rome\_svd.}

The selected \textit{budget\_assigned} must match the city, node count, and surrogate. The main runner uses
\textit{
epsilon\_value = [10, 7, 4];
}
so matrix rows 1--3 correspond to $\epsilon=10,7,4$, respectively. With \textit{approx\_matrix=1}, the budget remaining for optimization is
\textit{
env\_parameters.EPSILON = EPSILON - budget\_assigned(ep\_idx,r);
}
Setting \textit{approx\_matrix=0} assigns the full $\epsilon$ to optimization and disables approximated-matrix privacy accounting. The uploaded budget is between lines 221-263, budget assigned at line 363, and node count at line 383.

\paragraph{Candidate outputs} In \texttt{classes/User/User.m}, set the fraction used by \textit{obf\_identifier} at line 201:
\begin{center}
\small 
\begin{tabular}{rrr}
\toprule
Records & Fraction & Count \\
\midrule
2,000 & 0.01 & 20 \\
4,000 & 0.005 & 20 \\
6,000 & 0.0033 & $\approx20$ \\
\bottomrule
\end{tabular}
\end{center}
The implementation retains the nearest candidates by Haversine distance when more locations fall inside the obfuscation range.

\paragraph{Consistency check.} Confirm that the dataset loader, \textit{Regions}, \textit{nr\_loc\_selected}, uploaded-budget definition and assignment, and candidate-output fraction all describe the same experiment. Use names and comments rather than relying only on line numbers, which may change.

\subsection{Experiment Workflow}

The main workflow loads a road network and sampled user inputs, constructs private surrogate representations of LP coefficients, accounts for privacy budget consumed during parameter upload, solves the distributed problem through Benders decomposition, and aggregates the output metrics. Run it from \texttt{Bendersdecomposition/}:
\begin{lstlisting}
run('main.m')
\end{lstlisting}
It reports utility loss, mDP violation ratio, computation time, privacy-loss statistics for uploaded parameters, and surrogate approximation errors.

The baseline workflow is independent. From
\texttt{Bendersdecomposition/baseline/}, edit the configuration near the top of each runner:
\textit{city = 'london';} \textit{node\_count = 2000;} \textit{epsilons = [4, 7, 10];} \textit{user\_ids = 1:10;}

Then run:
\begin{lstlisting}[keepspaces=true]
run('EM_EMBR.m')          % EM, RMP, LP-A
run('laplace_baseline.m') 
run('PAnDA_baseline.m')   % PAnDA and LB
run('COPT_baseline.m')
run('LP_baseline.m')
\end{lstlisting}
The baseline runners print tables to the MATLAB Command Window and do not create new MAT or CSV files. Their displayed epsilon order is $[4,7,10]$, unlike the main runner's internal budget-row order $[10,7,4]$.

\subsection{Major Claims}

\begin{itemize}
  \item \textbf{(C1) Privacy:} DISTA-LP enforces the target metric differential privacy constraints with near-zero empirical violation ratios across the evaluated cities and privacy budgets. Validate with E1 and compare with E2.
  \item \textbf{(C2) Utility:} DISTA-LP improves utility over predefined mechanisms and the PAnDA hybrid while remaining above the lower-bound solution. Validate with E1 and E2.
  \item \textbf{(C3) Scalability:} Benders decomposition makes fine-grained road-network experiments tractable where full centralized LP/COPT runs are unavailable or exceed the 1,800-second limit. Validate with E1 and E2 across node counts.
  \item \textbf{(C4) Private coefficient upload:} Compact surrogates and calibrated parameter perturbation allow coefficient information to be uploaded with measurable privacy loss and approximation error. Validate with E1.
\end{itemize}

\subsection{Evaluation}

\subsubsection{Experiment (E1): DISTA-LP main framework}
\textbf{Estimated effort:} 10--20 human-minutes for setup; compute time is machine- and configuration-dependent. Use 2,000 records for the initial functional test.

\textbf{Preparation.} Select one city in \texttt{readCityMapInfo.m}; activate its matching \textit{Regions, uploaded-budget matrix}, and \textit{budget-assigned} in \texttt{main.m}; set \textit{nr\_loc\_selected}; set \textit{approx\_matrix=1}; and set the matching fraction in \texttt{User.m}. For a Rome 2,000-record test, use five regions, \textit{budget\_assigned\_rome\_svd}, and fraction 0.01.

\textbf{Execution.} From \texttt{Bendersdecomposition/}, run \texttt{main.m}. Allow all active regions and epsilon values to finish. Repeat for each city and record count required by the paper result being checked.

\textbf{Results.} Record utility loss, violation ratio, computation time, upload privacy loss, and surrogate relative/violation errors. Interpret uploaded-budget rows in $[10,7,4]$ order. Confirm successful optimization and inspect any recorded error message before using a result. E1 evaluates C1--C4. Utility loss is divided by 10,000 before printing. Thus
\begin{lstlisting}[language={}]
34708.02 +/- 2414.59
becomes 3.470802 +/- 0.241459,
or 3.47 +/- 0.24 after rounding.
\end{lstlisting}

\subsubsection{Experiment (E2): Baselines}
\textbf{Estimated effort:} 10 human-minutes for configuration; compute time varies widely. LP-A, PAnDA, COPT, and centralized LP can be long-running or resource-limited at larger sizes.

\textbf{Preparation.} From the baseline directory, set \textit{city, node\_count, epsilons=[4,7,10]}, and \textit{user\_ids=1:10}. Both \texttt{COPT\_baseline.m} and \texttt{LP\_baseline.m} use Optimization Toolbox's \texttt{linprog}, set \textit{max\_time\_seconds=1800}. A one-user run may be used only as a functional check; restore ten users for paper reproduction.

\textbf{Execution.} Run \texttt{EM\_EMBR.m}, \texttt{laplace\_baseline.m}, \texttt{PAnDA\_baseline.m}, \texttt{COPT\_baseline.m} and \texttt{LP\_baseline.m}. Copy the printed utility-loss, violation-ratio, and computation-time tables. Repeat for the remaining city/size combinations.

\textbf{Results.} For each repetition, the scripts sum utility loss across ten users and average runtime and violation ratio across users. Then report the mean and sample standard deviation across repetitions.
EM, RMP, and Planar Laplace display zero violation by construction/post-processing; LP-A reports an empirical distance-approximation violation ratio. PAnDA prints separate PAnDA and companion lower-bound rows. E2 evaluates C1--C3.

\subsubsection{Experiment (E3): Surrogate approximation.}
\textbf{Estimated effort:} 10--20 human-minutes per city; compute time depends on configuration. This experiment reproduces the RelErr and ViolRatio columns of Table II for Gaussian, polynomial, RBF, and low-rank SVD surrogates.

\textbf{Preparation.} In \texttt{\small{func/read\_files/readCityMapInfo.m}}, activate exactly one dataset-loading block: Rome (lines 14--17), NYC (20--23), or London (26--29). In \path{classes/User/User.m}, uncomment fitting calls at lines 113--115, noisy-parameter calls at 118--120, and reconstruction calls at 123--125; keep the corresponding SVD calls enabled. Set the \textit{obf\_identifier} fraction at approximately line 201 to 0.01. In \path{main_surrogate.m}, activate the corresponding 2,000-node region block: Rome (124--128), NYC (143--146), or London (156--161). Ensure the effective \textit{nr\_loc\_selected} is 2000, with \textit{NR\_LOC=1} if the later \textit{NR\_LOC*2000} assignment remains active.

\textbf{Execution.} From \texttt{Bendersdecomposition/}, run \texttt{main\_surrogate.m}. Allow all regions and users to finish. Preserve \texttt{Summary\_Gaussian.csv}, \texttt{Summary\_Polynomial.csv}, \texttt{Summary\_RBF.csv}, and \texttt{Summary\_SVD.csv} before repeating for the other cities.

\textbf{Results.} Compare generated summaries with those in \path{rome_relErr_violRatio/}, \path{nyc_relErr_violRatio/}, and \path{london_relErr_violRatio/}, relative to \path{Bendersdecomposition/}. Match \texttt{Rel\_DLR}/\texttt{Viol\_DLR}, \texttt{Rel\_DLR2}/\texttt{Viol\_DLR2}, and \texttt{Rel\_CRL} to the coefficient blocks in Table II. Surrogate ViolRatio differs from the final mechanism's metric-DP violation ratio. Compare original and reconstructed coefficients before \textit{User.apply\_svd()} overwrites the working matrices; otherwise artificial zero errors can result. The dedicated surrogate runner avoids this replacement. A modified optimization runner must retain original matrices or use the preserved \texttt{*\_test} copies. E3 evaluates C4.

\subsubsection{Experiment (E4): End-to-end result check}
\textbf{Estimated effort:} 15 human-minutes after experiments complete.

\textbf{Preparation.} Create one table per city and node count with columns $\epsilon=4,7,10$ and rows for DISTA-LP and every available baseline.

\textbf{Execution.} Transfer E1 and E2 values without changing their epsilon ordering. Compare utility loss, violation ratio, and runtime with the corresponding paper table or figure.

\textbf{Results.} Reproduced results should preserve the paper's qualitative ranking. As a supplied reference point, the main README reports Rome, 4,000 records, $\epsilon=4$: DISTA-LP utility loss $2.88\pm0.21$ in 10,000-meter units, compared with approximately $7.16\pm2.53$ for EM/Laplace, $6.76\pm2.55$ for LP, and a lower bound near 1.7. Treat this value only as a cross-check; the paper and generated outputs are authoritative.

\subsection{Customization}

The artifact can be customized by selecting another supported city, record count, epsilon, user subset, or surrogate family. Root-level scripts test Gaussian, polynomial, RBF, and low-rank SVD surrogates. When adding a new city or size, supply compatible road-network CSVs, sampled MAT inputs, target-region definitions, graph objects where required by baselines, and a budget matrix with one column per repetition. Preserve the epsilon-to-row mapping, document all parameter changes and use separate output directories to avoid overwriting reference runs.

\subsection{Notes}

\begin{itemize}
  \item Baseline output is currently \emph{mean $\pm$ sample standard deviation}, not a 95\% confidence interval. If the paper uses 95\% normal-approximation intervals, compute $1.96s/\sqrt{R}$ and label the statistic consistently.
  \item Divide both the utility-loss mean and its deviation by 10,000.
  \item Primary baseline runners do not read or overwrite historical MAT files. Auxiliary files containing \texttt{memory} can write profiling checkpoints. Main and surrogate runnerscan overwrite summaries or sampled inputs; preserve eachrun before reconfiguration
  \item \texttt{main\_BD.m} is legacy/development code and is not a primary baseline entry point.
\end{itemize}



\end{document}